%
%
%
%
%
\RequirePackage{fix-cm}
\documentclass[aA4paper,referee,envcountsect]{svjour3}
\smartqed  
\usepackage[dvips]{graphicx}
\usepackage{cite}
\usepackage{setspace}
\usepackage[titletoc,toc,title]{appendix}
\usepackage{geometry}
\geometry{a4paper,left=2.5cm,right=2.5cm,top=2cm,bottom=1.5cm}
%
%
\usepackage{latexsym,amsfonts,amsmath,epsfig,amssymb,mathrsfs,bbding,color,subfigure}
\usepackage[misc]{ifsym}
%


\newcommand\blfootnote[1]{%
    \begingroup%
\let\thefootnote\relax\footnotetext{\hspace{-15pt}#1}%
\endgroup}%
%
\journalname{Journal of Optimization Theory and Applications}
\begin{document}
\doublespacing
\title{Robust Portfolio Optimization with Multi-Factor Stochastic Volatility\blfootnote{Communicated by Kok Lay Teo
\\\hrule height 0.4pt width 3.8cm}
}

\author{
Ben-Zhang Yang \and  Xiaoping Lu \and Guiyuan Ma \and Song-Ping Zhu}
\institute{
Ben-Zhang Yang \at Department of Mathematics, Sichuan University, Chengdu,  China\\
\email{yangbenzhang@126.com} \and
Xiaoping Lu (\Letter) \at
              School of Mathematics and Applied Statistics, University of Wollongong, Australia\\
\email{xplu@uow.edu.au} \and
Guiyuan Ma \at Department of Statistics, The Chinese University of Hong Kong, Hong Kong,  China\\
\email{maguiyuan@foxmail.com} \and
Song-Ping Zhu  \at
              School of Mathematics and Applied Statistics, University of Wollongong,  Australia\\
\email{spz@uow.edu.au}
}
\date{Received: date / Accepted: date}
\maketitle
\begin{abstract}
This paper studies a robust portfolio optimization problem under a multi-factor volatility model. We derive optimal strategies analytically under the worst-case scenario with or without derivative trading in complete and incomplete markets, and for assets with jump risk. We extend our study to the case with correlated volatility factors and propose an analytical approximation for the robust optimal strategy. To illustrate the effects of ambiguity, we compare our optimal robust strategy with the strategies that ignore the information of uncertainty, and provide the welfare analysis. We also discuss how derivative trading affects  the optimal strategies.
Finally, numerical experiments are provided to demonstrate the behavior of the optimal strategy and the utility loss.
\end{abstract}
\keywords{Robust portfolio selection \and Multi-factor volatility \and Jump risks \and  Non-affine stochastic volatility \and  Ambiguity effect}
\subclass{91B28 \and 60H30 \and 91C47 \and 91B70 }

\section{Introduction}
\label{intro}
\indent\par\setcounter{equation}{0}
\setcounter{lemma}{0}
\setcounter{theorem}{0}
\setcounter{remark}{0}
\setcounter{proposition}{0}
\setcounter{definition}{0}

During the past decades, various stochastic volatility models have been proposed to explain volatility smile, to address term structure effects, and to describe more complex financial markets (for example, \cite{Cui17, Cui19, Liu03, Ma2019, Nandi98, Pan02, Yang17, Yang18, Yue18,Ma19}). Stochastic volatility models also address term structure effects by modeling the mean reversion in variance dynamics. The existing literature includes, not only one-factor stochastic volatility model, such as \cite{Stein91,Heston93,Schobel98}, but also  multi-factor stochastic volatility model, such as \cite{Christoffersen09,Fonseca08,Li10}.

Optimal portfolio selection problems with multi-factor volatility have attracted a lot of attention in the recent literature{, with the view of multi-factor models potentially capturing market volatility better than classical single-factor models}. Escobar et al. \cite{Escobar17} considered an optimal investment problem under a multi-factor stochastic volatility. Assuming that the eigenvalues of the covariance matrix of asset returns  follow independent square-root stochastic processes, they derived the optimal investment strategies in closed form. In practice, investors quite often face model uncertainty about the probability distribution of the dynamic process  \cite{Faidi19,Sun19,Yuen12}. As a result, an investor may consider a robust alternative model for the stock and its volatilities to avoid miss-specification, when making investment decisions \cite{ait2019robust,ismail2019robust}. The investor should also identify the worst-case measure and specify the optimal portfolio under the worst-case scenario.  Bergen et al. \cite{Bergen18} studied a robust multivariate portfolio selection problem with stochastic covariance in the presence of ambiguity and provided the optimal multivariate intertemporal portfolio.


{ In this paper, we provide a study of the robust portfolio optimization problem with multi-factor stochastic volatility. It is worth pointing out that our study is fundamentally different from the works in\cite{Bergen18,Escobar17}, where there are multiple assets, each with an independent  single-factor stochastic volatility.
We consider only one risky asset, but with a multi-factor volatility structure whose components may be correlated. In addition, we also consider the robust portfolio selection problem in the presence of jump risk, which,  to the best of our knowledge, has not been studied in the literature of robust optimization under multi-factor volatility.}

In practice, the uncertainty of market price for volatility risk affects investment decisions. Liu \& Pan \cite{Liu03} and Larsen \& Munk \cite{Larsen12} explored the effect of such uncertainty { with only one single-factor volatility. Bergen et al. \cite{Bergen18} considered the ambiguity effect of multiple assets, each with one-factor volatility.  Our analysis incorporates the multi-factor model introduced by Christoffersen et al. \cite{Christoffersen09} with} the assumption that there exists some ambiguity in an investor's mind in terms of the asset dynamics and its volatilities.

A highlight of this paper is our study of the impact of the correlation between volatility factors in the context of robust portfolio optimization. The correlation is considered { being able} to capture the evolution of correlation between asset returns or multivariate volatilities \cite{Burn98,Grzelak11,Grzelak12}. Unlike the independent volatilities case, the non-affine structure of volatility with correlated factors rules out the possibility of finding a closed-form solution. However, the inclusion of correlated volatility processes still adds value to the current literature of robust portfolio optimization problems.

The rest of the paper is structured as follows. Our model formulation is introduced in Section 2. Section 3 presents { the analytical solutions of} the optimal and sub-optimal investment strategies for the worst-case measure in complete and incomplete markets, and for the case of asset price with jump risks.  Section 4 provides { an analytical approximation of the robust optimal strategy for the case with correlated volatility factors}. Numerical examples are provided in Section 5, and conclusions  are given in Section 6. 

\section{Basic Model}

We propose a portfolio optimization problem with one risky asset under multi-factor stochastic volatility structure by extending the one-factor volatility model in \cite{Liu03}. Under the assumption that investors have access to both stock and derivatives markets, the money market account follows
\begin{equation}\label{bond}
dM(t)=rM(t)dt,
\end{equation}
where  $r$ is a constant risk-free interest rate. In this paper, the stock price and its volatilities follow the multi-factor volatility model of \cite{Christoffersen09}. Without loss of generality, a two-factor stochastic volatility model  is presented in this paper. Specifically, the stock price satisfies:
\begin{equation}\label{S1}
dS(t)=[r+\sum_{j=1}^{2}\lambda_jV_j(t)] S(t)dt+\sum_{j=1}^{2}\sqrt{V_j(t)}S(t)dW_{j}(t),\  j = 1, 2,
\end{equation}
{ where $W_j(t)$ are independent Brownian motions, and the variance $V_j(t)$ are assumed to follow}
\begin{equation}\label{V0}
dV_j(t)=\kappa_j(\theta_j-V_j(t))dt+\sigma_j\sqrt{V_j(t)}\left(\rho_j dW_{j}(t)+\sqrt{1-\rho_j^2} dZ_j(t)\right),\ { j =1, 2,}
\end{equation}
where $Z_j(t)$ are another two independent Brownian motions; $\rho_j$ are the correlation parameters;
$\kappa_j$, $\theta_j$ and $\sigma_j$ are the mean-reverting speed, the long-term mean and  the volatility of volatility in $V_j(t)$, respectively. The Feller condition is assumed to be satisfied, i.e. $2\kappa_j\theta_j\geq \sigma_j^2$ holds (see \cite{Heston93}).

{ Let $\lambda_j \sqrt{V_j(t)}$ and $\mu_j \sqrt{V_j(t)}$  be the market  prices of the risk associated with $W_j$ and $Z_j$, respectively,  where $\lambda_j$ and $\mu_j$ are constant risk premium parameters. In this section and Section 3, the volatility components are assumed to be independent. As a result, the variance is the sum of the two uncorrelated factors, each of which may be individually correlated with stock returns \cite{Christoffersen09}.
Due to the variance structure,  an investor needs to trade  three  options to hedge the market risks associated with each component of the volatility and the stock. Let the option price be $O^{(i)}(t)=g^{(i)}(S,V_1,V_2,t)$ for some twice continuously differentiable function $g^{(i)}$,  $i=1,2,3$. Using It\^o's lemma, we obtain the following option price dynamics}
\begin{equation}\label{Oi}
dO^{(i)}(t)=rO^{(i)}(t)dt+\sum_{j=1}^{2}\left[(g^{(i)}_SS+\sigma_j\rho_jg^{(i)}_{V_j})(\lambda_jV_jdt+\sqrt{V_j}dW_j+\sigma_{j}\sqrt{1-\rho_j^2}g^{(i)}_{V_j}(\mu_jV_jdt+\sqrt{V_j}dZ_j)\right],
\end{equation}
where $g^{(i)}_S$ and $g^{(i)}_{V_j}$ denote the partial derivatives of $g^{(i)}$ with respect to $S$ and $V_j$, respectively. { Equations \eqref{S1}-\eqref{Oi} are referred to as} our reference model.

In reality, an investor { facing uncertainty about the probability distribution for the reference model would often consider} a set of possible alternative models when making investment decisions. We assume that the investor is uncertain about the distribution of noises $W_j$ and $Z_j$ in the asset price and its volatility processes, and that $\mathcal{F}_t$ is the filtration generated by Brownian motions $W_j$ and $Z_j$.  Let perturbation process $\mathbf{e}=\left(e^S_1(t),e^S_2(t),e^V_1(t),e^V_2(t)\right)$ be a $\mathbb{R}^4$-valued $\mathcal{F}_t$-progressively measurable process and $\mathcal{E}[0,T]$ be the space of all $\mathcal{F}_t$-measurable processes such that $\mathbf{e}$ is a well-defined Radon-Nikodym derivative process:
\begin{eqnarray}\label{RD}
  \mathcal{Z}_t^\mathbf{e}=\mathrm{E}\left[\left.\frac{d\mathbb{P}^\mathbf{e}}{d\mathbb{P}}\right|\mathcal{F}_t\right]
=\exp\bigg(-\int_{0}^{t}\sum_{j=1}^{2}\big[\frac{1}{2}\left((e^S_j(\tau))^2+(e^V_j(\tau))^2\right)d\tau
+e^S_j(\tau)dW_j(\tau)+e^V_j(\tau)dZ_j(\tau)\big]\bigg).
\end{eqnarray}
According to Girsanov's theorem, the processes defined as
$$
\widehat{W}_j(t)=\int_{0}^{t}e^S_j(\tau)d\tau+W_j(t), \quad\widehat{Z}_j(t)=\int_{0}^{t}e^V_j(\tau)d\tau+Z_j(t),
$$
are Brownian motions under the probability measure $\mathbb{P}^e$. Due to the difficulty identifying the reference model from the available market data, for each perturbation process $\mathbf{e}$, the investor considers an alternative model, in which the stock price follows the process
\begin{equation}\label{SE}
dS(t)=\Big(r+\sum_{j=1}^{2}\left[\lambda_jV_j(t)-\sqrt{V_j(t)}e^S_j\right]\Big) S(t)dt+\sum_{j=1}^{2}\sqrt{V_j(t)}S(t)d\widehat{W}_{j}(t),
\end{equation}
and at the same time, its variance processes $V_j$, $j=1, 2$, are governed by
\begin{equation}\label{VE}
 dV_j(t)  =
\left[\kappa_j(\theta_j-V_j(t))-\rho_j\sigma_j\sqrt{V_j}e^S_j-\sqrt{1-\rho^2_j}\sigma_j\sqrt{V_j}e^V_j\right]dt+\sigma_j\sqrt{V_j(t)}\left(\rho_j d\widehat{W}_{j}(t)+\sqrt{1-\rho_j^2} d\widehat{Z}_j(t)\right),
\end{equation}
while the option prices $O^{(i)}, i = 1, 2, 3$, satisfy
\begin{equation}\label{OE}
\begin{aligned}
dO^{(i)}(t) = &  rO^{(i)}(t)dt+\sum_{j=1}^{2}\sigma_{j}\sqrt{1-\rho_j^2}g^{(i)}_{V_j}\left((\mu_jV_j-e^V_j\sqrt{V_j})dt+\sqrt{V_j}d\widehat{Z}_j(t)\right)\\
&+\sum_{j=1}^{2}\left(g^{(i)}_SS+\sigma_j\rho_jg^{(i)}_{V_j}\right)\left((\lambda_jV_j-e^S_j\sqrt{V_j})dt+\sqrt{V_j}d\widehat{W}_j(t)\right){.}
\end{aligned}
\end{equation}

Under the actual trading scenario, different investors  obtain { their} information about the probability distributions of the stock price and the volatility processes from different sources. { Consequently}, each investor chooses one certain measure, which is determined by { his or her} perturbation process $\mathbf{e}$. For each {chosen} perturbation process, the {investor  considers} the corresponding alternative model. { Since  the perturbation process can be chosen arbitrarily, the alternative model} allows for different levels of ambiguity about the stock and its volatility. Naturally the investor  attempts to find a robust decision based on the sources which are more reliable and useful.

Let $X_t$ be an investor's total wealth, $\pi^S$ be the fraction invested in stock, $\pi^{i}$ $(i=1,2,3)$ be the fraction invested in the $i$-th option, and the rest of the wealth in a money market account, so their investment strategy is $\Pi=(\pi^S,\pi^1,\pi^2,\pi^3)$.
Assume that the ambiguous investor wishes to derive an optimal strategy maximizing the expected utility of terminal wealth $X_{T}$, {  and that the investor's preferences are described by a CRRA (Constant  Relative Risk Aversion) utility function with parameter $\gamma>1$ as in \cite{Uppal03,Flor14,Escobar15}.  Let $\mathcal{U}[0,T]$ be the set of all admissible strategies $\Pi$ statisfying} the following conditions: (i) $\Pi$ is a $\mathcal{F}_t$-progressively measurable process; (ii) Under $\Pi$, the wealth process $X_t$ of the investor is non-negative for $t\in[0,T]$; (iii) The integrability conditions{, which are necessary for the expectation operator in \eqref{max1a} to be well-defined, } are satisfied. Denote a new process $\mathbf{Y}(s)=(X(s),V_1(s),V_2(s))$ and let $\mathbf{y}=(x,v_1,v_2)$ be the values of $\mathbf{Y}(s)$ at time $t$, then the expected utility {corresponding to} a trading strategy $\Pi \in\mathcal{U}[0,T] $ is given by
\begin{equation}\label{max1a}
w^\mathbf{e}(t,\mathbf{y};\Pi)=\frac{1}{1-\gamma}\mathrm{E}^{\mathbb{P}^\mathbf{e}}_{t,\mathbf{y}}\left[(X_T)^{1-\gamma}\right].
\end{equation}
Thus, the indirect utility function of the investor is defined as
\begin{equation}\label{max}
J(t,\mathbf{y})=\sup_{\Pi }\inf_{\mathbf{e} }\hskip -2pt
\Big(\hskip -2pt w^\mathbf{e}(t,y;\Pi)+\mathrm{E}^{\mathbb{P}^\mathbf{e}}_{t,\mathbf{y}}\hskip -2pt\Big[\hskip -2pt\int_{t}^{T}\hskip-5pt\sum_{j=1}^2\frac{(e^S_j(s))^2}{2\Psi^S_j(s,Y)}+\frac{(e^V_j(s))^2}{2\Psi^V_j(s,Y)}ds
\Big]\Big).
\end{equation}
Here the expectation is taken with respect to the distribution from the alternative model, and the integral term in (\ref{max}) is the penalty incurred by deviating from the reference model. As pointed out by Anderson et al. \cite{Anderson03}, the penalty term is an expected log-likelihood ratio on the basis of the relative entropy. The state-dependent scaling functions $\Psi^S_j$ and $\Psi^V_j$ in \eqref{max} are defined below for analytical tractability as in \cite{Flor14,Escobar15,Escobar17}
\begin{equation}
\Psi^S_j=\frac{\phi^S_j}{(1-\gamma)J(t,\mathbf{y})},
\quad
\Psi^V_j=\frac{\phi^V_j}{(1-\gamma)J(t,\mathbf{y})},
\end{equation}
where {positive} constants $\phi^S_j$ and $\phi^V_j$ are  ambiguity aversion parameters that describe the ambiguity aversion level about { the} stock price and its volatilities, respectively.
{ Functions} $\Psi^S_j$ and $\Psi^V_j$ represent the strength of the investor's preference for robustness,  with greater values reflecting less faith in the reference model.
\section{Robust Optimal Investment Strategies}
\subsection{{\textbf{Complete Market Case}}}
In the complete market, the investor's wealth process {satisfies }
\begin{equation}\label{wealth}
\begin{aligned}
\frac{dX(t)}{X(t)}&=\pi^S\frac{dS(t)}{S(t)}+\sum_{i=1}^{3}\pi^{i}\frac{dO^{i}(t)}{O^{i}(t)}+\left(1-\pi^S-\sum_{i=1}^3\pi^{i}\right)rdt\\
&=rdt+\sum_{j=1}^{2}\Big[\big(\beta^{S}_j(\lambda_jV_j-\sqrt{V_j}e^S_j)
+\beta^{V}_j(\mu_jV_j-\sqrt{V_j}e^V_j)\big)dt+\sqrt{V_j}\big(\beta^{S}_jd\widehat{W}_j(t)+\beta^{V}_jd\widehat{Z}_j(t)\big)\Big],
\end{aligned}
\end{equation}
where
\begin{equation}\label{eve}
  \beta^S_j=\pi^S+\sum_{i=1}^{3}\frac{g^{(i)}_S+\sigma_j\rho_jg^{(i)}_{V_j}}{O^{i}(t)}\pi^i \quad \text{and}\quad
\beta^V_j=\sum_{i=1}^{3}\frac{\sigma_j\sqrt{1-\rho_j^2}g^{(i)}_{V_j}}{O^{i}(t)}\pi^i
\end{equation}
represent the investor's wealth exposure to   risk factors $W_j$ and $Z_j$, respectively, and in matrix form
\begin{equation}\label{hhh}
\begin{bmatrix}\beta^S_1 \\ \beta^S_2 \\  \beta^V_1 \\ \beta^V_2 \end{bmatrix}
=
\begin{bmatrix}
1 & \frac{g^{(1)}_S+\sigma_1\rho_1g^{(1)}_{V_1}}{O^{1}(t)} & \frac{g^{(2)}_S+\sigma_1\rho_1g^{(2)}_{V_1}}{O^{2}(t)} & \frac{g^{(3)}_S+\sigma_1\rho_1g^{(3)}_{V_1}}{O^{3}(t)} \\
1 & \frac{g^{(1)}_S+\sigma_2\rho_2g^{(1)}_{V_2}}{O^{1}(t)} & \frac{g^{(2)}_S+\sigma_2\rho_2g^{(2)}_{V_2}}{O^{2}(t)} & \frac{g^{(3)}_S+\sigma_2\rho_2g^{(3)}_{V_2}}{O^{3}(t)} \\
0 & \frac{\sigma_1\sqrt{1-\rho_1^2}g^{(1)}_{V_1}}{O^1(t)} & \frac{\sigma_1\sqrt{1-\rho_1^2}g^{(2)}_{V_1}}{O^2(t)} & \frac{\sigma_1\sqrt{1-\rho_1^2}g^{(3)}_{V_1}}{O^3(t)}\\
0 & \frac{\sigma_2\sqrt{1-\rho_2^2}g^{(1)}_{V_2}}{O^1(t)} & \frac{\sigma_2\sqrt{1-\rho_2^2}g^{(2)}_{V_2}}{O^2(t)} & \frac{\sigma_2\sqrt{1-\rho_2^2}g^{(3)}_{V_2}}{O^3(t)}\\
\end{bmatrix}
\begin{bmatrix}\pi^S \\ \pi^1 \\  \pi^2 \\ \pi^3 \end{bmatrix}
=:A\begin{bmatrix}\pi^S \\ \pi^1 \\  \pi^2 \\ \pi^3 \end{bmatrix}.
\end{equation}
The matrix $A$ in \eqref{hhh} is assumed to { be} full rank to keep the completeness of the market with respect to the chosen derivative securities, the risky stock, and the money market account (see also \cite{Liu03, Escobar15}). Thus, any exposure can be achieved by taking an appropriate position in a complete market. To build our analysis independent of the derivative security chosen, we { examine} the exposures, instead of the portfolio weights considered in \cite{Liu03,Escobar15,Escobar17}.

{ The value function} $J(t,\mathbf{y})$ in \eqref{max} satisfies the robust Hamilton-Jacobi-Bellman (HJB) PDE:
\begin{equation}\label{HJB0}
\begin{aligned}
\sup_{\beta_j^S\hskip-2pt,\, \beta^V_j}&\inf_{e^S_j\hskip -2pt,\, e^V_j}
\Big\{
J_t+x\Big[r+\sum_{j=1}^{2}\big(\beta^{S}_j\lambda_jv_j-\beta^{S}_j\sqrt{v_j}e^S_j
+\beta^{V}_j\mu_jv_j-\beta^{V}_j\sqrt{v_j}e^V_j\big)\Big]J_x
+\frac{1}{2}x^2\sum_{j=1}^{2}\big[(\beta^{S}_j)^2\\
&+(\beta^{V}_j)^2\big]v_jJ_{xx}+\frac{1}{2}\sum_{j=1}^{2}\sigma_j^2v_jJ_{v_jv_j}
+\sum_{j=1}^{2}\Big[\kappa_j(\theta_j-v_j)-\rho_j\sigma_j\sqrt{v_j}e^S_j-\sqrt{1-\rho_j^2}\sigma_j\sqrt{v_j}e^V_j\Big]J_{v_j}\\
&+\sum_{j=1}^{2}\hskip -3pt\Big[\sigma_jv_jx\Big(\beta^S_j\rho_j+\beta^V_j\hskip -3pt\sqrt{1-\rho_j^2}\Big)J_{v_jx}
+\frac{(e^S_j)^2}{2\Psi^S_j}+\frac{(e^V_j)^2}{2\Psi^V_j}\Big]\hskip -2pt\Big\}=0.
\end{aligned}
\end{equation}
The solution to \eqref{HJB0} is provided in the following proposition.
\begin{proposition}\label{prop1}
In a complete market, the indirect utility  of an ambiguity and risk averse investor is
\begin{equation}
J(t,x,v_1,v_2)=\frac{x^{1-\gamma}}{1-\gamma}\exp\left[H_1(\tau)v_1+H_2(\tau)v_2+h(\tau)\right],
\end{equation}
\begin{equation}\label{Hh1}
\begin{aligned}
&H_j(\tau)=\frac{2c_j(1-e^{-d_j\tau})}{2d_j+(a_j+d_j)(e^{-d_j\tau}-1)},\  j = 1, 2, \\
&h(\tau)=(1-\gamma)r\tau-\sum_{j=1}^{2}\kappa_j\theta_j\left(\frac{a_j+d_j}{2b_j}\tau+\frac{1}{b_j}\ln\left(\frac{e^{-d_j\tau(a_j+d_j)}-a_j+d_j}{2d_j}\right)\right),
\end{aligned}
\end{equation}
where {$\tau=T-t$, constants $a_j, b_j, c_j$ and $d_j$ are given in Appendix \ref{A1}.}

The optimal exposures to the risk factors $W_j$ and $Z_j$ $(j=1,2)$ are
\begin{equation}\label{e17}
\beta^S_j=\frac{\lambda_j}{\gamma+\phi^S_j}+\frac{(1-\gamma-\phi^S_j)\sigma_j\rho_j}{(1-\gamma)(\gamma+\phi^S_j)}H_j(\tau),\quad \beta^V_j=\frac{\mu_j}{\gamma+\phi^V_j}+\frac{(1-\gamma-\phi^V_j)\sigma_j\sqrt{1-\rho_j^2}}{(1-\gamma)(\gamma+\phi^V_j)}H_j(\tau).
\end{equation}
The worst-case measure are given by
\begin{equation}\label{opte}
e^S_j=\left(\frac{\lambda_j}{\gamma+\phi^S_j}+\frac{\sigma_j\rho_j H_j(\tau)}{(1-\gamma)(\gamma+\phi^S_j)}\right)\phi^S_j\sqrt{v_j},\quad e^V_j=\left(\frac{\mu_j}{\gamma+\phi^V_j}+\frac{\sigma_j\sqrt{1-\rho_j^2} H_j(\tau)}{(1-\gamma)(\gamma+\phi^V_j)}\right)\phi^V_j\sqrt{v_j}.
\end{equation}
\end{proposition}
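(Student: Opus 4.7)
The plan is to verify the proposition by a direct ansatz-and-match approach, exploiting the CRRA structure together with the affine form of the variance factors. Guided by \cite{Liu03, Escobar15}, I would conjecture
\[
J(t,x,v_1,v_2)=\frac{x^{1-\gamma}}{1-\gamma}\exp\left[H_1(\tau)v_1+H_2(\tau)v_2+h(\tau)\right],
\]
with terminal conditions $H_j(0)=h(0)=0$, where $\tau=T-t$. This ansatz gives clean expressions $xJ_x=(1-\gamma)J$, $x^2 J_{xx}=-\gamma(1-\gamma)J$, $J_{v_j}=H_jJ$, $xJ_{xv_j}=(1-\gamma)H_jJ$, and $J_{v_jv_j}=H_j^2 J$, and in particular it turns the scaling functions into constants: $\Psi^S_j\,xJ_x=\phi^S_j$ and similarly for $\Psi^V_j$. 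The advantage is that, after substitution, every term in the HJB equation \eqref{HJB0} is proportional to $J$ and either constant or linear in $v_1,v_2$, so matching coefficients reduces the PDE to ODEs.

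First I would perform the inner infimum. The HJB integrand is quadratic and convex in each $e^S_j, e^V_j$, so the first-order conditions give explicitly
\[
e^S_j=\Psi^S_j\sqrt{v_j}\left(x\beta^S_j J_x+\rho_j\sigma_j J_{v_j}\right),\qquad
e^V_j=\Psi^V_j\sqrt{v_j}\left(x\beta^V_j J_x+\sqrt{1-\rho_j^2}\,\sigma_j J_{v_j}\right),
\]
which, using the identities above, simplify to $e^S_j=\phi^S_j\sqrt{v_j}\bigl(\beta^S_j+\tfrac{\rho_j\sigma_j H_j}{1-\gamma}\bigr)$ and the analogous expression for $e^V_j$. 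Then, by the envelope theorem, the outer supremum conditions for $\beta^S_j,\beta^V_j$ are obtained by differentiating only the explicitly $\beta$-dependent terms. This yields a pair of linear equations per $j$ whose solutions are precisely \eqref{e17}, and substituting these into the expressions for $e^S_j, e^V_j$ recovers \eqref{opte}. Convexity/concavity of the HJB integrand in $(e,\beta)$ confirms that these critical points give a genuine saddle.

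Plugging the optimal $(\beta,e)$ back into \eqref{HJB0} and dividing by $J$ leaves an expression of the form $-H_1'(\tau)v_1-H_2'(\tau)v_2-h'(\tau)+F_1(H_1)v_1+F_2(H_2)v_2+G(H_1,H_2)=0$, where each $F_j$ is a quadratic in $H_j$ with coefficients depending only on $\gamma,\lambda_j,\mu_j,\kappa_j,\sigma_j,\rho_j,\phi^S_j,\phi^V_j$. Matching coefficients of $v_j$ separates the problem into two decoupled Riccati ODEs
\[
H_j'(\tau)=b_j H_j^2(\tau)+a_j H_j(\tau)+c_j,\qquad H_j(0)=0,
\]
with $a_j,b_j,c_j$ as defined in Appendix~\ref{A1}, together with the linear ODE $h'(\tau)=(1-\gamma)r+\kappa_1\theta_1 H_1+\kappa_2\theta_2 H_2$ with $h(0)=0$. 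The Riccati equation has the standard closed-form solution
\[
H_j(\tau)=\frac{2c_j(1-e^{-d_j\tau})}{2d_j+(a_j+d_j)(e^{-d_j\tau}-1)},
\]
where $d_j=\sqrt{a_j^2-4b_jc_j}$, and integrating $H_j$ yields the claimed logarithmic expression for $h(\tau)$.

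The main obstacle I anticipate is purely bookkeeping rather than conceptual: the HJB equation mixes cross-terms in $\beta^S_j$ and $\beta^V_j$ with $e^S_j, e^V_j$, $\rho_j$, and the derivatives of $J$, so one has to collect carefully to see that the coefficient of $v_j$ is indeed quadratic in $H_j$ and that the two indices $j=1,2$ fully decouple (this decoupling is what the independence assumption on $W_1,W_2,Z_1,Z_2$ buys us). A secondary subtle point is to verify the admissibility of the candidate strategy—non-negativity of $X$, Novikov-type integrability of $\mathcal{Z}^{\mathbf{e}}$, and a transversality condition—so that the standard verification theorem applies and the candidate $J$ is indeed the value function of \eqref{max}.
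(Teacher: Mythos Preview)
Your proposal is correct and follows essentially the same approach as the paper's proof in Appendix~\ref{A1}: compute the inner minimizers $(e^S_j)^*,(e^V_j)^*$ from the first-order conditions, insert the exponential-affine CRRA ansatz, obtain the optimal exposures $(\beta^S_j)^*,(\beta^V_j)^*$ from the outer first-order conditions, and match coefficients of $v_j$ to arrive at the decoupled Riccati equations \eqref{e24} with the stated closed-form solutions. Your remark on admissibility and the Novikov condition is handled separately in the paper as Proposition~\ref{propa2}.
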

\begin{proof}
See Appendix \ref{A1}.
\end{proof}
\begin{remark}
It is well-known that there exists a unique equivalent risk-neutral measure $\mathbb{P}$ in the complete market. Because the optimization problem \eqref{max} under measure $\mathbb{P}^\mathbf{e}$ is equivalent to
\begin{equation*}
\max_{\Pi \in \mathcal{U}(t,T)} \frac{1}{1-\gamma}\mathrm{E}^{\mathbb{P}}_{t,\textbf{y}}\left[(X_T)^{1-\gamma}\right],
\end{equation*}
under measure $\mathbb{P}$, the complete market condition determines the uniqueness of the worst-case measure and ensures the consistency of optimal solution under the worst-case measure with the risk-neutral measure.
\end{remark}

Similar to the studies in \cite{Ait07,Bakshi03,Black76}, further assumptions{,  $\lambda_j>0$,  $\mu_j<0$,  and  $\rho_j<0$,  are imposed}.  It follows immediately from \eqref{e17} that the optimal stock risk exposures $\beta^S_j$ are positive, whereas the optimal volatility risk exposures $\beta^V_j$ are negative. Obviously, each of the optimal exposures consists of a myopic component and a hedge component. The myopic components are constant in time and decrease as the corresponding ambiguity aversion parameters increase, whereas the hedge components are time-dependent and vanish as the investment horizon reaches the terminal time.
The worst-case measure $e^S_j$  and  $e^V_j$  in \eqref{opte} vary linearly with respect to the $j$-th factor of the volatility $\sqrt{V_j}\,$, but depend on both ambiguity aversion parameters $\phi^S_j$ and $\phi^V_j$ through function $H_j$. Thus, the model uncertainty can be thought of as the uncertainty about parameters $\lambda_j$ and $\mu_j$, which control the market prices of all risks.

To ensure that the optimal solutions are well-behaved \cite{Kraft05} and the function $J (t,x , v_1,v_2 )$  well-defined we present the following verification theorem.
\begin{proposition}\label{propa2}
The Radon-Nikodym derivative  under the worst-case measure 
$((e^S_1)^*,(e^S_2)^*,(e^V_1)^*,(e^V_2)^*)$
 is well-defined, thus the optimal portfolio is well-behaved, if
$$ (\phi_j^S)^2\frac{\lambda^2_j\sigma_j^2}{(\gamma+\phi^S_j)^2}+(\phi_j^V)^2\frac{\mu^2_j\sigma_j^2}{(\gamma+\phi^V_j)^2}\leq \kappa_j^2.$$
\end{proposition}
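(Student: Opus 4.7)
The plan is to verify that the nonnegative local martingale $\mathcal{Z}_t^{\mathbf{e}^*}$ defined in (\ref{RD}) at the candidate worst-case perturbation (\ref{opte}) is actually a true $\mathbb{P}$-martingale, using Novikov's criterion. Since the formulas in (\ref{opte}) express each $(e_j^S)^*$ and $(e_j^V)^*$ as a deterministic continuous function of $\tau=T-t$ times $\sqrt{V_j(t)}$, squaring and summing yields
$$\sum_{j=1}^{2}\bigl(((e_j^S)^*)^2+((e_j^V)^*)^2\bigr)=\sum_{j=1}^{2}\alpha_j(\tau)\,V_j(t),$$
for explicit, continuous, bounded functions $\alpha_j$ on $[0,T]$. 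Thus the integrand in Novikov's exponent reduces to a linear functional of $(V_1,V_2)$, and the question collapses to checking the finiteness of exponential moments of $\int_0^T V_j(s)\,ds$ for the two independent CIR factors under $\mathbb{P}$.

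The second step is to observe, by direct expansion of (\ref{opte}), that the $H_j$-independent part of $\alpha_j(\tau)$ equals $(\phi_j^S)^2\lambda_j^2/(\gamma+\phi_j^S)^2+(\phi_j^V)^2\mu_j^2/(\gamma+\phi_j^V)^2$, and then to invoke the classical Laplace transform of an integrated CIR process:
\begin{equation*}
\mathrm{E}^{\mathbb{P}}\!\left[\exp\!\Big(\tfrac{1}{2}\int_0^T\alpha_j(s)\,V_j(s)\,ds\Big)\right]=\exp\bigl(B_j(T)V_j(0)+C_j(T)\bigr),
\end{equation*}
where $(B_j,C_j)$ solve a Riccati-type system that stays finite on $[0,T]$ precisely when $\alpha_j(\tau)\le\kappa_j^2/\sigma_j^2$ for all $\tau\in[0,T]$. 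Multiplying this critical bound by $\sigma_j^2$ recovers exactly the inequality stated in the proposition, so the hypothesis delivers non-explosion of the Riccati system, hence Novikov's integrability. The martingale property of $\mathcal{Z}_t^{\mathbf{e}^*}$ follows, and $\mathbb{P}^{\mathbf{e}^*}$ is a well-defined probability measure equivalent to $\mathbb{P}$.

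The main obstacle I anticipate is ruling out that the hedge component $H_j(\tau)$ pushes the supremum $\bar\alpha_j=\sup_\tau\alpha_j(\tau)$ above the $H_j$-independent value that matches the stated inequality. Since $H_j$ satisfies a Riccati ODE with $H_j(0)=0$ and admits the explicit formula (\ref{Hh1}), a monotonicity and sign analysis --- using the economically natural assumptions $\lambda_j>0$, $\mu_j<0$, $\rho_j<0$ mentioned after Proposition \ref{prop1} --- is needed to confirm that the hedge correction to $\alpha_j$ is sign-compatible with the myopic part and controlled by the same bound. Once this uniform bound on $\alpha_j$ is in place, a standard verification argument (localisation by stopping times together with dominated convergence, in the style of Kraft \cite{Kraft05}) shows that $\mathbf{e}^*$ attains the infimum in (\ref{max}) and the exposures of Proposition \ref{prop1} attain the supremum, so the robust optimal portfolio is well-behaved.
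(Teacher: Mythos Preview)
Your proposal is correct and follows essentially the same route as the paper's proof: Novikov's condition, reduction to $\sum_j K_j(\tau)V_j$ via (\ref{opte}), independence of the two CIR factors, and the Kraft-type bound $\sup_\tau K_j\le \kappa_j^2/\sigma_j^2$. The paper resolves your anticipated ``obstacle'' exactly as you suggest: under $\gamma>1$, $\lambda_j>0$, $\mu_j<0$, $\rho_j<0$ one checks that $H_j\le 0$ with its maximum $H_j(0)=0$ at $\tau=0$, and that $K_j$ is increasing in $H_j$ there, so the supremum is attained at $\tau=0$ and equals the $H_j$-free myopic expression in the stated inequality.
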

\begin{proof}
See Appendix \ref{A2}.
\end{proof}
\subsection{{\textbf{Incomplete Market Case}}}
When an investor has no access to derivative securities the volatility risks cannot be hedged perfectly so the market is incomplete. In this case, $\pi^i=0$ ($i=1, 2, 3$), the investor's wealth process $X_t$ is governed by
\begin{equation}\label{wealth2}
\begin{aligned}
\frac{dX(t)}{X(t)}=rdt+\sum_{j=1}^{2}\left[\pi^{S}(\lambda_jV_j-\sqrt{V_j}e^S_j
)dt+\pi^{S}\sqrt{V_j}d\widehat{W}_j(t)\right].
\end{aligned}
\end{equation}
Thus, the robust PDE in the incomplete market is
\begin{equation}\label{HJB02}
\begin{aligned}
\sup_{\pi^S}&\inf_{e^S_j,e^V_j}
\bigg\{
J_t+x\Big(r+\sum_{j=1}^{2}\pi^{S}(\lambda_jv_j-\sqrt{v_j}e^S_j)\Big)J_x+\frac{1}{2}x^2\sum_{j=1}^{2}(\pi^{S})^2v_jJ_{xx}
+\sum_{j=1}^{2}\big(\kappa_j(\theta_j-v_j)-\rho_j\sigma_j\sqrt{v_j}e^S_j\\
-&\sqrt{1-\rho_j^2}\sigma_j\sqrt{v_j}e^V_j\big)J_{v_j}+\sum_{j=1}^{2}\Big[\frac{1}{2}\sigma_j^2v_jJ_{v_jv_j}+\sigma_jv_jx\pi^S\rho_jJ_{v_jx}
+\frac{(e^S_j)^2}{2\Psi^S_j}+\frac{(e^V_j)^2}{2\Psi^V_j}\Big]\bigg\}=0.
\end{aligned}
\end{equation}
Our result is stated in the following proposition.
\begin{proposition}\label{prop2}
In an incomplete market, the indirect utility  of an ambiguity and risk averse investor is  \begin{equation}
J(t,x,v_1,v_2)=\frac{x^{1-\gamma}}{1-\gamma}\exp\left(\bar{H}_1(\tau)v_1+\bar{H}_2(\tau)v_2+\bar{h}(\tau)\right),
\end{equation}
where the functions $\bar{H}_1$, $\bar{H}_2$ and $\bar{h}$ are obtained from { \eqref{HJBA04}}.

{ Further, if} there exists only one volatility risk ($W_1$ or $W_2$),  or there are two equal risk factors ($W_1=W_2$), the optimal exposure to $W_j$ $(j=1\ \text{or}\  2)$ is
\begin{equation}
\begin{aligned}
\pi^S=\frac{\lambda_j}{\gamma+\phi^S_j}+\frac{(1-\gamma-\phi^S_j)\sigma_j\rho_j}{(1-\gamma)(\gamma+\phi^S_j)}\bar{H}_j(\tau),
\end{aligned}
\end{equation}
and the worst-case measures are
\begin{equation}\label{e+}
\begin{aligned}
e^S_j=\left(\frac{\lambda_j}{\gamma+\phi^S_j}+\frac{\sigma_j\rho_j \bar{H}_j(\tau)}{(1-\gamma)(\gamma+\phi^S_j)}\right)\phi^S_j\sqrt{v_j},
\ e^V_j=\frac{\sigma_j\sqrt{1-\rho_j^2} \bar{H}_j(\tau)}{1-\gamma}\phi^V_j\sqrt{v_j},
\end{aligned}
\end{equation}
where $\bar{H}_j$ are given by \eqref{HH2}.
\end{proposition}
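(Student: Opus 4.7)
The plan is to mirror the analytical strategy used in Proposition \ref{prop1}, adapted to the constraint that there is now only a single scalar control $\pi^S$ instead of the four exposure variables $(\beta^S_j,\beta^V_j)$. I would begin by positing the exponential-affine ansatz
$$J(t,x,v_1,v_2)=\frac{x^{1-\gamma}}{1-\gamma}\exp\bigl(\bar{H}_1(\tau)v_1+\bar{H}_2(\tau)v_2+\bar{h}(\tau)\bigr),$$
with boundary data $\bar{H}_j(0)=0$, $\bar{h}(0)=0$ encoding the terminal utility, and computing $J_t,J_x,J_{xx},J_{v_j},J_{v_jv_j},J_{xv_j}$. The key simplification is that the scaling functions $\Psi^S_j=\phi^S_j/[(1-\gamma)J]$ and $\Psi^V_j=\phi^V_j/[(1-\gamma)J]$ make the penalty coefficients proportional to $1/J$, which, combined with the ansatz, cancels the exponential factor $J$ from every algebraic expression arising after substitution.

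Next I would carry out the inner infimum over $(e^S_j,e^V_j)$, which is a pointwise quadratic minimization. The four first-order conditions yield
$$e^S_j=\phi^S_j\sqrt{v_j}\Bigl(\pi^S+\frac{\rho_j\sigma_j\bar{H}_j}{1-\gamma}\Bigr),\qquad e^V_j=\frac{\sigma_j\sqrt{1-\rho_j^2}\,\phi^V_j\bar{H}_j}{1-\gamma}\sqrt{v_j},$$
the second of which already matches \eqref{e+}. Substituting these back reduces \eqref{HJB02} to a scalar maximization in $\pi^S$; by the envelope theorem, the first-order condition in $\pi^S$ may be taken with the $e^S_j$ held fixed, producing
$$\pi^S\sum_{j=1}^{2}v_j(\gamma+\phi^S_j)=\sum_{j=1}^{2}v_j\Bigl[\lambda_j+\sigma_j\rho_j\bar{H}_j\,\frac{1-\gamma-\phi^S_j}{1-\gamma}\Bigr].$$
Inserting the resulting $\pi^S$ and the worst-case perturbations back into the HJB and matching coefficients of $v_1$, $v_2$, and the constant term yields the coupled ODE system \eqref{HJBA04} for $\bar{H}_1,\bar{H}_2,\bar{h}$, together with the stated initial conditions.

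The principal obstacle is exactly this coupling: because a single scalar $\pi^S$ must simultaneously hedge two independent volatility factors, the clean decoupling that was available in the complete-market proof of Proposition \ref{prop1} is lost, the optimal $\pi^S$ depends on the state through the weights $v_j/(v_1+v_2)$, and the Riccati-type system for $\bar{H}_1,\bar{H}_2$ no longer separates, so there is no elementary closed form in general. This is precisely why the explicit formulas for $\pi^S$ and the worst-case $e^S_j$ in the proposition are recorded only in two degenerate settings. In the single-factor case (say $v_2\equiv 0$) or when $W_1=W_2$ (whereupon effectively $v_1=v_2$ and the two rows collapse into one), the weighted average above telescopes to
$$\pi^S=\frac{\lambda_j}{\gamma+\phi^S_j}+\frac{(1-\gamma-\phi^S_j)\sigma_j\rho_j}{(1-\gamma)(\gamma+\phi^S_j)}\bar{H}_j(\tau),$$
and plugging this value into the $e^S_j$ identity above and simplifying recovers the first expression in \eqref{e+}; simultaneously the system \eqref{HJBA04} degenerates to a single scalar Riccati equation whose explicit integration produces \eqref{HH2}. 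The proof is then closed by a verification argument analogous to that of Proposition \ref{propa2}, guaranteeing that the candidate indirect utility is in fact the value function and the candidate $\pi^S$ is admissible.
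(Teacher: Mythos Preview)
Your approach is essentially identical to the paper's: impose the exponential-affine ansatz, solve the inner quadratic minimization to obtain the worst-case perturbations $(e^S_j,e^V_j)$ in terms of $\pi^S$ and $\bar H_j$, take the first-order condition in $\pi^S$ to obtain the state-weighted average formula, substitute back to get the coupled system \eqref{HJBA04}, and then observe that in the degenerate one-factor (or identical-factor) setting the weighted average collapses to the displayed closed form and the ODE system decouples into \eqref{HH2}. Two small inaccuracies worth fixing: $W_1=W_2$ does \emph{not} force $v_1=v_2$ (the variance processes have different parameters), so your parenthetical justification for the collapse in that case is not right---the paper simply asserts that both $j=1$ and $j=2$ branches yield the same $\pi^S$ in that degenerate limit; and \eqref{HH2} is the Riccati ODE system itself, not an explicit integral of it, so ``explicit integration produces \eqref{HH2}'' overstates what is available.
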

\begin{proof}
See Appendix \ref{A3}.
\end{proof}
It should be emphasized that the optimal exposure  $\pi^S$ depends on the volatility ambiguity parameters $\phi^V_j$ through the function  $\bar{H}_j$ which
vanishes as the investment horizon decreases. Thus, the investment strategy of short-term investors in an incomplete market is relatively insensitive to volatility ambiguity. This is in sharp contrast to the case in the complete markets, where the volatility ambiguity parameter has an additional impact on the optimal strategy
through exposure $\beta^V_j$, which is absent in an incomplete market.
\subsection{{\textbf{Suboptimal Strategies and Utility Losses}}}
We first introduce the indirect utility function of the investor who follows an admissible suboptimal strategy.
\begin{definition}\label{def33}
For  an admissible suboptimal strategy $\Pi$, the indirect utility function is
\begin{equation}\label{max2}
J^{\Pi}(t,\mathbf{y})=\inf_{\mathbf{e}}\hskip-3pt
\Big(w^{{\mathbf{e}}}(t,\mathbf{y};\Pi)+\mathrm{E}^{\mathbb{P}^e}_{t,\mathbf{y}}\Big[\int_{t}^{T}\sum_{j=1}^2\frac{(e^S_j(s))^2}{2\Psi^S_j(s,\mathbf{Y})}+\frac{(e^V_j(s))^2}{2\Psi^V_j(s,\mathbf{Y})}ds
\Big]\Big).
\end{equation}
\end{definition}
\begin{remark}
It should be emphasized that by definition $J^\Pi(t,x,v_1,v_2)$ in \eqref{max2} is strictly less than $J(t,x,v_1,v_2)$ in \eqref{max},  the indirect utility function of the investor who follows an optimal strategy.
\end{remark}
The value function \eqref{max2} satisfies the following robust HJB PDE:
\begin{equation}\label{HJB22}
\begin{aligned}
\inf_{e^S_j,e^V_j}
\Big\{
&J^\Pi_t+x\big(r+\sum_{j=1}^{2}(\beta^{S}_j\lambda_jv_j-\beta^{S}_j\sqrt{v_j}e^S_j
+\beta^{V}_j\mu_jv_j-\beta^{V}_j\sqrt{v_j}e^V_j)\big)J^\Pi_x
+\frac{1}{2}x^2\sum_{j=1}^{2}[(\beta^{S}_j)^2+(\beta^{V}_j)^2]v_jJ^\Pi_{xx}\\
&+\frac{1}{2}\sum_{j=1}^{2}\sigma_j^2v_jJ^\Pi_{v_jv_j}
+\sum_{j=1}^{2}\big(\kappa_j(\theta_j-v_j)-\rho_j\sigma_j\sqrt{v_j}e^S_j-\sqrt{1-\rho_j^2}\sigma_j\sqrt{v_j}e^V_j\big)J^\Pi_{v_j}\\
&+\sum_{j=1}^{2}\Big[\sigma_jv_jx\big(\beta^S_j\rho_j+\beta^V_j\sqrt{1-\rho_j^2}\big)J^\Pi_{v_jx}
+\frac{(e^S_j)^2}{2\Psi^S_j}+\frac{(e^V_j)^2}{2\Psi^V_j}\Big]\Big\}=0.
\end{aligned}
\end{equation}
The solution of \eqref{HJB22} gives the general worst-case measures for suboptimal strategies:
\begin{equation}\label{es2}
\begin{aligned}
(e^S_j)^*=\Psi_j^S(x\beta^S_jJ_x+\rho_j\sigma_jJ_{v_j})\sqrt{v_j},\quad
\ (e^V_j)^*=\Psi_j^V(x\beta^V_jJ_x+\hskip -3pt\sqrt{1-\rho_j^2}\sigma_jJ_{v_j})\sqrt{v_j}.
\end{aligned}
\end{equation}
Similar to the studies in \cite{Bergen18,Escobar15,Escobar17,Flor14}, the wealth-equivalent utility loss $L^\Pi$ for a suboptimal strategy $\Pi$ is defined as the solution to
\begin{equation}\label{JJJ}
J(t,x(1-L^\Pi),v_1,v_2)=J^\Pi(t,x,v_1,v_2),
\end{equation}
where  $J$ is defined by \eqref{max}, and $J^\Pi$ by \eqref{max2}
such that the form of the indirect utility function is dictated by the form of the expected utility, $w^{{\mathbf{e}}}(t,\mathbf{y};\Pi)$, in \eqref{max1a}. Therefore, $J^\Pi$ is of exponential affine form, meaning economically the risk averse investor prefers a constant relative risk aversion (CRRA).
Thus,
$$L^\Pi=1-\exp\left\{\frac{1}{1-\gamma}\left[(H_1^\Pi-H_1)v_1+(H_2^\Pi-H_2)v_2+(h^\Pi-h)\right]\right\},$$
where $H_1$, $H^\Pi_1$, $H_2$, $H^\Pi_2$, $h$ and $h^\Pi$ are some functions as discussed below.

Here we consider three specific suboptimal strategies: $\Pi_1$, the investor ignores the uncertainty on the second volatility component; $\Pi_2$, the investor ignores the uncertainty about the first volatility component, and $\Pi_3$, the investor cannot trade derivatives. Therefore, the wealth-equivalent utility loss $L^\Pi$, which measures the percentage of wealth loss, consists of the loss due to the choice of  suboptimal strategy and { that} due to non-robustness from ignoring model uncertainty ($\Pi_1$ or $\Pi_2$) or market incompleteness ($\Pi_3$). We evaluate the indirect utility functions when an investor follows the sub-optimal strategies  $\Pi \in \{\Pi_1,\Pi_2,\Pi_3\}$. Because of the symmetry of $\Pi_1$ and $\Pi_2$, we  only discuss in detail for the cases associated with $\Pi_1$ and $\Pi_3$.

If an investor takes the strategy $\Pi_1$, all parameters describing the uncertainties of the stock risk and volatility risk associated with the first component of the volatility will disappear, i.e., $\widetilde{\phi}^S_1=\widetilde{\phi}^V_1=0$.
\begin{proposition}\label{prop3}
The indirect utility function of an investor who adopts strategy $\Pi_1$ is given by
\begin{equation}
J^{\Pi_1}(t,x,v_1,v_2)=\frac{x^{1-\gamma}}{1-\gamma}\exp\left(H^{\Pi_1}_1(\tau)v_1+H^{\Pi_1}_2(\tau)v_2+h^{\Pi_1}(\tau)\right),
\end{equation}
where  functions $H^{\Pi_1}_1$,  $H^{\Pi_1}_2$ and $h^{\Pi_1}$ are solved from { \eqref{e37}} by setting $\widetilde{\phi}^S_1=\widetilde{\phi}^V_1=0$,
\end{proposition}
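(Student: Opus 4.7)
The plan is to derive the result in three steps that mirror the proof of Proposition \ref{prop1}, but with the outer supremum over exposures replaced by a fixed (suboptimal) prescription.

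First, I would pin down the exposures used by the $\Pi_1$-investor. By hypothesis this investor computes her positions as if no uncertainty were attached to the first volatility component, so the $\beta^S_j, \beta^V_j$ are read off from \eqref{e17} with $\phi^S_1, \phi^V_1$ replaced by zero, while $\phi^S_2, \phi^V_2$ remain unchanged. These exposures enter the HJB equation \eqref{HJB22} as given time-dependent inputs rather than decision variables; the penalty weights $\Psi^S_j, \Psi^V_j$, however, still use the true ambiguity constants $\phi^S_j, \phi^V_j$ because the indirect utility $J^{\Pi_1}$ is evaluated under the genuine worst-case measure of Definition \ref{def33}.

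Second, I would dispatch the inner infimum. Completing the square in each of $e^S_j, e^V_j$ yields exactly the candidate worst-case measures \eqref{es2} with the $\Pi_1$-exposures in place. Substituting them back reduces \eqref{HJB22} to a second-order PDE in $(t,x,v_1,v_2)$ whose coefficients are affine in $(v_1,v_2)$ after factoring out $J^{\Pi_1}$. I would then adopt the exponential-affine ansatz
\begin{equation*}
J^{\Pi_1}(t,x,v_1,v_2)=\frac{x^{1-\gamma}}{1-\gamma}\exp\!\bigl(H_1^{\Pi_1}(\tau)v_1+H_2^{\Pi_1}(\tau)v_2+h^{\Pi_1}(\tau)\bigr),
\end{equation*}
with terminal conditions $H_1^{\Pi_1}(0)=H_2^{\Pi_1}(0)=h^{\Pi_1}(0)=0$ dictated by $J^{\Pi_1}(T,\cdot)=x^{1-\gamma}/(1-\gamma)$. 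Plugging this form into the PDE, dividing through by $(1-\gamma)J^{\Pi_1}$, and matching coefficients of $v_1$, $v_2$, and $1$ separately produces the ODE system \eqref{e37} with $\widetilde{\phi}^S_1=\widetilde{\phi}^V_1=0$ in the slots associated with the first factor.

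Third, I would solve the ODEs. Both $H_j^{\Pi_1}$-equations are Riccati with constant coefficients and integrate in the same closed form as \eqref{Hh1}; once they are known, $h^{\Pi_1}$ follows by direct quadrature. The main obstacle is the asymmetric bookkeeping in the Riccati coefficients: for the first factor the suboptimal exposures $\beta^S_1, \beta^V_1$ carry the naive parameter values (effectively $\gamma$ in place of $\gamma+\phi^S_1$ and $\gamma+\phi^V_1$), whereas the penalty terms continue to involve the true $\phi^S_1, \phi^V_1$, so the cross-terms do not cancel as cleanly as in the fully optimal case of Proposition \ref{prop1}. Careful tracking of which ambiguity constants appear in which slot is essentially the whole work; thereafter the closed-form solution is routine.
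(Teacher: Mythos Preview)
Your overall architecture---fix the suboptimal exposures, carry out the inner infimum over the perturbations, substitute the exponential-affine ansatz, and match coefficients of $v_1,v_2,1$ to obtain Riccati ODEs---is exactly the paper's route in Appendix~\ref{A4}. On that level the proposal is fine.

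There is, however, a genuine inconsistency in your second step that prevents you from landing on the stated result. You assert that the penalty weights $\Psi^S_j,\Psi^V_j$ retain the \emph{true} ambiguity constants $\phi^S_1,\phi^V_1$, and then claim this produces \eqref{e37} with $\widetilde\phi^S_1=\widetilde\phi^V_1=0$. These two claims are incompatible. In the paper's derivation the $\widetilde\phi^S_j,\widetilde\phi^V_j$ appearing in the Riccati coefficients $a_j,b_j,c_j$ of \eqref{e37} come precisely from the penalty scaling $\Psi^S_j=\widetilde\phi^S_j/((1-\gamma)J^{\Pi})$, not from the exposures. Setting $\widetilde\phi^S_1=\widetilde\phi^V_1=0$ therefore means the factor-$1$ penalty itself is switched off in the evaluation of $J^{\Pi_1}$; that is the paper's operational meaning of ``ignoring uncertainty about the first component.'' If instead you keep the true $\phi^S_1,\phi^V_1$ in the penalty, the quadratic terms $-\tfrac12(1-\gamma)\phi^S_1 v_1[\,\cdot\,]$ and $-\tfrac12(1-\gamma)\phi^V_1 v_1[\,\cdot\,]$ survive in the analogue of \eqref{HJBA2}, and the resulting $a_1,b_1,c_1$ carry nonzero $\phi^S_1,\phi^V_1$ rather than zeros---a different ODE system from the one quoted in the proposition.

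So the fix is simple but essential: in your step~2, the inner infimum must be taken with $\Psi^S_1,\Psi^V_1$ built from $\widetilde\phi^S_1=\widetilde\phi^V_1=0$ (and $\widetilde\phi^S_2=\phi^S_2$, $\widetilde\phi^V_2=\phi^V_2$), not from the true $\phi^S_1,\phi^V_1$. With that correction your three steps reproduce the paper's proof verbatim, and the ``asymmetric bookkeeping'' you anticipate in step~3 disappears for this proposition (it reappears, with the roles you describe, only in the proof of Proposition~\ref{prop4}, where the suboptimal $J^{\Pi_1}$ is compared against the optimal $J$ built with the true $\phi$'s).
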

\begin{proof}
See Appendix \ref{A4}.
\end{proof}
\begin{remark}
For an investor adopting strategy $\Pi_2$, the indirect utility function $J^{\Pi_2}$  can be obtained by setting $\widetilde{\phi}^S_2=\widetilde{\phi}^V_2=0$ in \eqref{e37}.
\end{remark}
We now consider the strategy $\Pi_3$ where the investor cannot trade derivatives, and report the utility loss under the multi-factor volatility model.
\begin{proposition}\label{prop4} The welfare loss from no derivative trading is strictly positive { such} that $L^{\Pi_3}>0$. In addition, in complete markets the welfare loss from ignoring the information about the ambiguity is also strictly positive, i.e., $L^{\Pi_1}>0$ and $L^{\Pi_2}>0$.
\end{proposition}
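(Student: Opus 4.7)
The plan is to leverage the variational definitions of the value functions together with the exponential-affine structure of the solutions. First I would observe that comparing the definition of $J$ in \eqref{max} (a $\sup_\Pi \inf_\mathbf{e}$) with the definition of $J^\Pi$ in \eqref{max2} (just $\inf_\mathbf{e}$ for a fixed admissible $\Pi$) yields the weak inequality $J(t,\mathbf{y}) \ge J^\Pi(t,\mathbf{y})$ for every admissible $\Pi$, with equality iff $\Pi$ attains the outer supremum. To upgrade to strict inequality for the three specific strategies, I would argue case by case. For $\Pi_3$, the admissible set is the strict subset where $\pi^1=\pi^2=\pi^3=0$, while the unrestricted optimum from Proposition \ref{prop1} produces nonzero volatility exposures $\beta^V_j$ given in \eqref{e17}; since the full-rank matrix $A$ in \eqref{hhh} cannot produce nonzero $\beta^V_j$ with all option weights set to zero (the third and fourth rows would force $\beta^V_j = 0$), $\Pi_3$ can never attain the unconstrained supremum under the generic non-degeneracy $\mu_j \neq 0$ or $\sigma_j\sqrt{1-\rho_j^2}H_j \not\equiv 0$. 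For $\Pi_1$ and $\Pi_2$, the explicit form of the robust optimal exposures in \eqref{e17} depends non-trivially on $\phi^S_j$ and $\phi^V_j$, so forcing $\widetilde{\phi}^S_1 = \widetilde{\phi}^V_1 = 0$ (or the symmetric choice for $\Pi_2$) with $\phi^S_1, \phi^V_1 > 0$ produces a strategy that differs from the unique sup-attaining one, again yielding strict inequality.

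Next I would translate the strict inequality $J > J^\Pi$ into the sign of the exponent appearing in $L^\Pi$. Since $\gamma > 1$ makes $\frac{x^{1-\gamma}}{1-\gamma} < 0$, dividing through by this negative factor reverses the inequality, giving
\begin{equation*}
\exp\!\left[H_1(\tau) v_1 + H_2(\tau) v_2 + h(\tau)\right] < \exp\!\left[H_1^\Pi(\tau) v_1 + H_2^\Pi(\tau) v_2 + h^\Pi(\tau)\right],
\end{equation*}
and hence $(H_1^\Pi - H_1) v_1 + (H_2^\Pi - H_2) v_2 + (h^\Pi - h) > 0$ for $v_1, v_2 > 0$. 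Substituting into the wealth-equivalent loss formula displayed just above Proposition \ref{prop4} and using $\frac{1}{1-\gamma} < 0$, the argument of the outer exponential is strictly negative, so $\exp\{\cdots\} < 1$ and therefore $L^\Pi > 0$. This simultaneously handles $L^{\Pi_1}, L^{\Pi_2}$ (whose $H^\Pi_j, h^\Pi$ come from \eqref{e37} with the appropriate $\widetilde{\phi}$ set to zero, via Proposition \ref{prop3}) and $L^{\Pi_3}$ (whose $H^\Pi_j, h^\Pi$ come from the incomplete-market ODEs in Proposition \ref{prop2}, applied within the complete-market penalty functional).

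The hard part is upgrading the pointwise weak inequality to a strict gap that holds uniformly enough to guarantee the strict sign of $(H_1^\Pi - H_1) v_1 + (H_2^\Pi - H_2) v_2 + (h^\Pi - h)$ for all admissible initial data. A clean way to do this is a verification argument: plug the suboptimal value function $J^\Pi$ into the full robust HJB PDE \eqref{HJB0} and compute the residual; by strict concavity of the inner Hamiltonian in $(\beta^S_j, \beta^V_j)$ and the fact that $\Pi_i$ does not deliver the maximizing exposures, this residual is strictly negative on an open set of $(v_1, v_2)$, which via the comparison principle for the associated linear Riccati system forces $H_j^\Pi > H_j$ and $h^\Pi > h$ pointwise in $\tau \in (0, T]$. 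Once this sign comparison for the ODE coefficients is established, the conclusion $L^{\Pi_i} > 0$ follows immediately from the computation above, completing the proof under the standing parameter assumptions $\gamma > 1$ and $\phi^S_j, \phi^V_j > 0$.
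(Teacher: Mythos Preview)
Your approach is correct but takes a genuinely different route from the paper's. The paper works directly at the level of the Riccati ODEs: for $\Pi_1$ it derives a linear first-order ODE for the difference $H_1^{\Pi_1}-H_1$ with forcing term $F(t)$, and then shows by explicit algebra that, once $\widetilde\phi_1^S=\widetilde\phi_1^V=0$ is substituted, $F$ is a sum of two squares and hence strictly positive; this gives $H_1^{\Pi_1}-H_1>0$ pointwise in $\tau$, and $h^{\Pi_1}-h>0$ follows by integrating the $h$-equation. The case $\Pi_2$ is dispatched by symmetry, and $\Pi_3$ by an analogous (shorter) sum-of-squares identity for its own forcing terms $F_j$. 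Your argument instead exploits the variational structure: $J=\sup_\Pi J^\Pi$, the supremum is uniquely attained by the exposures of Proposition~\ref{prop1} thanks to strict concavity of the Hamiltonian in $(\beta^S_j,\beta^V_j)$, and each $\Pi_i$ demonstrably fails to deliver those exposures (nonzero $\beta^V_j$ is unattainable under $\Pi_3$; the myopic terms $\lambda_1/(\gamma+\phi^S_1)$ and $\mu_1/(\gamma+\phi^V_1)$ change when $\phi^S_1,\phi^V_1$ are set to zero under $\Pi_1$), so $J>J^{\Pi_i}$ pointwise and $L^{\Pi_i}>0$ follows. The paper's route is more constructive---it produces an explicit integral representation of the gap, useful if one later wants quantitative bounds on the loss---while yours is shorter, conceptually cleaner, and would transfer unchanged to other specifications of the suboptimal strategy.

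One remark: your fourth paragraph overcomplicates matters. The pointwise strict inequality $J(t,x,v_1,v_2)>J^{\Pi_i}(t,x,v_1,v_2)$ for fixed $(t,v_1,v_2)$ with $t<T$ already yields $L^{\Pi_i}(t,v_1,v_2)>0$ directly from the defining relation \eqref{JJJ} and the strict monotonicity of $x\mapsto J(t,x,v_1,v_2)$; you do not need the uniform coefficient ordering $H_j^\Pi>H_j$, $h^\Pi>h$, and the comparison-principle machinery you sketch is unnecessary for the stated conclusion.
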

\begin{proof}
See Appendix \ref{A5}.
\end{proof}
\subsection{{\textbf{Asset Price with Jump Risks}}}
With jump risks, the asset price follows the dynamics
\begin{equation}\label{S2}
\begin{aligned}
\frac{dS(t)}{S(t-)}&=\left(r+\lambda_1V_1(t)+\lambda_2V_2(t)+j^S(\nu^\mathbb{P}-\nu^\mathbb{Q})(V_1(t)+V_2(t))\right) dt\\
&+\sum_{j=1}^2\sqrt{V_j(t)}dW_{j}(t)+j^S[dN(t)-\nu^\mathbb{P}(V_1(t)+V_2(t))dt],
\end{aligned}
\end{equation}
where $N(t)$ is a Poisson process independent of all Brownian motions with stochastic arrival intensity $\nu^\mathbb{P}(V_1(t)+V_2(t))$. Here the jump size of the Poisson process, $j^S > -1$, is set to be constant. Then the option price dynamics under multi-factor volatility is as follows
\begin{equation}\label{Oi2}
\begin{aligned}
dO^{(i)}(t)&=rO^{(i)}(t)dt+\sum_{j=1}^{2}\left(g^{(i)}_SS+\sigma_j\rho_jg^{(i)}_{V_j}\right)\left(\lambda_jV_jdt+\sqrt{V_j}dW_j(t)\right)\\
&+\sum_{j=1}^{2}\sigma_{j}\sqrt{1-\rho_j^2}g^{(i)}_{V_j}\left(\mu_jV_jdt+\sqrt{V_j}dZ_j(t)\right)+\triangle g^{(i)}\Big[(\nu^\mathbb{P}-\nu^\mathbb{Q})\sum_{j=1}^2V_j(t)+dN(t)-\nu^\mathbb{P}\sum_{j=1}^2V_j(t)dt\Big],\\
\end{aligned}
\end{equation}
where $\triangle g^{(i)}=g^{(i)}((1+j^S)S,V_1,V_2)-g^{(i)}(S,V_1,V_2)$.

{ Thus, we} obtain a new reference model combining multi-factor stock volatility with jump risks in asset price. To make the model more tractable, we assume that the investor is uncertain only about the distribution of Brownian motions. In other words, there is no ambiguity about $N_t$. One additional derivative is needed to hedge the added jump risks. For each perturbation process $\mathbf{e}$, the investor considers the following alternative model, where
the stock price is governed by
\begin{equation}
\begin{aligned}
\frac{dS(t)}{S(t-)}=&\big(r+\sum_{j=1}^{2}[\lambda_jV_j(t)-\sqrt{V_j(t)}e^S_j]+j^S(\nu^\mathbb{P}-\nu^\mathbb{Q})\sum_{j=1}^2V_j(t)\big) dt\\&+\sum_{j=1}^{2}\sqrt{V_j(t)}d\widehat{W}_{j}(t)+j^S\big[dN(t)-\nu^\mathbb{P}\sum_{j=1}^2V_j(t)dt\big],
\end{aligned}
\end{equation}
and at the same time, its variances follow \eqref{VE}
while the option prices of the stock satisfy
\begin{equation}
\begin{aligned}
dO^{(i)}(t)=&rO^{(i)}(t)dt+\hskip -2pt\sum_{j=1}^{2}\left(g^{(i)}_SS+\sigma_j\rho_jg^{(i)}_{V_j}\right)\left((\lambda_jV_j-e^S_j\sqrt{V_j})dt+\sqrt{V_j}d\widehat{W}_j(t)\right)+\sum_{j=1}^{2}\sigma_{j}\sqrt{1-\rho_j^2}g^{(i)}_{V_j}\\
&\left((\mu_jV_j-e^V_j\sqrt{V_j})dt+\sqrt{V_j}d\widehat{Z}_j(t)\right)+\triangle g^{(i)}\big[(\nu^\mathbb{P}-\nu^\mathbb{Q})\sum_{j=1}^2V_j(t)+dN(t)-\nu^\mathbb{P}\sum_{j=1}^2V_j(t)dt\big],\\
\end{aligned}
\end{equation}
where $\triangle g^{(i)}=g^{(i)}((1+{j^S})S,V_1,V_2{)}-g^{(i)}(S,V_1,V_2{)}, \, i = 1,\cdots,4$.

The value function \eqref{max} for this case satisfies the robust HJB PDE:
\begin{equation}\label{HJBJ}
\begin{aligned}
&\sup_{\beta^S_j,\beta^V_j,\beta^N}\inf_{e^S_j,e^V_j}
\bigg\{
J_t+x\big(r+\sum_{j=1}^{2}(\beta^{S}_j\lambda_jv_j-\beta^{S}_j\sqrt{v_j}e^S_j
+\beta^{V}_j\mu_jv_j-\beta^{V}_j\sqrt{v_j}e^V_j)-\beta^Nj^S\nu^\mathbb{Q}(v_1+v_2)\big)J_x\\
&
+\frac{1}{2}x^2\sum_{j=1}^{2}[(\beta^{S}_j)^2+(\beta^{V}_j)^2]v_jJ_{xx}+\frac{1}{2}\sum_{j=1}^{2}\sigma_j^2v_jJ_{v_jv_j}+\sum_{j=1}^{2}\sigma_jv_jx\big(\beta^S_j\rho_j+\beta^V_j\sqrt{1-\rho_j^2}\big)J_{v_jx}+\sum_{j=1}^{2}\frac{(e^S_j)^2}{2\Psi^S_j}+\frac{(e^V_j)^2}{2\Psi^V_j}\\
&+\sum_{j=1}^{2}\big(\kappa_j(\theta_j-v_j)-\rho_j\sigma_j\sqrt{v_j}e^S_j-\sqrt{1-\rho_j^2}\sigma_j\sqrt{v_j}e^V_j\big)J_{v_j}+\nu^\mathbb{P}(v_1+v_2)\triangle J\bigg\}=0,
\end{aligned}
\end{equation}
where $\triangle J=J(t,x(1+\beta^N{j^S}),v_1,v_2{)}-J(t,x,v_1,v_2{)}$.
\begin{proposition}\label{propJ1}
In a complete market, the indirect utility  of an ambiguity and risk averse investor is
\begin{equation}
J(t,x,v_1,v_2)=\frac{x^{1-\gamma}}{1-\gamma}\exp\left(C_1(\tau)v_1+C_2(\tau)v_2+c(\tau)\right),
\end{equation}
where  
\begin{equation}\label{Hh12}
\begin{aligned}
C_j(\tau)&=\frac{2s_j(1-e^{-d_j\tau})}{2d_j+(a_j+d_j)(e^{-d_j\tau}-1)}, \ j =1, 2,\\
c(\tau)&=(1-\gamma)r\tau-\hskip -3pt\sum_{j=1}^{2}\kappa_j\theta_j\hskip -3pt\left[\hskip -2pt\frac{a_j+d_j}{2b_j}\tau+\frac{1}{b_j}\ln\hskip -3pt\left(\frac{e^{-d_j\tau(a_j+d_j)}-a_j+d_j}{2d_j}\right)\right],
\end{aligned}
\end{equation}
with $d_j=\sqrt{a_j^2-4b_js_j}$,
and
\begin{eqnarray*}
\begin{aligned}
a_j=&-\kappa_j+\frac{\lambda_1(1-\gamma-\phi_j^S)\sigma_j\rho_j}{\gamma+\phi_j^S}+\frac{\lambda_2(1-\gamma-\phi_j^V)\sigma_j}{\gamma+\phi_j^V}\sqrt{1-\rho_j^2},\\
b_j=&\frac{\sigma_j^2}{2}-\frac{\phi^S_j\rho_j^2\sigma_j^2}{2(1-\gamma)}-\frac{\phi^V_j(1-\rho_j^2)\sigma_j^2}{2(1-\gamma)}+\frac{(1-\gamma-\phi_j^S)\sigma^2_j\rho^2_j}{2(1-\gamma)(\gamma+\phi_j^S)}+\frac{(1-\gamma-\phi_j^V)^2\sigma^2_j(1-\rho_j^2)}{2(1-\gamma)(\gamma+\phi_j^V)},\\
s_j=&\frac{(1-\gamma)\lambda_1^2}{2(\gamma+\phi^S_j)}+\frac{(1-\gamma)\lambda_2^2}{2(\gamma+\phi^V_j)}+\left((\frac{\nu^\mathbb{P}}{\nu^\mathbb{Q}})^{1/\gamma}\gamma\nu^\mathbb{Q}-(\gamma-1)\nu^\mathbb{Q}-\nu^\mathbb{P}\right).
\end{aligned}
\end{eqnarray*}
The optimal exposures to the risk factors { $N(t)$, $W_j$, $Z_j$ $(j=1,2)$} are $\beta^N=\frac{1}{j^S}\left(\left(\frac{\nu^\mathbb{P}}{\nu^\mathbb{Q}}\right)^{1/\gamma}-1\right)$,
\begin{equation}\label{e99}
\begin{aligned}
\beta^S_j&=\frac{\lambda_j}{\gamma+\phi^S_j}+\frac{(1-\gamma-\phi^S_j)\sigma_j\rho_j}{(1-\gamma)(\gamma+\phi^S_j)}C_j(\tau),
\quad \beta^V_j=\frac{\mu_j}{\gamma+\phi^V_j}+\frac{(1-\gamma-\phi^V_j)\sigma_j\sqrt{1-\rho_j^2}}{(1-\gamma)(\gamma+\phi^V_j)}C_j(\tau). 
\end{aligned}
\end{equation}
The worst-case measure are
\begin{equation}
e^S_j=\left(\frac{\lambda_j}{\gamma+\phi^S_j}+\frac{\sigma_j\rho_j C_j(\tau)}{(1-\gamma)(\gamma+\phi^S_j)}\right)\phi^S_j\sqrt{v_j},\quad e^V_j=\left(\frac{\mu_j}{\gamma+\phi^V_j}+\frac{\sigma_j\sqrt{1-\rho_j^2} C_j(\tau)}{(1-\gamma)(\gamma+\phi^V_j)}\right)\phi^V_j\sqrt{v_j}.
\end{equation}
\end{proposition}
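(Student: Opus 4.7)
The plan is to mimic the approach from Proposition 3.1, but with the extra jump risk channel handled separately. I would start by conjecturing an exponential affine ansatz $J(t,x,v_1,v_2)=\frac{x^{1-\gamma}}{1-\gamma}\exp\bigl(C_1(\tau)v_1+C_2(\tau)v_2+c(\tau)\bigr)$ with terminal conditions $C_1(0)=C_2(0)=c(0)=0$, and substitute it into the HJB equation \eqref{HJBJ}. Because $J$ is multiplicatively separable in $x$, the jump increment simplifies to $\triangle J = J\bigl((1+\beta^N j^S)^{1-\gamma}-1\bigr)$, so the dependence of the HJB on $\beta^N$ decouples from the dependence on $\beta^S_j$, $\beta^V_j$ and the perturbations $e^S_j,e^V_j$.

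Next I would carry out the inner minimization and outer maximization in stages. The first-order condition in $\beta^N$ yields $(1+\beta^N j^S)^{-\gamma}\nu^{\mathbb{P}}=\nu^{\mathbb{Q}}$, which solves in closed form to the advertised $\beta^N=\frac{1}{j^S}\bigl((\nu^{\mathbb{P}}/\nu^{\mathbb{Q}})^{1/\gamma}-1\bigr)$; substituting back produces the jump contribution $\bigl((\nu^{\mathbb{P}}/\nu^{\mathbb{Q}})^{1/\gamma}\gamma\nu^{\mathbb{Q}}-(\gamma-1)\nu^{\mathbb{Q}}-\nu^{\mathbb{P}}\bigr)(v_1+v_2)J$ to the equation, which is exactly the extra constant driving $s_j$. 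For the remaining variables, the Brownian part of \eqref{HJBJ} is structurally identical to \eqref{HJB0}, so the FOCs in $e^S_j,e^V_j$ and in $\beta^S_j,\beta^V_j$ reproduce the expressions in \eqref{opte} and \eqref{e17} with $H_j$ replaced by $C_j$; here I would use $J_{v_j}/J=C_j(\tau)$, $J_x/J=(1-\gamma)/x$ and $J_{xx}/J_x=-\gamma/x$ to convert derivative ratios into the $C_j$ and $\gamma$ combinations appearing in $a_j$ and $b_j$.

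Once all controls are expressed in feedback form, I would plug them back into the HJB, collect coefficients of $v_1$, $v_2$, and the constant term, and match to obtain a decoupled system: two Riccati ODEs $\dot C_j=s_j+a_j C_j+b_j C_j^2$ with $C_j(0)=0$, plus an auxiliary ODE $\dot c=(1-\gamma)r+\sum_j\kappa_j\theta_j C_j$ with $c(0)=0$. The Riccati equations integrate to the closed form stated in \eqref{Hh12} with discriminant $d_j=\sqrt{a_j^2-4b_js_j}$, and a direct integration of $\dot c$ against the explicit $C_j$ reproduces the logarithmic expression for $c(\tau)$ (this is a routine computation of the same kind as the one underlying \eqref{Hh1}).

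The main obstacle I anticipate is bookkeeping rather than conceptual: ensuring that the jump-induced change in $s_j$ is distributed symmetrically to both Riccati coefficients $s_1$ and $s_2$ (since $\triangle J$ is proportional to $v_1+v_2$, not to a single $v_j$), and verifying that admissibility conditions analogous to Proposition 3.2 can be imposed so that the candidate $J$ is indeed a classical solution and the Radon–Nikodym derivative associated with the worst-case measure $(e^S_j,e^V_j)$ defines a genuine probability measure. With those checks in place, a standard verification argument using the dynamic-programming principle on $\mathcal{U}[0,T]$, together with the uniqueness of the equivalent martingale measure in the complete market, closes the proof.
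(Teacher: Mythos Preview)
Your proposal is correct and follows essentially the same route as the paper, which simply states that the argument is ``similar to that of Proposition~\ref{prop1}'' and refers to the arXiv version for details. Your separate treatment of the $\beta^N$ first-order condition and the resulting additive jump contribution to $s_1,s_2$ is exactly the extra ingredient needed on top of the Appendix~\ref{A1} computation, and the rest of your plan reproduces that appendix verbatim with $H_j$ replaced by $C_j$.
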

\begin{proof}
The proof, similar to that of Proposition \ref{prop1}, is omitted here with the full proof at arXiv:1910.06872.
\end{proof}
\begin{proposition}\label{propJ2}
In an incomplete market, the indirect utility of an ambiguity and risk averse investor is \begin{equation}
J=\frac{x^{1-\gamma}}{1-\gamma}\exp\left(\bar{C}_1(\tau)v_1+\bar{C}_2(\tau)v_2+\bar{c}(\tau)\right),
\end{equation}
where functions $\bar{C}_j$ and $\bar{c}$ can be derived as similar forms to those in Proposition \ref{prop2}. The optimal exposures to the stock risk factors $W_j$ $(j=1,2)$ are
\begin{equation}
\begin{aligned}
\beta^S_j&=\frac{\lambda_j}{\gamma+\phi^S_j}+\frac{(1-\gamma-\phi^S_j)\sigma_j\rho_j}{(1-\gamma)(\gamma+\phi^S_j)}\bar{C}_j(\tau).
\end{aligned}
\end{equation}
The worst-case measures are
\begin{equation}
\begin{aligned}
e^S_j=\left(\beta^S_j+\frac{\sigma_j\rho_j \bar{C}_j(\tau)}{1-\gamma}\right)\phi^S_j\sqrt{v_j},
\quad e^V_j=\frac{\sigma_j\sqrt{1-\rho_j^2} \bar{C}_j(\tau)}{1-\gamma}\phi^V_j\sqrt{v_j}.
\end{aligned}
\end{equation}
\end{proposition}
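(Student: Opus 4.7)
The plan is to transport the HJB--ansatz argument of Proposition~\ref{prop2} to the jump-enlarged dynamics of \eqref{S2}--\eqref{Oi2}. First I would specialise the wealth equation to the no-derivative case ($\pi^i=0$), so that under the candidate alternative measure
$$\frac{dX(t)}{X(t-)} = r\,dt + \pi^S\sum_{j=1}^{2}\bigl[(\lambda_j v_j-\sqrt{v_j}\,e^S_j)dt+\sqrt{v_j}\,d\widehat{W}_j(t)\bigr] + \pi^S j^S\bigl[dN(t)-\nu^{\mathbb{Q}}(v_1+v_2)\,dt\bigr],$$
and read off the robust HJB PDE obtained from \eqref{HJB02} by retaining $\sup_{\pi^S}\inf_{e^S_j,e^V_j}$, adjusting the $xJ_x$ drift by $\pi^S j^S(\nu^{\mathbb{P}}-\nu^{\mathbb{Q}})(v_1+v_2)$, and appending the non-local jump term $\nu^{\mathbb{P}}(v_1+v_2)\bigl[J(t,x(1+\pi^S j^S),v_1,v_2)-J(t,x,v_1,v_2)\bigr]$.

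Next I insert the exponential-affine ansatz $J=\frac{x^{1-\gamma}}{1-\gamma}\exp(\bar C_1(\tau)v_1+\bar C_2(\tau)v_2+\bar c(\tau))$, under which $\Delta J = J[(1+\pi^S j^S)^{1-\gamma}-1]$ and every partial derivative of $J$ factorises as $J$ times an elementary function of $(\bar C_j,\tau,x)$. The inner infimum over $(e^S_j,e^V_j)$ is then a decoupled pair of scalar quadratic problems; using $\Psi^S_j=\phi^S_j/((1-\gamma)J)$, $\Psi^V_j=\phi^V_j/((1-\gamma)J)$, $J_x=(1-\gamma)J/x$ and $J_{v_j}=\bar C_j J$, the FOCs give
$$e^S_j = \phi^S_j\sqrt{v_j}\Bigl(\pi^S+\frac{\sigma_j\rho_j\bar C_j(\tau)}{1-\gamma}\Bigr),\qquad e^V_j = \frac{\phi^V_j\sigma_j\sqrt{1-\rho_j^2}\,\bar C_j(\tau)}{1-\gamma}\sqrt{v_j},$$
which reproduce the stated worst-case measures once $\pi^S$ is identified with $\beta^S_j$.

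The outer supremum over $\pi^S$, after substituting these minimisers, yields a scalar FOC whose diffusive piece is algebraically identical to the one in Proposition~\ref{prop2}; under the same single- or equal-factor reduction ($W_1$ or $W_2$ alone, or $W_1=W_2$) that FOC collapses to a scalar equation with explicit solution $\beta^S_j = \lambda_j/(\gamma+\phi^S_j) + (1-\gamma-\phi^S_j)\sigma_j\rho_j\bar C_j(\tau)/((1-\gamma)(\gamma+\phi^S_j))$. Re-inserting $(\pi^S,e^S_j,e^V_j)$ into the HJB and matching the coefficients of $v_1, v_2$ and of the $v$-independent remainder produces a pair of Riccati ODEs for $\bar C_1,\bar C_2$ and a first-order ODE for $\bar c$, structurally analogous to those in Proposition~\ref{prop2} but with constants adjusted by the jump compensator and by the non-local term $(1+\beta^S_j j^S)^{1-\gamma}-1$; the terminal conditions $\bar C_j(0)=\bar c(0)=0$ come from $J(T,\mathbf{y})=x^{1-\gamma}/(1-\gamma)$.

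The main obstacle is the non-local contribution $(1+\pi^S j^S)^{1-\gamma}$ to the $\pi^S$ FOC, which generically couples the two $v_j$-factors and forbids a closed-form $\beta^S_j$. The single-/equal-factor reduction borrowed from Proposition~\ref{prop2} is the key device that decouples this FOC and lets the jump contribution be absorbed entirely into the ODE coefficients for $\bar C_j$ and $\bar c$, leaving the algebraic form of $\beta^S_j$ and of the worst-case measures unchanged from the pure-diffusion incomplete-market case.
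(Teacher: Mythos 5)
The paper gives no written proof of this proposition (it is deferred to the arXiv version), so the only meaningful check is whether your reconstruction actually establishes the stated formulas. Your setup --- the no-derivative wealth dynamics with the $\nu^{\mathbb{Q}}$-compensated jump term, the robust HJB carrying the non-local term $\nu^{\mathbb{P}}(v_1+v_2)\triangle J$, the exponential-affine ansatz, and the inner minimisation yielding $e^S_j=\phi^S_j\sqrt{v_j}\bigl(\pi^S+\sigma_j\rho_j\bar C_j/(1-\gamma)\bigr)$ and $e^V_j=\phi^V_j\sqrt{v_j}\,\sigma_j\sqrt{1-\rho_j^2}\,\bar C_j/(1-\gamma)$ --- is the correct analogue of the argument in Appendix C for Proposition \ref{prop2} and does reproduce the stated worst-case measures.

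The gap is in the outer maximisation. You correctly observe that $\triangle J=J\bigl[(1+\pi^S j^S)^{1-\gamma}-1\bigr]$ contributes a term proportional to $\nu^{\mathbb{P}}(v_1+v_2)\,j^S(1+\pi^S j^S)^{-\gamma}$ to the first-order condition for $\pi^S$, but the single-/equal-factor reduction does not remove it: that reduction only resolves the cross-factor coupling (one control $\pi^S$ serving as exposure to both $W_1$ and $W_2$); it does nothing about the jump--diffusion coupling, since the same $\pi^S$ is also the exposure to $N$. After the reduction the FOC still reads, schematically, $\lambda_j-(\gamma+\phi^S_j)\pi^S+(\cdots)\bar C_j+j^S\bigl[\nu^{\mathbb{P}}(1+\pi^S j^S)^{-\gamma}-\nu^{\mathbb{Q}}\bigr]=0$, which is transcendental in $\pi^S$ and is not solved by the stated linear expression unless the bracket vanishes --- i.e.\ unless $\pi^S$ happens to coincide with the complete-market jump exposure $\frac{1}{j^S}\bigl((\nu^{\mathbb{P}}/\nu^{\mathbb{Q}})^{1/\gamma}-1\bigr)$, or $\nu^{\mathbb{P}}=\nu^{\mathbb{Q}}$ (cf.\ Corollary \ref{cor31}). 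In Proposition \ref{propJ1} this decoupling is achieved because a fourth derivative provides the independent control $\beta^N$ of the jump exposure; in the incomplete market no such instrument exists, so your closing assertion that the jump contribution is ``absorbed entirely into the ODE coefficients, leaving the algebraic form of $\beta^S_j$ unchanged'' is precisely the step that is not justified, and it is where the argument breaks. A corrected treatment must either characterise $\pi^S$ implicitly through the transcendental FOC (with the ODE for $\bar C_j$ then no longer of Riccati type), or invoke an additional assumption that kills the jump term in the FOC.
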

\begin{proof}
The proof,  similar to that of Proposition \ref{prop2}, is omitted here with the full proof at arXiv:1910.06872.
\end{proof}
\begin{corollary}\label{cor31}
The utility loss from ignoring jump risk is strictly positive if $\nu^\mathbb{P}\neq \nu^\mathbb{Q}$ and is zero otherwise. In other words,  $\nu^\mathbb{P}= \nu^\mathbb{Q}$ implies that the suboptimal strategy of ignoring jump risk becomes optimal, and thus nullifies the welfare loss.
\end{corollary}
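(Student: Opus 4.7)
My proof plan is to interpret the strategy of ``ignoring jump risk'' as the constrained policy that sets the wealth's jump exposure to zero, $\beta^N=0$, and to contrast it with the explicit optimum from Proposition \ref{propJ1}. The whole argument is organized around the single identity
$$\beta^{N,*}=\frac{1}{j^S}\left[(\nu^{\mathbb{P}}/\nu^{\mathbb{Q}})^{1/\gamma}-1\right],$$
so that $\beta^{N,*}=0$ if and only if $\nu^{\mathbb{P}}=\nu^{\mathbb{Q}}$; this one fact will drive both halves of the corollary.

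If $\nu^{\mathbb{P}}=\nu^{\mathbb{Q}}=:\nu$, I would substitute directly into the formula for $s_j$ in Proposition \ref{propJ1}: the jump bracket $(\nu^{\mathbb{P}}/\nu^{\mathbb{Q}})^{1/\gamma}\gamma\nu^{\mathbb{Q}}-(\gamma-1)\nu^{\mathbb{Q}}-\nu^{\mathbb{P}}$ collapses to $\gamma\nu-(\gamma-1)\nu-\nu=0$, so the triples $(s_j,C_j,c)$ in Proposition \ref{propJ1} reduce to their no-jump counterparts in Proposition \ref{prop1}. Since $\beta^{N,*}=0$ coincides with the ``ignore jump risk'' choice, the full optimum and the constrained strategy agree, $J^{\Pi_{\text{NJ}}}\equiv J$, and the defining equation \eqref{JJJ} yields $L^{\Pi_{\text{NJ}}}=0$.

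If $\nu^{\mathbb{P}}\neq\nu^{\mathbb{Q}}$, I would invoke the verification principle on the HJB \eqref{HJBJ}. Setting $\beta^N=0$ makes the jump block vanish, since $\Delta J=J(t,x,v_1,v_2)-J(t,x,v_1,v_2)=0$ and the cross term $-\beta^N j^S\nu^{\mathbb{Q}}(v_1+v_2)J_x$ is killed; the constrained HJB therefore reduces to the no-jump HJB of Proposition \ref{prop1}, whose value function is obtained from that of Proposition \ref{propJ1} by deleting precisely the jump bracket inside $s_j$. On the other hand, the jump block is strictly concave in $\beta^N$, because $\partial_{\beta^N}^{2}\Delta J=-\gamma\, x^{2}(j^S)^{2}(1+\beta^N j^S)^{-\gamma-1}\exp(\cdots)<0$ and the remaining $\beta^N$-term is linear, so the maximizer $\beta^{N,*}\neq 0$ is unique and strictly dominates the choice $\beta^N=0$. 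Hence $J^{\Pi_{\text{NJ}}}<J$, and \eqref{JJJ} forces $L^{\Pi_{\text{NJ}}}>0$.

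The main obstacle, if one preferred a purely computational proof, is to convert the strict pointwise ordering of the Hamiltonians into a strict ordering of the Riccati-type coefficients $C_j(\tau)$, $c(\tau)$ and hence of $J$ itself. The natural route would go through the auxiliary function $h(r)=\gamma r^{1/\gamma}-r-(\gamma-1)$ with $r=\nu^{\mathbb{P}}/\nu^{\mathbb{Q}}$: one checks $h(1)=h'(1)=0$ and $h''(r)=\tfrac{1-\gamma}{\gamma}r^{(1-2\gamma)/\gamma}<0$ for $\gamma>1$, so $h$ is strictly concave with maximum zero and the jump bracket has a definite sign whenever $r\neq 1$. Propagating that sign through the Riccati coefficients is straightforward but tedious, which is why I would close the argument via the strict-concavity/verification route described above.
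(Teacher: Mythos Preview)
Your proposal is correct. The paper states Corollary \ref{cor31} without proof, treating it as an immediate consequence of Proposition \ref{propJ1}, so there is no ``paper's proof'' to compare against line by line. Your argument supplies what the paper omits: you correctly identify that setting $\beta^N=0$ kills both the $-x\beta^N j^S\nu^{\mathbb Q}(v_1+v_2)J_x$ term and the $\nu^{\mathbb P}(v_1+v_2)\Delta J$ term in \eqref{HJBJ}, so the constrained HJB collapses to \eqref{HJB0}; and you correctly observe that the jump bracket in $s_j$ vanishes precisely when $\nu^{\mathbb P}=\nu^{\mathbb Q}$, which pins down the ``if and only if''.

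Two small remarks. First, in your second-derivative computation you wrote $x^{2}$; with $J=\frac{x^{1-\gamma}}{1-\gamma}\exp(\cdots)$ one gets $\partial_{\beta^N}^{2}\Delta J=-\gamma\,x^{1-\gamma}(j^S)^{2}(1+\beta^N j^S)^{-\gamma-1}\exp(\cdots)$, not $x^{2}$. The sign is unaffected, so the concavity conclusion stands. Second, your ``verification principle'' step is slightly informal: the strict Hamiltonian inequality is between two different controls, but each side is evaluated at a different candidate value function. The clean way to close this is exactly the Riccati route you sketch at the end, and it is the one most consonant with the paper's own machinery (compare the proof of Proposition \ref{prop4} in Appendix \ref{A5}, which reduces the utility-loss sign to an ODE comparison for $H_j^{\Pi}-H_j$). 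Your function $h(r)=\gamma r^{1/\gamma}-r-(\gamma-1)$ with $h(1)=h'(1)=0$ and $h''<0$ gives $s_j<c_j$ for $r\neq 1$, and then the monotone dependence of $C_j(\tau)$ and $c(\tau)$ on the constant term in the Riccati equation (the analogue of the $F$-sign argument in Appendix \ref{A5}) yields $L^{\Pi_{\mathrm{NJ}}}>0$ directly. Either closure is acceptable; the Riccati one is just closer in spirit to how the paper handles the other suboptimal strategies.
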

\section{Correlated Volatility Processes}\label{Sec 4}
In this section, we extend our model to the case where  the two volatility { factors} are correlated.  This is an often occurring phenomenon which has been verified in the literature \cite{Grzelak09,Grzelak11,Grzelak12,Yang17}. Assuming $\langle dW_1(t),dW_2(t)\rangle=\rho dt$, we obtain a new robust PDE with respect to $X$, $V_1$ and $V_2$ after solving the robust optimization problem \eqref{max}. However, we encounter difficulty in solving the new robust PDE, as the correlation between $W_1$ and $W_2$ adds a non-affine term $\sqrt{V_1V_2}$:
\begin{equation}\label{HJBC}
\begin{aligned}
\sup_{\beta^S_j,\beta^V_j}&\inf_{e^S_j,e^V_j}
\bigg\{
J_t+x\big(r+\sum_{j=1}^{2}(\beta^{S}_j\lambda_jv_j-\beta^{S}_j\sqrt{v_j}e^S_j
+\beta^{V}_j\mu_jv_j-\beta^{V}_j\sqrt{v_j}e^V_j)\big)J_x+\frac{1}{2}
\sum_{j=1}^{2}\sigma_j^2v_jJ_{v_jv_j}
\\
&
+\sum_{j=1}^{2}\big(\kappa_j(\theta_j-v_j)-\rho_j\sigma_j
\sqrt{v_j}e^S_j-\sqrt{1-\rho_j^2}
\sigma_j\sqrt{v_j}e^V_j\big)J_{v_j}+\frac{1}{2}x^2\sum_{j=1}^{2}[(\beta^{S}_j)^2+(\beta^{V}_j)^2]v_jJ_{xx}\\
&+\sum_{j=1}^{2}\sigma_jv_jx\big(\beta^S_j\rho_j+\beta^V_j \sqrt{1-\rho_j^2}\big)J_{v_jx}
+\rho\rho_1\rho_2\sigma_1\sigma_2\sqrt{v_1v_2}J_{v_1v_2}+\sum_{j=1}^{2}\frac{(e^S_j)^2}{2\Psi^S_j}+\frac{(e^V_j)^2}{2\Psi^V_j}\bigg\}=0.
\end{aligned}
\end{equation}
Inspired by the ideas of Grzelak et al. \cite{Grzelak09,Grzelak11,Grzelak12}, we develop an approximation method to solve the PDE, hence the optimal problem. For easiness of the readers, we recap some of the necessary steps in  \cite{Grzelak09} here.
For the mean-reverting variance processes in \eqref{V0} we have {$V_j(t)=c_j(t){\chi^2}(d_{j},\lambda_j(t)), \ j =1, 2$,}
where $\chi^2(d_j,\lambda_j(t))$ is a non-central chi-squared random variable with the degree of freedom parameter  $d_j$ and non-centrality parameter $\lambda_j(t)$, and
$$c_j(t)=\frac{1}{4\kappa_j}\sigma_j^2(1-e^{-\kappa_j(t)}),
\quad
d_j=\frac{4\kappa_j\theta_j}{\sigma_j^2},
\quad
\lambda_j(t)=\frac{4\kappa_je^{-\kappa_jt}V_j(0)}{\sigma_j^2(1-e^{-\kappa_jt})}.
$$
The expectation and variance of $\sqrt {V_j}$ can be found as (Lemma 3.1 \cite{Grzelak09})
\begin{equation}\label{exp1}
\begin{aligned}
\mathrm{E}\hskip -3pt\left(\hskip -3pt\sqrt{V_j(t)}\right) & =  \sqrt{2c_j(t)}e^{-\lambda_j(t)/2}\sum_{k=0}^{\infty}\frac{1}{k!}\left(\frac{\lambda_j(t)}{2}\right)^{k_j}\frac{\Gamma(\frac{d_j+1}{2}+k)}{\Gamma(\frac{d_j}{2}+k)}, \\
 \mathrm{Var}\hskip -3pt\left(\hskip -3pt\sqrt{V_j(t)}\right)& =  c_j(t)(d_j+\lambda_j(t))-2c_j(t)e^{-\lambda_j(t)}\hskip -3pt\Big[\hskip -3pt\sum_{k=0}^{\infty}\frac{1}{k!}\Big(\frac{\lambda_j(t)}{2}\Big)^{k_j}\hskip -4pt\frac{\Gamma(\frac{d_j+1}{2}+k)}{\Gamma(\frac{d_j}{2}+k)}\Big]^2,
\end{aligned}
\end{equation}
where
$\displaystyle \Gamma(z)=\int_{0}^{\infty}t^{z-1}e^{-t}dt$.

{
The dynamics $d\sqrt{V_j}$, $j=1$, $2$, can be approximated by (Lemma 3.4 \cite{Grzelak09}):
\begin{equation}\label{Uj}
dU_j(t)=\mu_j^U(t)dt+\psi^U_j(t)\hskip -3pt\left(\rho_j dW_{j}(t)+\hskip -3pt\sqrt{1-\rho_j^2} dZ_j(t)\right), \  U_j(0)=\sqrt{V_j(0)}>0.
\end{equation}
The deterministic time-dependent drift $\mu_j^U(t)$ and volatility $\psi_j^U(t)$ are} given by
\begin{equation*}
\begin{aligned}
\mu^U_j(t)=&\frac{1}{2\sqrt{2}}\frac{\Gamma(\frac{d_j+1}{2})}{\sqrt{c_j(t)}}\bigg(\frac{1}{2}\sigma_j^2e^{-\kappa_jt}\widetilde{F}\left(-\frac{1}{2},\frac{d}{2},-\frac{\lambda_j(t)}{2}\right)
-\frac{4c_j(t)\kappa_j^2e^{\kappa_jt}V_j(0)}{\sigma_j^2(e^{\kappa_jt}-1)^2}\widetilde{F}\left(\frac{1}{2},\frac{d+1}{2},-\frac{\lambda_j(t)}{2}\right)\bigg), \\
\psi^U_j(t)=&\left(\frac{\sigma_j^2e^{-\kappa_jt}}{4}(d_j+\lambda_j(t))
-\frac{4c_j(t)\kappa_j(t)e^{\kappa_jt}V_j(0)}{\sigma_j^2(e^{\sigma_jt}-1)^2}
-2\mathrm{E}(V_j(t))\mu^U_j(t)\right)^{1/2},
\end{aligned}
\end{equation*}
where $\mathrm{E}(V_j(t))$ is given by \eqref{exp1} and the regularized hypergeometric function $\widetilde{F}(a;b;z)=F(a;b;c)/\Gamma(b)$.

Note that the accuracy and convergency of the approximation in \eqref{Uj} { are} well tested in \cite{Grzelak09}, and the order of this type of approximation and the error estimates can be found in \cite{Johnson94}. Hence,  we can construct an approximation, with confidence, by introducing $Y(t)=U_1(t)U_2(t)\approx \sqrt{V_1(t)V_2(t)}$ such that
\begin{equation}\label{Y}
\begin{aligned}
dY(t)=&\mu^Ydt+U_1(t)\psi^U_2(t)\left(\rho_2 dW_{2}(t)+\sqrt{1-\rho_2^2} dZ_2(t)\right)+U_2(t)\psi^U_1(t)\left(\rho_1 dW_{1}(t)+\sqrt{1-\rho_1^2} dZ_1(t)\right),
\end{aligned}
\end{equation}
where $\mu^Y=\mu^U_1(t)U_2(t)+\mu^U_2(t)U_1(t)+\rho\psi^U_1(t)\psi^U_2(t).$

Equations \eqref{S1} - \eqref{Oi}, {\eqref{Uj} and \eqref{Y}} form the new reference model.  Allowing the perturbation process $\mathbf{e}$, the corresponding alternative model consists of Equations \eqref{SE}-\eqref{OE} with the following affine correction processes
\begin{equation*}\label{Ye}
\begin{aligned}
dU_j(t) =&\mu_j^U(t)dt+\psi^U_j(t)\left[\rho_j (d\widetilde{W}_{j}(t)-e^S_j(t))+\sqrt{1-\rho_j^2} (d\widetilde{Z}_j(t)-e^V_j(t))\right],\ U_j(0)=\sqrt{V_j(0)}>0,\\
dY(t)=&\mu^Ydt+U_1(t)\psi^U_2(t)\left[\rho_2 (d\widetilde{W}_{2}(t)-e^S_2(t))+\sqrt{1-\rho_2^2} (d\widetilde{Z}_2(t)-e^V_2(t))\right]\\
&+U_2(t)\psi^U_1(t)\left[\rho_1 (d\widetilde{W}_{1}(t)-e^S_1(t))+\sqrt{1-\rho_1^2} (d\widetilde{Z}_1(t)-e^V_1(t))\right],\ Y(0)=\sqrt{V_1(0)V_2(0)}>0.
\end{aligned}
\end{equation*}
We now introduce a new process $\textbf{R}(s)=(X(s),V_1(s),V_2(s),U_1(s),U_2(s),Y(s))$.  The expected utility achieved by a trading strategy $\Pi$ is given by
\begin{equation}\label{max1}
w^\mathbf{e}(t,\textbf{r};\Pi)=\frac{1}{1-\gamma}\mathrm{E}^{\mathbb{P}^\mathbf{e}}_{t,\textbf{r}}\left[(X_T)^{1-\gamma}\right],
\end{equation}
where  $\textbf{r}=(x,v_1,v_2,u_1,u_2,y)$ denotes the value of $\textbf{R}(t)$ at time $t$.
The indirect utility  of the investor is
\begin{equation}\label{max3}
J(t,\textbf{r})=\sup_{\Pi}\inf_{\mathbf{e}}\hskip -3pt
\Big(w^\mathbf{e}(t,\textbf{r};\Pi)+\mathrm{E}^{\mathbb{P}^\mathbf{e}}_{t,\textbf{r}}\Big[\int_{t}^{T}\sum_{j=1}^2\frac{(e^S_j(s))^2}{2\Psi^S_j(s,Y)}+\frac{(e^V_j(s))^2}{2\Psi^V_j(s,Y)}ds
\Big]\Big).
\end{equation}
In the complete market, the value function \eqref{max3} satisfies the robust HJB PDE \eqref{HJBC} as in Appendix \ref{sec4}. { Assuming that the value function $J$ is of the following affine-form
\begin{equation}\label{JC}
J(t,x,v_1,v_2,u_1,u_2,y)=\frac{x^{1-\gamma}}{1-\gamma}\exp\Big(\sum_{j=1}^{2}[H^V_j(T-t)v_j+H^U_j(T-t)u_j]+H^Y(T-t)y+\hat h\Big),
\end{equation}
solving the} optimal problem \eqref{HJBC} with respect to $e^S_1$, $e^S_2$, $e^V_1$, $e^V_2$, we obtain
\begin{equation}\label{ec}
\begin{aligned}
(e^S_1)^*&=\Psi^S_1(x\beta^S_1u_1J_x+\rho_1\sigma_1u_1J_{v_1}+\rho_1\psi^U_1J_{u_1}-\rho_1\psi^U_1u_2J_y),\\
(e^S_2)^*&=\Psi^S_2(x\beta^S_2u_2J_x+\rho_2\sigma_2u_2J_{v_2}+\rho_1\psi^U_1J_{u_2}-\rho_2\psi^U_2u_1J_y),\\
(e^V_1)^*&=\Psi^V_1(x\beta^V_1u_1J_x+\sqrt{1-\rho_1^2}\sigma_1u_1J_{v_1}-\sqrt{1-\rho_1^2}\psi_1^UJ_{u_1}+\sqrt{1-\rho_1^2}\psi_1^Uu_2J_y),\\
(e^V_2)^*&=\Psi^V_2(x\beta^V_2u_2J_x+\sqrt{1-\rho_2^2}\sigma_2u_2J_{v_2}-\sqrt{1-\rho_2^2}\psi^U_2J_{u_2}+\sqrt{1-\rho_2^2}\psi^U_2u_1J_y).
\end{aligned}
\end{equation}
Substituting \eqref{ec} into \eqref{HJBC} and choosing $ \Psi^S_j=\frac{\phi^S_j}{(1-\gamma)J}$ and $\Psi^V_j=\frac{\phi^V_j}{(1-\gamma)J}$,
we further derive
\begin{equation}\label{bc1}
\begin{aligned}
(\beta^S_1)^*&=\frac{1}{\phi^S_1-\gamma}
\big[\lambda_1+\sigma_1\rho_1H^{V_1}+\frac{1}{u_1}\rho\rho_2\psi^U_2H^{U_2}+(\rho\rho_2\psi^U_2+\rho_1\frac{u_2}{u_1}\psi^U_1)H^Y\\
&\quad-\frac{\psi_1^S}{1-\gamma}(\rho\sigma_1H^{V_1}+\frac{1}{u_1}\rho_1\psi^U_1H^{U_1}-\frac{u_2}{u_1}\rho\psi^U_1H^Y)\big]\\
&\mathbf{=:}a^1_1(t)+a^1_2(t)\frac{1}{u_1}+a^1_3(t)\frac{u_2}{u_1},\\
(\beta^S_2)^*&=\frac{1}{\phi^S_2-\gamma}
\big[\lambda_2+\sigma_2\rho_2H^{V_2}+\frac{1}{u_2}\rho\rho_1\psi^U_1H^{U_1}+(\rho\rho_1\psi^U_1+\rho_2\frac{u_1}{u_2}\psi^U_1)H^Y\\
&\quad-\frac{\psi^S_2}{1-\gamma}(\rho\sigma_1H^{V_2}+\frac{1}{u_2}\rho_2\psi^U_2H^{U_2}-\frac{u_1}{u_2}\rho\psi^U_2H^Y)\big]\\
&\mathbf{=:}a^2_2(t)+a^2_2(t)\frac{1}{u_2}+a^2_3(t)\frac{u_1}{u_2},\\
(\beta^V_1)^*&=\frac{1}{\phi^V_1-\gamma}\big[\mu_1+\sqrt{1-\rho_1^2}\psi^U_1H^Y-\frac{\psi_1^V\sqrt{1-\rho_1^2}}{1-\gamma}(\sigma_1H^{V_1}-\psi^U_1\frac{1}{u_1}H^{U_1}\\
&-\psi^U_1\frac{u_2}{u_1}H^Y)\big]\mathbf{=:}b^1_1(t)+b^1_2(t)\frac{1}{u_1}+b^1_3(t)\frac{u_2}{u_1},\\
(\beta^V_2)^*&=\frac{1}{\phi^V_2-\gamma}\big[\mu_2+\sqrt{1-\rho_2^2}\psi^U_2H^Y-\frac{\psi_2^V\sqrt{1-\rho_2^2}}{1-\gamma}(\sigma_2H^{V_2}-\psi^U_2\frac{1}{u_2}H^{U_2}\\
&-\psi^U_2\frac{u_1}{u_2}H^Y)\big]\mathbf{=:}b^2_1(t)+b^2_2(t)\frac{1}{u_2}+b^2_3(t)\frac{u_1}{u_2}.\\
\end{aligned}
\end{equation}
Combining \eqref{JC} - \eqref{bc1},  we obtain for $j=1,2$,
\begin{equation}\label{ec2}
\left\{
\begin{aligned}
(e^S_j)^*&=g^j_1(t)u_1+g^j_2(t)u_2+g^j_3(t),\\
(e^V_j)^*&=k^j_1(t)u_1+k^j_2(t)u_2+k^j_3(t)
\end{aligned}
\right.
\end{equation}
Plugging \eqref{JC} - \eqref{ec2} into the robust HJB \eqref{HJBC}, and collecting terms with respect to $v_j$,  $u_j$ and $y$ we have the following simpler HJB:
\begin{equation}
0=J_t+p^{V}_1(t)v_1J+p^{V}_2(t)v_2J+p^{U}_1(t)u_1J+p^{U}_2(t)u_2J+p^Y(t)yJ+C(t),
\end{equation}
which leads to the following ODEs for the solutions of $H^{V}_J$, $H^{U}_j$, $H^Y$ and $\hat h$:
\begin{equation}\label{fode}
(H^{V}_j)'=-p^{V}_j(t), \ (H^{U}_j)'=-p^{U}_j(t), \ (H^{Y})'=-p^{Y}(t), \ \hat h'=-C(t).
\end{equation}
Thus, the optimal exposures and the worst-case measures can be derived analytically.

{ It is worth noting} that the correlation between volatility processes  affects both worst-case measures and optimal exposures. As mentioned before the optimal exposures to the stock and additional multi-factor volatility risks consist of myopic and hedge components. Each hedge component is dependent on the correlated volatility structure.
The worst-case measures to the stock price and volatility also depend on the correlated volatility structure.

\section{Numerical Experiments}

In this section, numerical experiments are carried out to examine the behavior of the optimal portfolio and the utility losses. In our calculations,  parameter  are set as those in some of our references: the investment horizon is $T=10$ (years), the risk-free interest rate  $r=0.05$ and the risk aversion parameters $\gamma=4$ \cite{Flor14, Escobar15};  $j^S=-15\%$, $\nu^\mathbb{P}=0.1$ and $\nu^\mathbb{Q}=0.3$ \cite{Branger13, Escobar15};    $V_1(0)=0.2^2$, $\kappa_1=3$, $\theta_1=0.1^2$, $\rho_1=-0.7$,
$\sigma_1=0.25$, $\lambda_1=3$, $\mu_1=-3$, $V_2(0)=0.01^2$, $\kappa_2=3.5$, $\theta_2=0.2^2$, $\rho_2=-0.3$,
$\sigma_2=0.01$, $\lambda_2=2$, $\mu_2=-3$ \cite{Liu03}.

The parameters $\phi^S_1$, $\phi^S_2$, $\phi^V_1$ and $\phi^V_2$, { which are defined in Section 2,  describe the investor's} preference for robustness. Anderson et al. \cite{Anderson03} argue that these parameters may be chosen in such a way that it could be difficult to distinguish the reference model from the worst case model based on a time series of { finite} length. Hence, we discuss the detection-error probability $\varepsilon_T$ as a function of $\phi^S_1,\phi^S_2,\phi^V_1$, and $\phi^V_2$.
Define the Radon-Nikodym derivatives
$\mathcal{Z}_1(t)=\mathrm{E}^\mathbb{P}[\frac{d\mathbb{P}^e}{d\mathbb{P}}|\mathcal{F}_t]$
and
$\mathcal{Z}_2(t)=\mathrm{E}^{\mathbb{P}^e}[\frac{d\mathbb{P}}{d\mathbb{P}^e}|\mathcal{F}_t]$,
and consider their logarithms:
\begin{equation*}
\begin{aligned}
\xi_1(t)&=\ln{\mathcal{Z}_1(t)}=-\int_{0}^{t}\sum_{j=1}^{2}\left[\frac{1}{2}\left((e^S_j(\tau))^2+(e^V_j(\tau))^2\right)d\tau+e^S_j(\tau)dW_j(\tau)+e^V_j(\tau)dZ_j(\tau)\right],\\
\xi_2(t)&=\ln{\mathcal{Z}_2(t)}=\int_{0}^{t}\sum_{j=1}^{2}\left[\frac{1}{2}\left((e^S_j(\tau))^2+(e^V_j(\tau))^2\right)d\tau+e^S_j(\tau)dW_j(\tau)+e^V_j(\tau)dZ_j(\tau)\right].
\end{aligned}
\end{equation*}
Based on a sample of finite length $T$, { clearly, the decision maker will  discard the reference model mistaken} for the worst case model
if $\xi_1(T)> 0$. On the other hand, if the worst case model is true, then it will be rejected erroneously if $\xi_2(T)> 0$ (or $\xi_1(T)< 0$). Thus, { we define the detection-error probability as follows}
$$\varepsilon_T(\phi^S_1,\phi^S_2,\phi^V_1,\phi^V_2)=\frac{1}{2}\mathrm{P}(\xi_1(T)> 0|\mathbb{P},\mathcal{F}_0)+\frac{1}{2}\mathrm{P}(\xi_1(T)< 0|\mathbb{P}^e,\mathcal{F}_0), $$
which can be calculated explicitly {using the Fourier inversion method, more details are given in Appendix~\ref{A6}.}

{ As shown in Figures \ref{fig1a} - \ref{fig1bb},  the detection-error probability  varies inversely with the ambiguity-aversion parameters in both complete and incomplete markets, i.e., the larger the ambiguity-aversion parameters, the less the probability of the investor making mistakes.  Overall, the detection-error probability changes with respect to the volatility factors $V_1$ and $V_2$ at different rates.  Furthermore,  it also exhibits different sensitivities to the ambiguity-aversion level for asset price and that for volatility.}
\begin{figure}
\centering
\subfigure[$\varepsilon$ vs. $(\phi_1^S, \phi_1^V)$ in complete markets]
{\label{fig1a}
\includegraphics[width=0.4\textwidth]{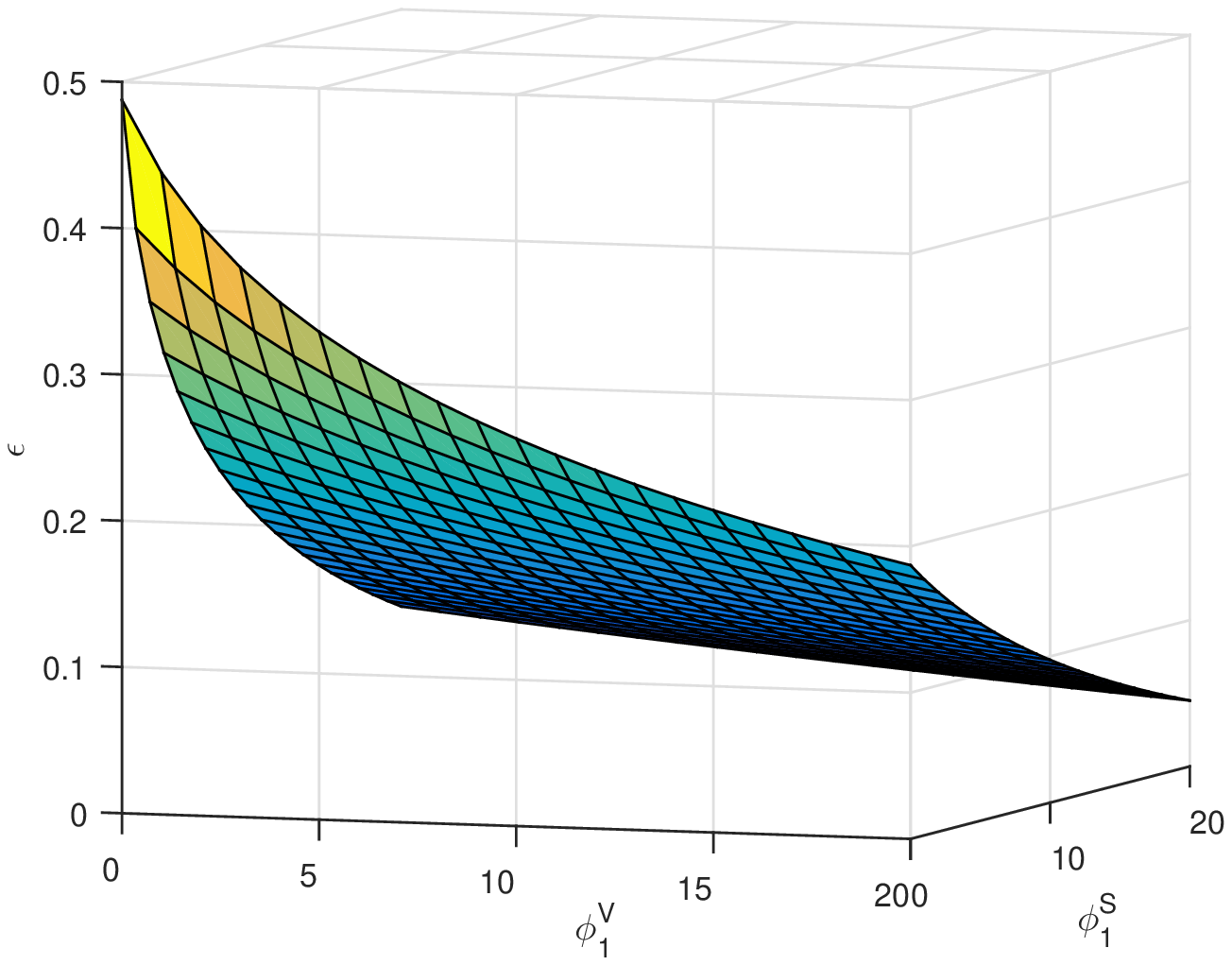}}
\subfigure[$\varepsilon$ vs. $(\phi_2^S, \phi_2^V)$ in complete markets]
{\label{fig1b}
\includegraphics[width=0.4\textwidth]{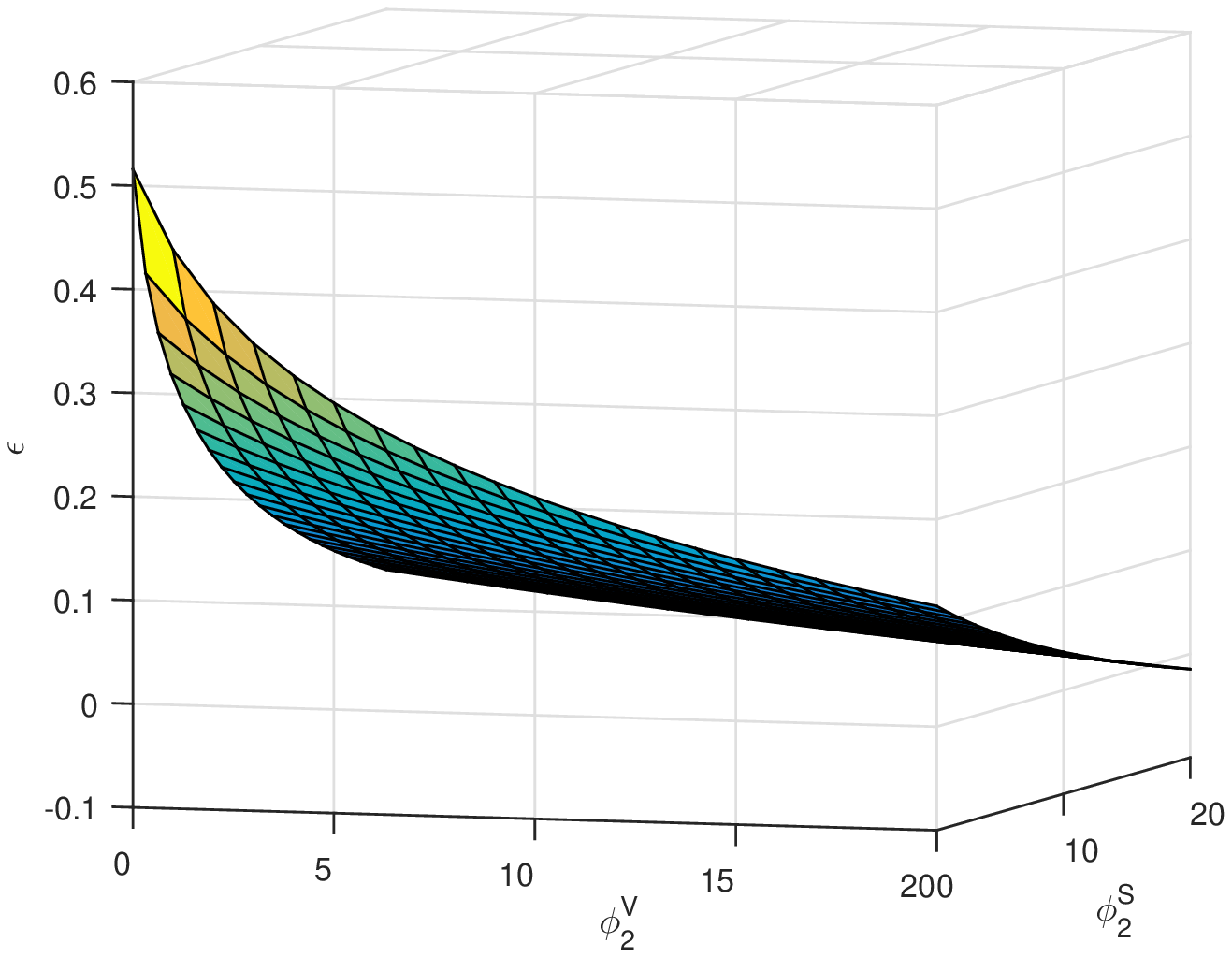}}
\subfigure[$\varepsilon$ vs. $(\phi_1^S, \phi_1^V)$ in incomplete markets]
{\label{fig1aa}
\includegraphics[width=0.4\textwidth]{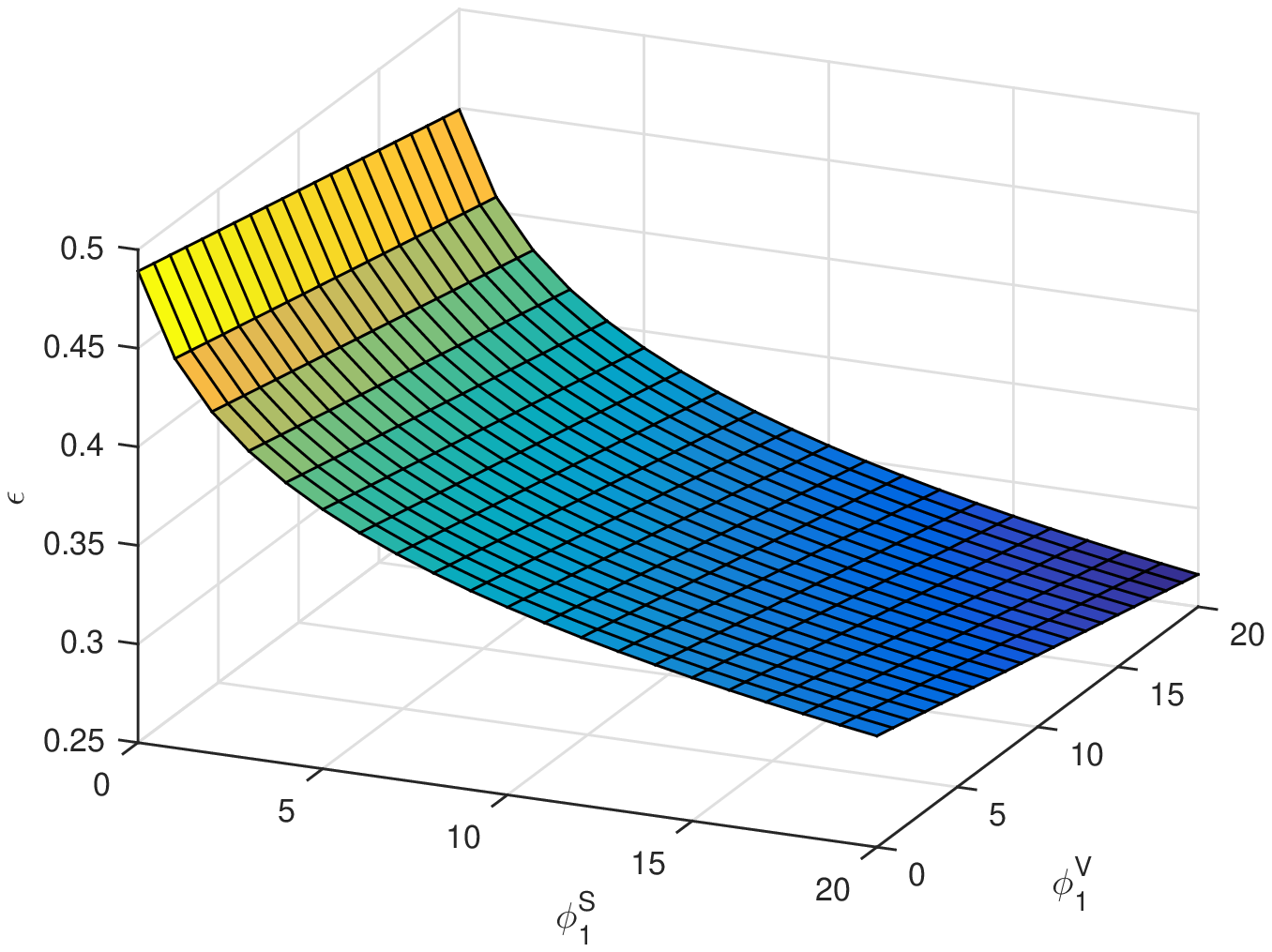}}
\subfigure[$\varepsilon$ vs. $(\phi_2^S, \phi_2^V)$ in incomplete markets]
{\label{fig1bb}\includegraphics[width=0.48\textwidth]{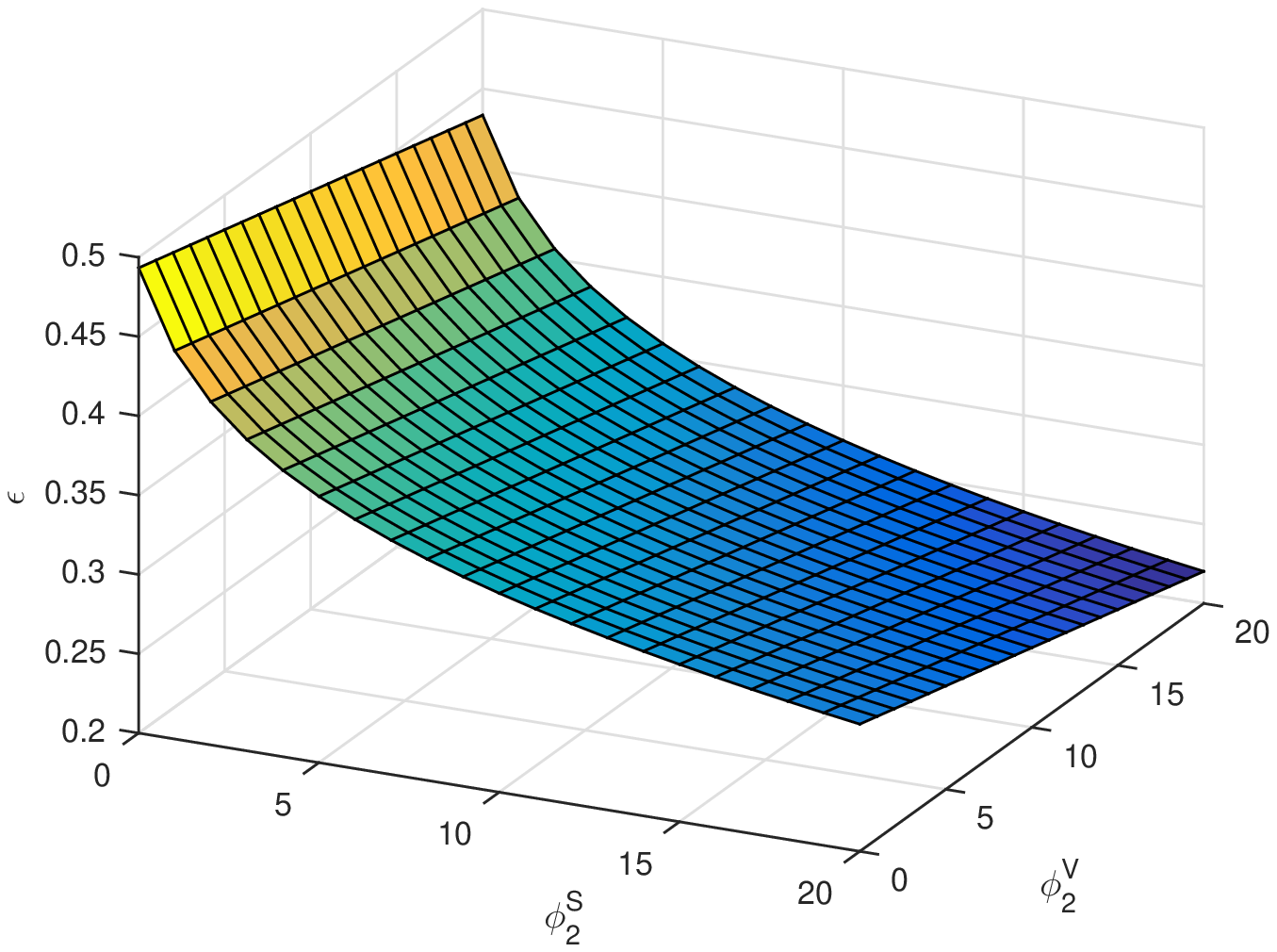}
}
\caption{The effects of $\phi_j^S$ and $\phi_j^V$ on the detection-error probability $\varepsilon$.}
\end{figure}

\begin{figure}
\centering
\subfigure[Adjustment to $\lambda_1$ ($e^S_2/\sqrt{V_2}$)]
{\label{fig2a}
\includegraphics[width=0.4\textwidth]{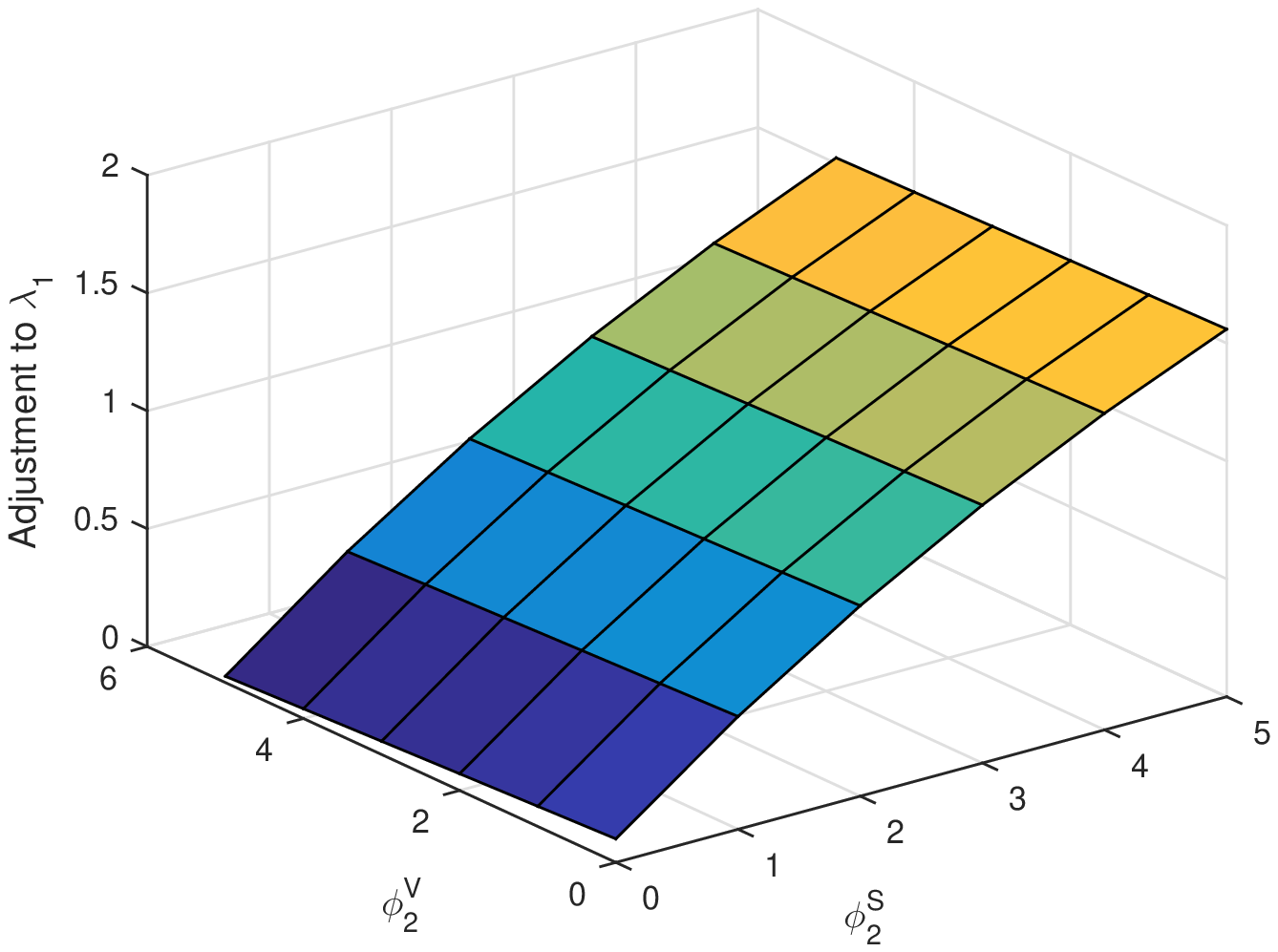}}
\subfigure[Adjustment to $\lambda_2$ ($e^V_2/\sqrt{V_2}$)]
{\label{fig2b}
\includegraphics[width=0.4\textwidth]{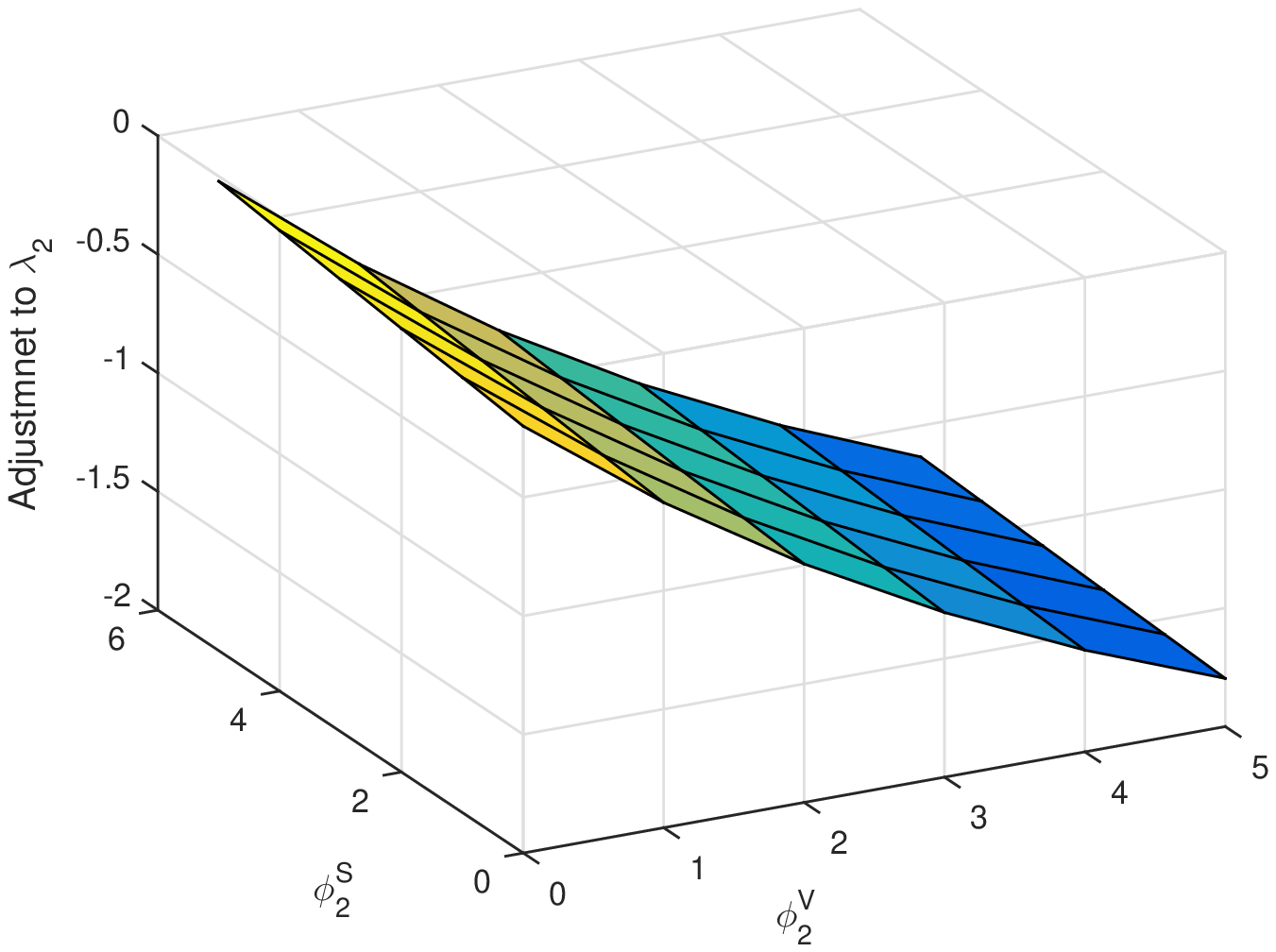}}
\caption{The effects of $\phi_2^S$ and $\phi_2^V$ on adjustments to $\lambda_1$ and $\lambda_2$ in complete markets.}
\label{fig2}
\end{figure}
Figures \ref{fig2a} and \ref{fig2b} show the adjustments $e^S_2/\sqrt{V_2}$ {  and $e^V_2/\sqrt{V_2}$   for second component uncertainty ($\phi_2^S$ and $\phi_2^V$) in a complete market. The adjustment $e^S_2/\sqrt{V_2}$ varies sharply with $\phi^S_2$, but only slightly with $\phi^V_2$, and vice versa for  $e^V_2/\sqrt{V_2}$. This result is consistent with that of } the single volatility model in \cite{Escobar15}.

\begin{figure}
\centering
\subfigure[The effects of $\phi_j^S$ and $\phi_j^V$ on $\beta^S_1$]
{\label{fig3a}
\includegraphics[width=0.4\textwidth]{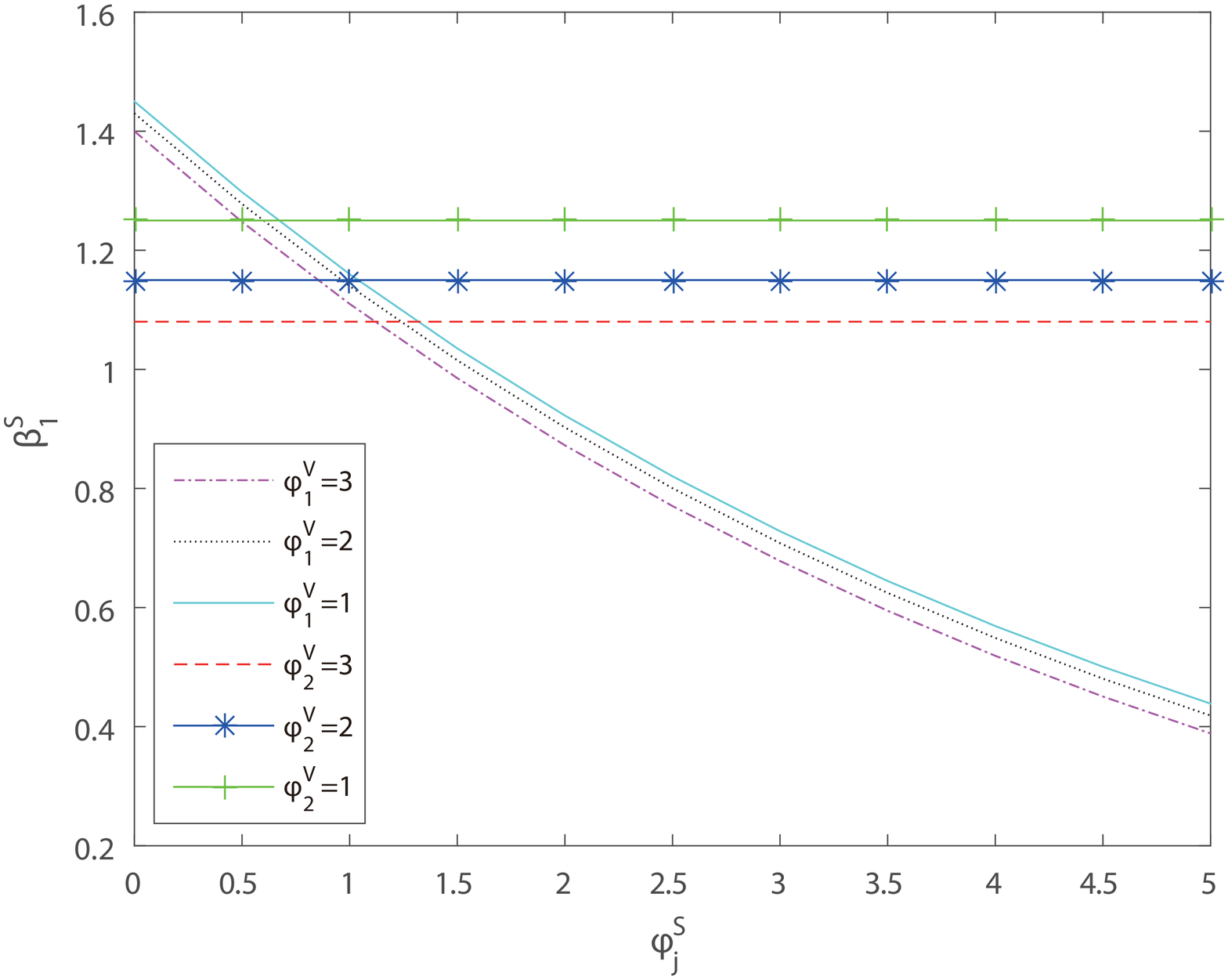}}
\subfigure[The effects of $\phi_j^S$ and $\phi_j^V$ on $\beta^S_2$]
{\label{fig3b}
\includegraphics[width=0.4\textwidth]{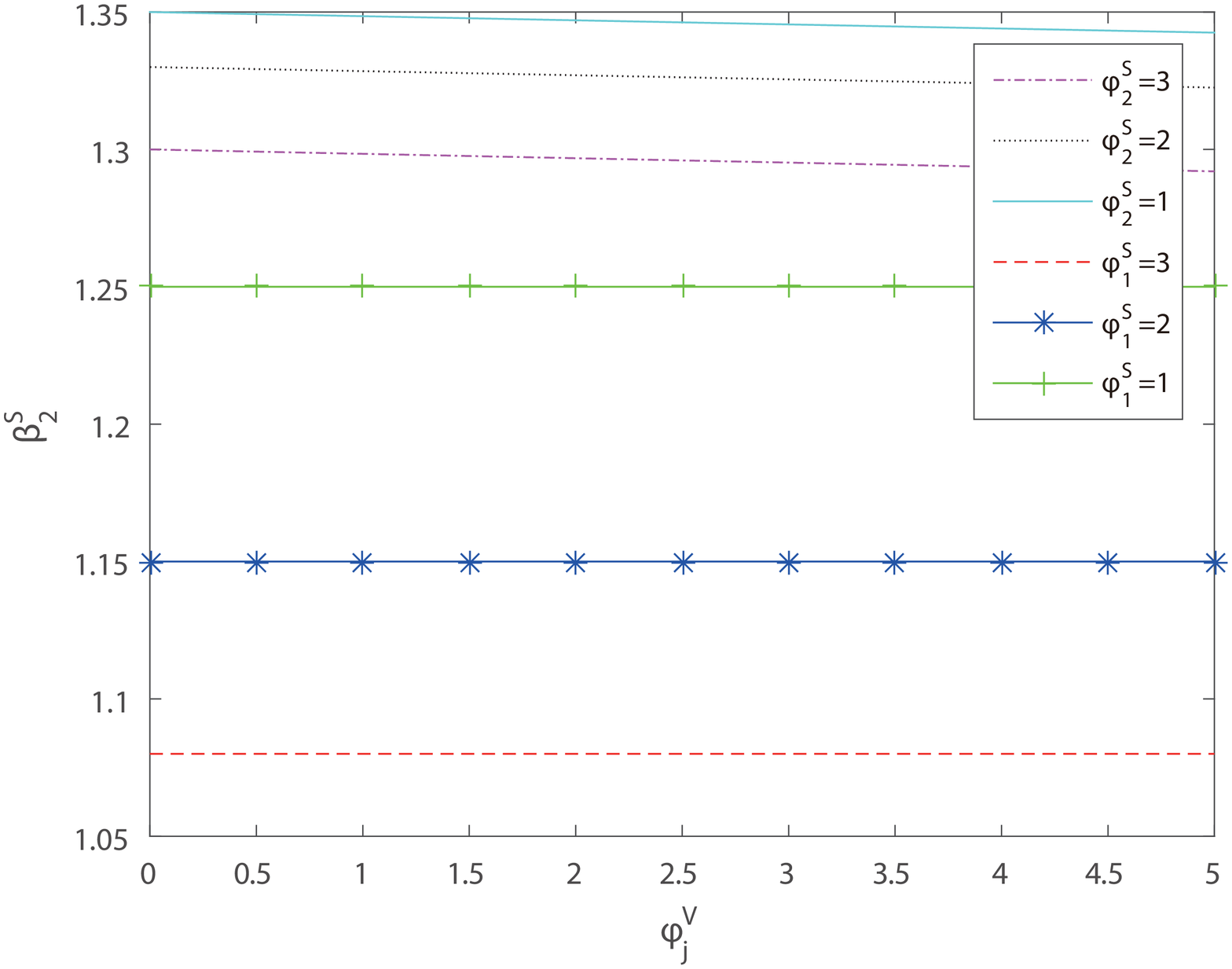}}
\subfigure[The effects of $\phi_j^S$ and $\phi_j^V$ on $\beta^V_1$]
{\label{fig3c}
\includegraphics[width=0.4\textwidth]{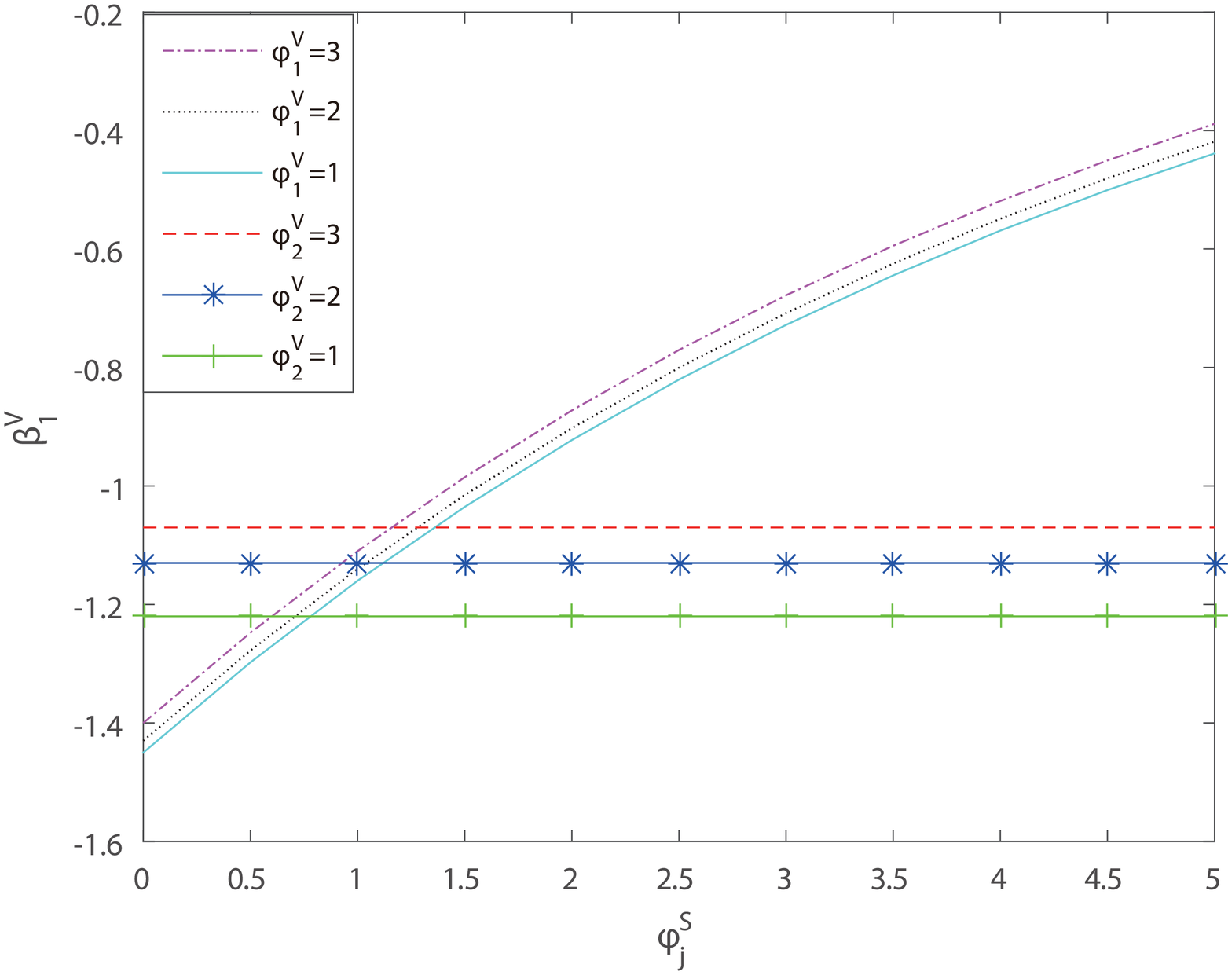}}
\subfigure[The effects of $\phi_j^S$ and $\phi_j^V$ on $\beta^V_2$]
{\label{fig3d}
\includegraphics[width=0.4\textwidth]{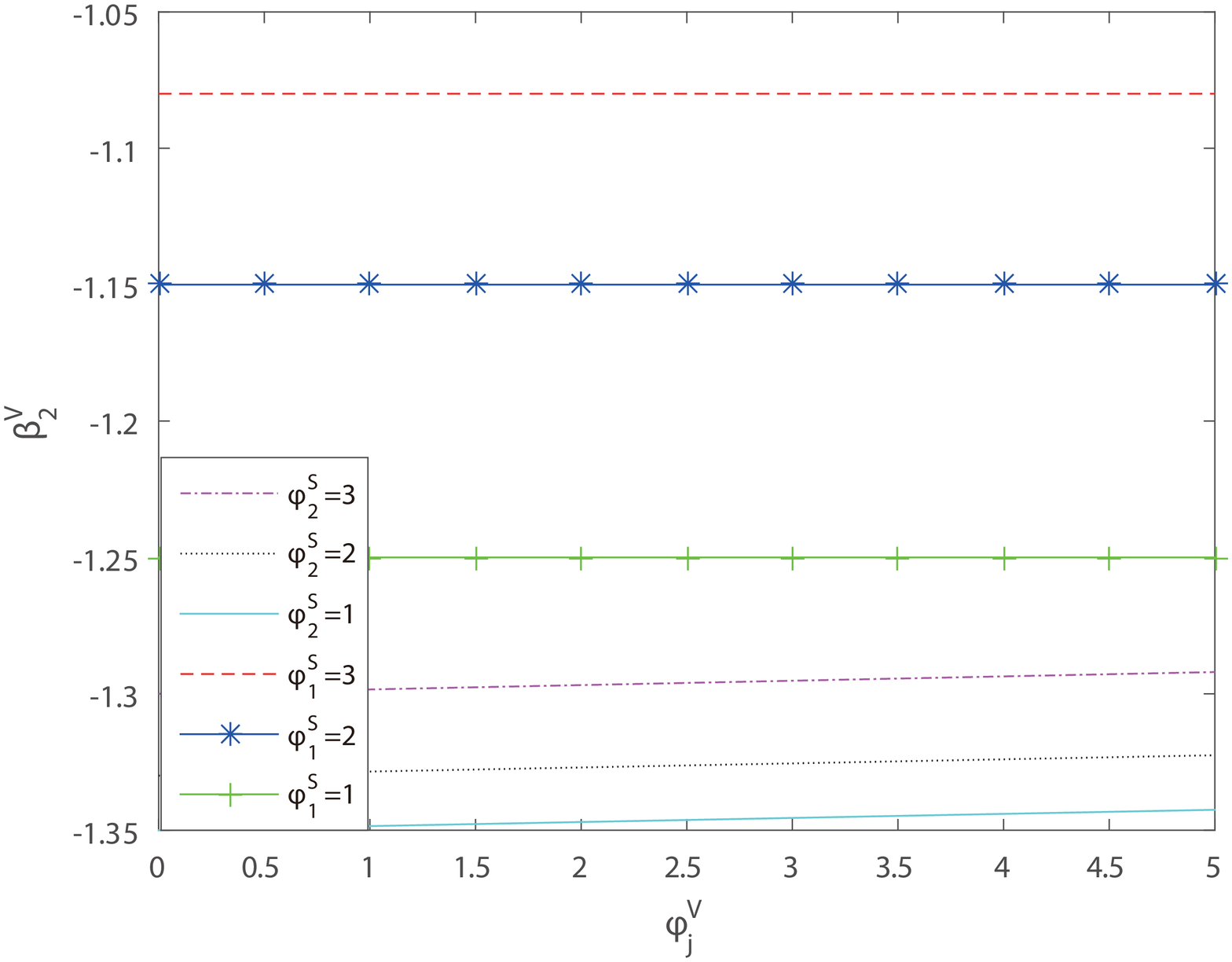}}
\caption{Optimal exposures in complete markets. }
\label{esa}
\end{figure}

\begin{figure}
\centering
\subfigure[The effects of $\phi_j^S$ and $\phi_j^V$ on $\beta^S_1$]
{\label{fig4a}
\includegraphics[width=0.43\textwidth]{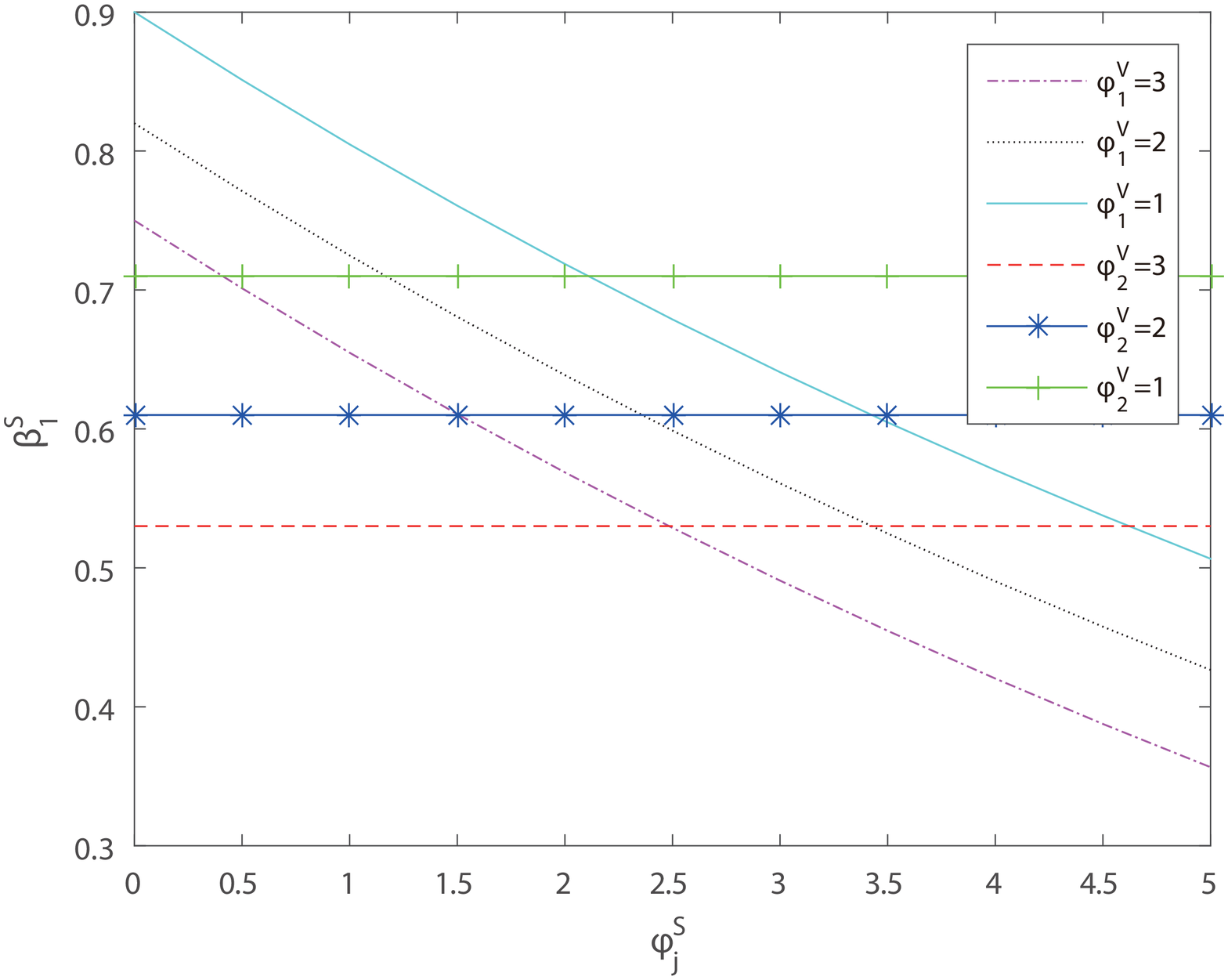}}
\subfigure[The effects of $\phi_j^S$ and $\phi_j^V$ on $\beta^S_2$]
{\label{fig4b}
\includegraphics[width=0.43\textwidth]{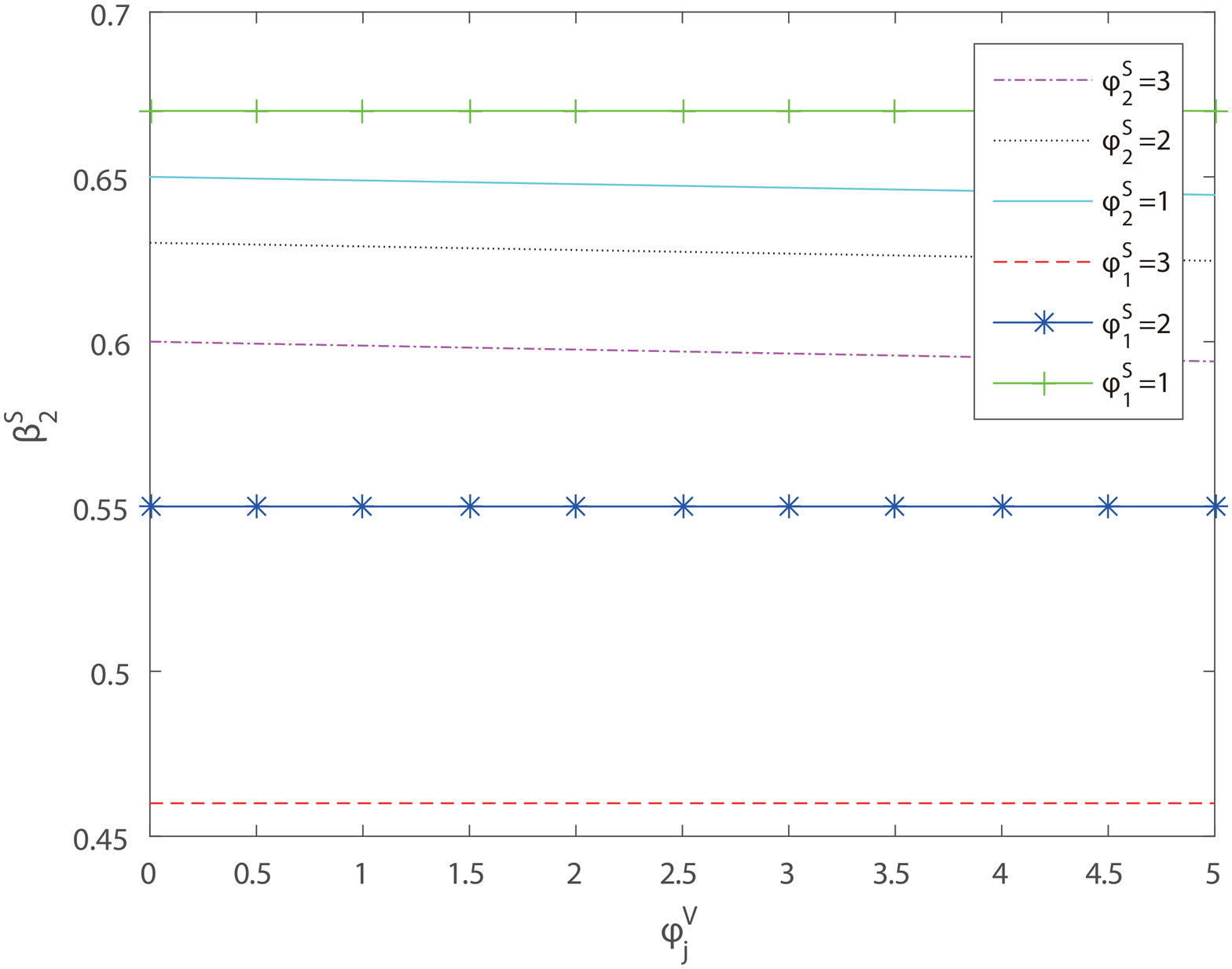}}
\caption{Optimal exposures in incomplete markets. }
\label{esb}
\end{figure}

Figures \ref{esa} and \ref{esb} { display} the optimal exposures in complete markets and incomplete markets, respectively. Firstly, stock risk exposure is more sensitive to ambiguity about the stock risk than the volatility risk and the volatility risk exposure is more sensitive to ambiguity about the volatility risk than the stock risk. Second, the ambiguity in one component of the volatility has no effect on the optimal exposure of the other component of the volatility. It should be pointed out that the optimal stock exposure decreases as the stock ambiguity parameter increases, whereas the optimal volatility exposure increases as the volatility ambiguity parameter increases. The differences between the results for the two markets are { significant}, indicating that derivative trading is of great importance for hedging risk and thus for making better portfolio choices.

\begin{figure}
\centering
\subfigure[The effects of $\phi_1^S$ and $\phi_1^V$
 in complete \newline \hfill  markets. ]
{\label{fig5a}
\includegraphics[width=0.4\textwidth]{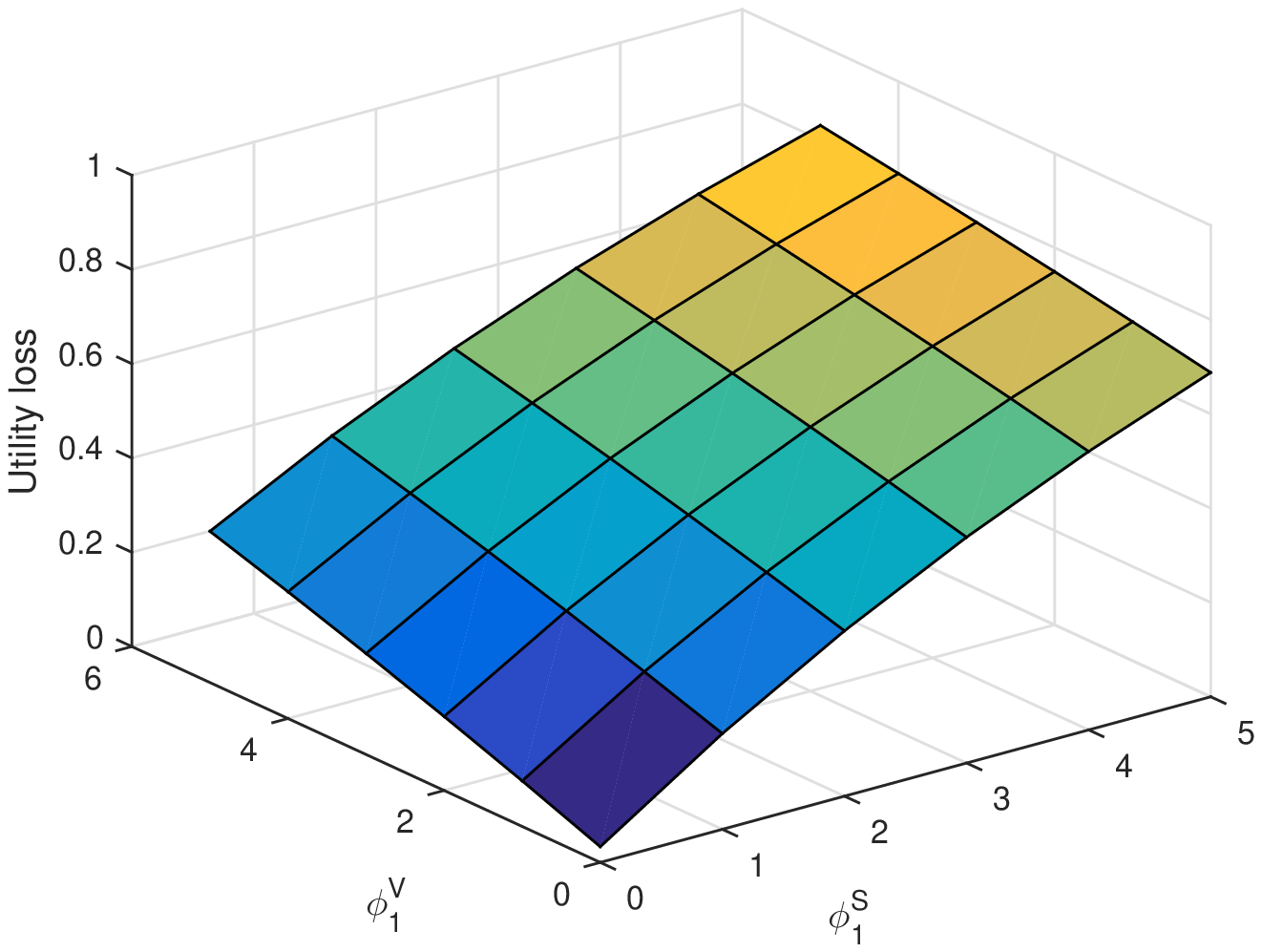}}
\subfigure[The effects of $\phi_2^S$ and $\phi_2^V$ in complete
\newline \hfill markets.]
{\label{fig5b}
\includegraphics[width=0.4\textwidth]{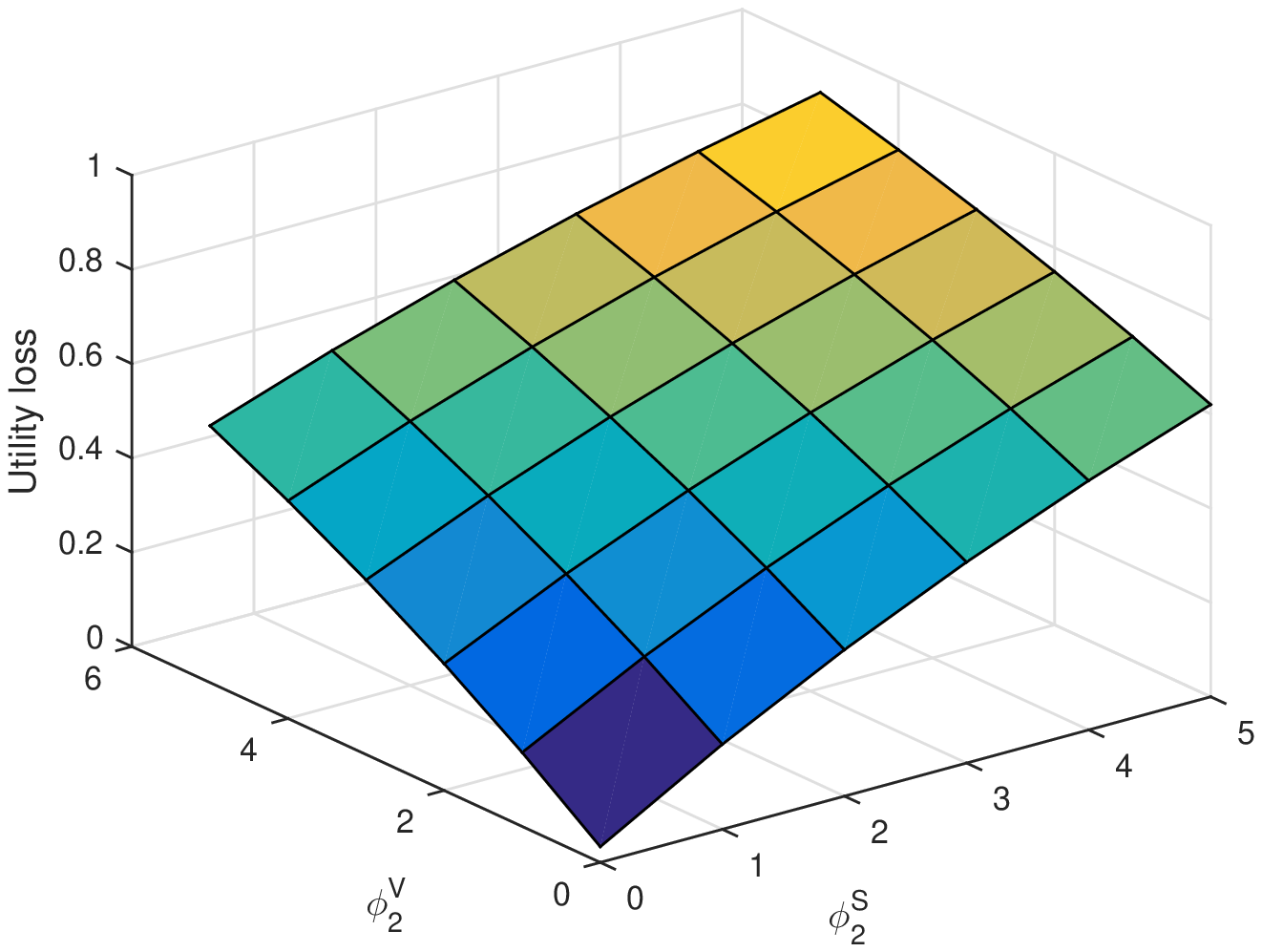}}
\subfigure[The effects of $\phi_1^S$ and $\phi_1^V$ in  incomplete
\newline \hfill markets.]
{\label{fig5c}
\includegraphics[width=0.4\textwidth]{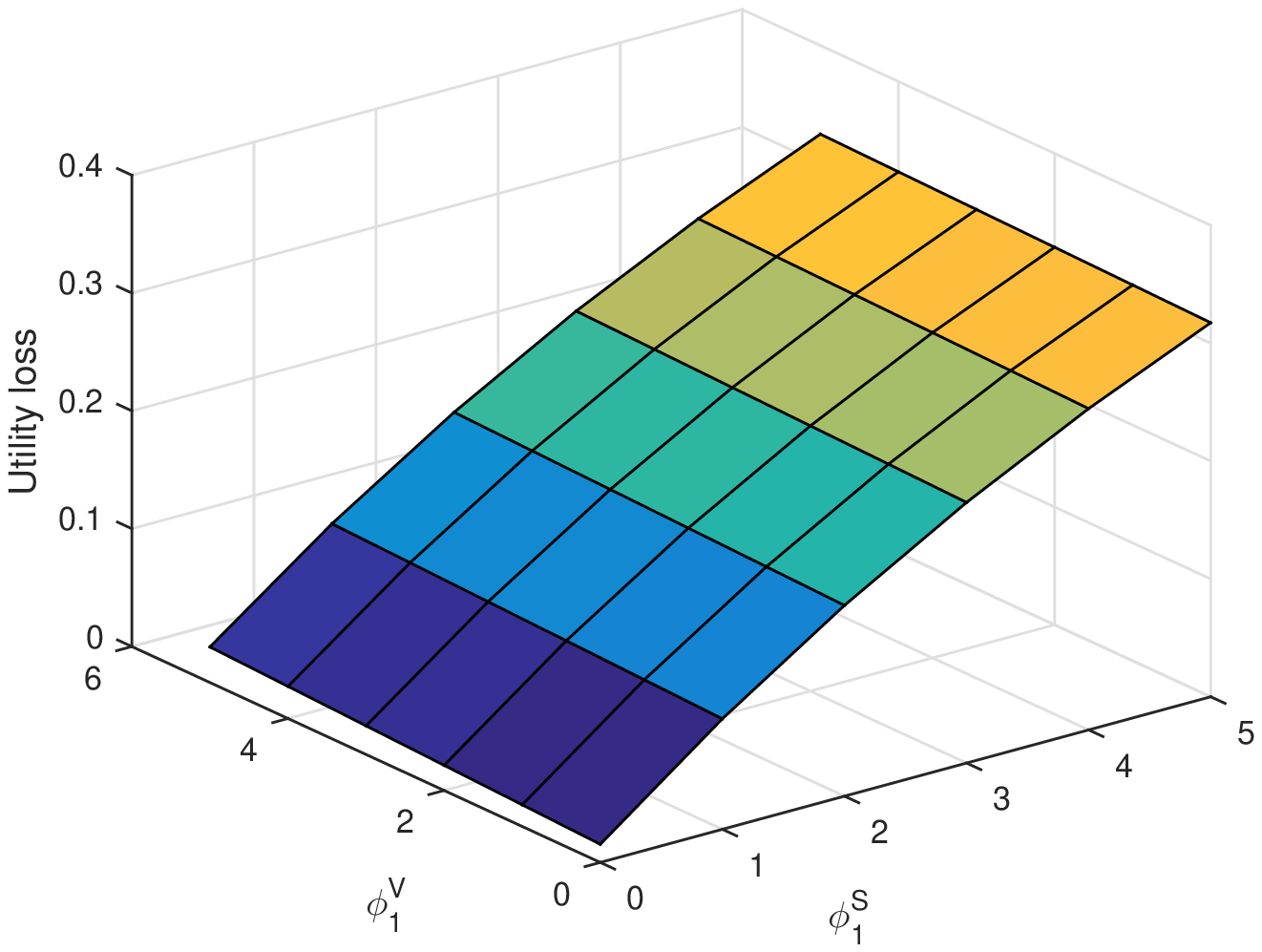}}
\subfigure[The effects of $\phi_2^S$ and $\phi_2^V$ incomplete
\newline \hfill markets.]
{\label{fig5d}
\includegraphics[width=0.4\textwidth]{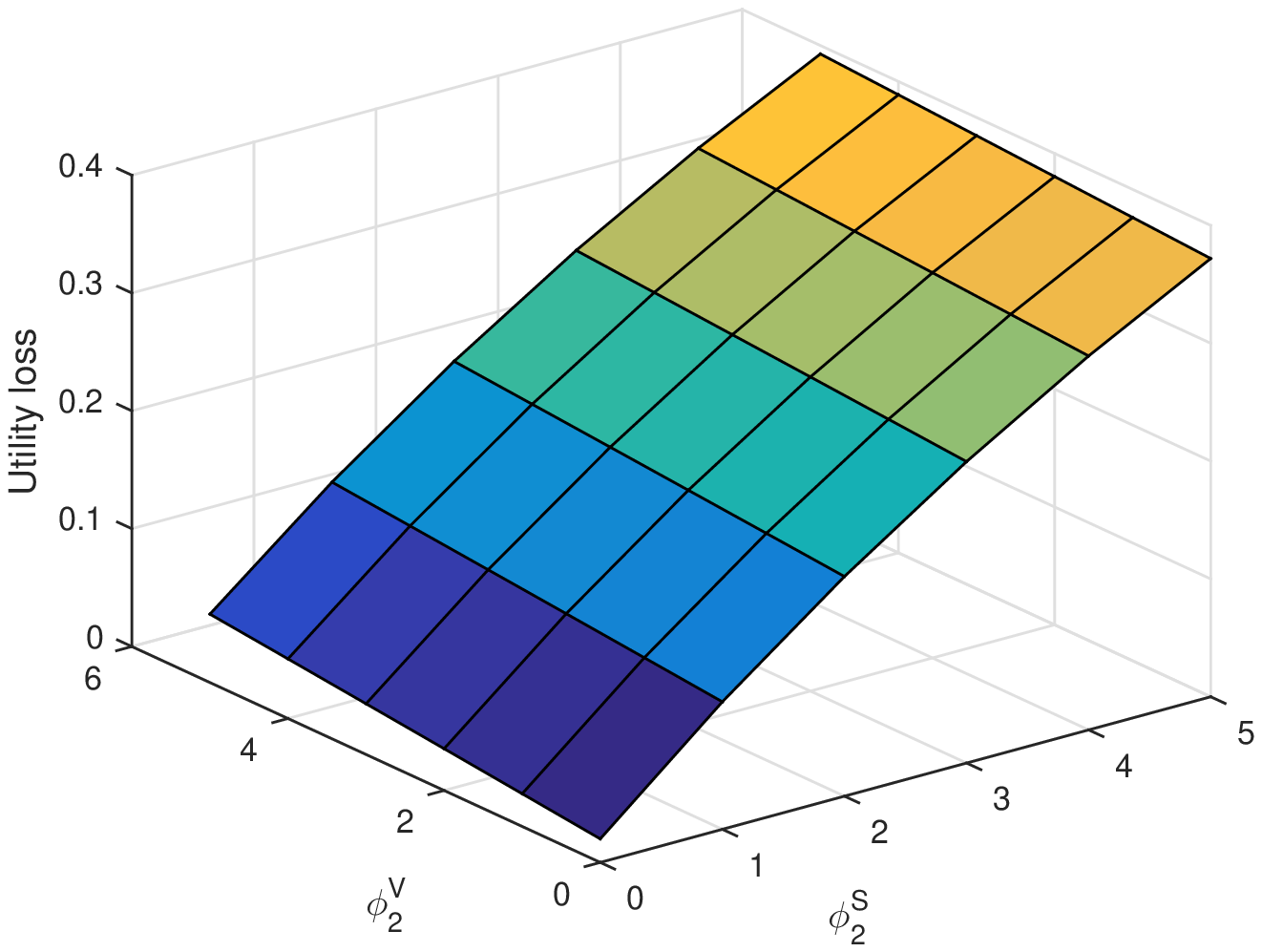}}
\caption{Utility loss from ignoring model uncertainty.}
\label{ula}
\end{figure}

Figure \ref{ula} { depicts} the variation of the utility (welfare) loss with ambiguity about the volatility and stock risks.  { The utility loss increases sharply as the ambiguity about the stock price increases, and} the increase is more significant in the complete market than that in the incomplete market. { The ambiguity about the volatility contributes to more utility loss in the complete market than} in the incomplete market. It is also observed that the two volatility factors affect the utility loss at different rates, thus one cannot neglect the contribution of the multi-factor volatility structure.

\section{Conclusions}

 In this work, we study the robust optimal portfolio selection problem for a risk-averse and ambiguity-averse investor, who  invests in a money market account, a stock, whose price follows the multi-factor volatility model, and several correlated derivatives. Our results indicate that the ambiguity about asset price and volatility can influence investment decisions in both complete markets and incomplete markets, and therefore one should always search for a robust optimal portfolio.
Jump risks and correlated volatilities have significant effect in robust optimal portfolio selection problems. Thus, in the presence of jump risks an investor should consider additional derivatives to hedge the associated risks, and when choosing a robust portfolio, one needs to consider the worst-case measure due to the jump risks and determine the optimal strategy accordingly.

Although our new analytical approximate solution contributes to the literature of robust optimal portfolio selection by providing evidences that the correlated volatility-factors have impact on the worst-case scenario measures and robust optimal strategies, more research is needed in this area. For example, future work could include investigating the case when there is a time-varying correlation instead of a constant correlation coefficient in our current model. Further, it is of interest to find out whether or not including information asymmetry or trading obstacles (transaction costs or liquidity) leads to better models.


{\small
\begin{appendices}
\singlespacing
\section{{ Proof of} Proposition \ref{prop1}}\label{A1}
Solving optimal problem \eqref{HJB0} with respect to $e^S_j$ and $e_j^V$,
we have
\begin{equation}\label{es0}
(e^S_j)^*=\Psi_j^S(x\beta^S_jJ_x+\rho_j\sigma_jJ_{v_j})\sqrt{v_j},\quad
(e^V_j)^*=\Psi_j^V(x\beta^V_jJ_x+\sqrt{1-\rho_j^2}\sigma_jJ_{v_j})\sqrt{v_j}.
\end{equation}
Substituting \eqref{es0} into \eqref{HJB0}, we obtain the following equation
\begin{equation}\label{HJBA1}
\begin{aligned}&\sup_{\beta^S_i,\beta^V_i}
\bigg\{J_t+x\big(r+\sum_{j=1}^{2}(\beta^{S}_j\lambda_jv_j
+\beta^{V}_j\mu_jv_j\big)J_x
+\frac{1}{2}x^2\sum_{j=1}^{2}[(\beta^{S}_j)^2+(\beta^{V}_j)^2]v_jJ_{xx}
+\sum_{j=1}^{2}\big(\kappa_j(\theta_j-v_j)\big)J_{v_j}\\
&+\frac{1}{2}\sum_{j=1}^{2}\sigma_j^2v_jJ_{v_jv_j}+\sum_{j=1}^{2}\sigma_jv_jx\big(\beta^S_j\rho_j+\beta^V_j\sqrt{1-\rho_j^2}\big)J_{v_jx}-\sum_{j=1}^{2}\frac{1}{2}\Psi_j^Sv_j\left[x^2(\beta^S_j)^2J_x^2+2x\beta^S_j\rho_j\sigma_jJ_xJ_{V_j}+\rho_j^2\sigma_j^2J^2_{v_j}\right]\\
&-\sum_{j=1}^{2}\frac{1}{2}\Psi_j^Sv_j\left[x^2(\beta^V_j)^2J_x^2+2x\beta^V_j\sqrt{1-\rho_j^2}\sigma_jJ_xJ_{v_j}+(1-\rho_j^2)\sigma_j^2J^2_{v_j}\right]
\bigg\}=0.
\end{aligned}
\end{equation}
Assuming that the solution $J$ is of the form
$J(t,x,v_1,v_2)=\frac{x^{1-\gamma}}{1-\gamma}\exp(H_1(T-t)v_1+H_2(T-t)v_2+h(T-t))$
and choosing
$$\Psi^S_j=\frac{\phi_j^S}{(1-\gamma)J(t,x,v_1,v_2)}, \quad \Psi^V_j=\frac{\phi_j^V}{(1-\gamma)J(t,x,v_1,v_2)}, \quad \phi_j^S, \ \phi_j^V>0,$$
Equation \eqref{HJBA1} yields
\begin{equation}\label{HJBA20}
\begin{aligned}
\sup_{\beta^S_i,\beta^V_i}
\bigg\{
&-h'-\sum_{j=1}^{2}(H_j)'v_j+(1-\gamma)\big(r+\sum_{j=1}^{2}(\beta^{S}_j\lambda_jv_j
+\beta^{v}_j\mu_jv_j\big)\big)
-\frac{\gamma(1-\gamma)}{2}\sum_{j=1}^{2}[(\beta^{S}_j)^2+(\beta^{v}_j)^2]v_j\\
&+\sum_{j=1}^{2}\big(\kappa_j(\theta_j-v_j))H_j
+\frac{1}{2}\sum_{j=1}^{2}\sigma_j^2v_j(H_j)^2+\sum_{j=1}^{2}\sigma_jv_jx\big(\beta^S_j\rho_j+\beta^V_j\sqrt{1-\rho_j^2}\big)H_j\\
&-\sum_{j=1}^{2}\frac{1}{2}(1-\gamma)\phi_j^Sv_j\left[(\beta^S_j)^2+2\beta^S_j\rho_j\sigma_j\frac{H_j}{1-\gamma}+\rho_j^2\sigma_j^2\frac{(H_j)^2}{(1-\gamma)^2}\right]\\
&-\sum_{j=1}^{2}\frac{1}{2}(1-\gamma)\phi_j^Vv_j\left[(\beta^V_j)^2+2\beta^V_j\sqrt{1-\rho_j^2}\sigma_j\frac{H_j}{1-\gamma}+(1-\rho_j^2)\sigma_j^2\frac{(H_j)^2}{(1-\gamma)^2}\right]
\bigg\}=0,
\end{aligned}
\end{equation}
where $h'$ represents the time derivative of function $h(t)$. From \eqref{HJBA20}, we derive the optimal exposures as
\begin{equation}\label{ope}
(\beta^S_j)^*=\frac{\lambda_j}{\gamma+\phi^S_j}+\frac{(1-\gamma-\phi^S_j)\sigma_j\rho_j}{(1-\gamma)(\gamma+\phi^S_j)}H_j(T-t),\quad (\beta^V_j)^*=\frac{\mu_j}{\gamma+\phi^V_j}+\frac{(1-\gamma-\phi^V_j)\sigma_j\sqrt{1-\rho_j^2}}{(1-\gamma)(\gamma+\phi^V_j)}H_j(T-t).
\end{equation}
To determine $H_1$ and $H_2$ and $h$, we substitute \eqref{ope} into \eqref{HJBA20} to obtain
\begin{equation}\label{HJBA3}
\begin{aligned}
&-h'-\sum_{j=1}^{2}(H_j)'v_j+(1-\gamma)(
r+\sum_{j=1}^{2}\big(\kappa_j(\theta_j-v_j))H_j)
+\frac{1}{2}\sum_{j=1}^{2}\sigma_j^2v_j(H_j)^2-\sum_{j=1}^{2}\frac{1}{2}\phi_j^Sv_j\rho_j^2\sigma_j^2\frac{(H_j)^2}{1-\gamma}\\
&-\sum_{j=1}^{2}\frac{1}{2}\phi_j^Vv_j(1-\rho_j^2)\sigma_j^2\frac{(H_j)^2}{1-\gamma}+\sum_{j=1}^{2}\left[\frac{(1-\gamma)\lambda_1^2}{2(\gamma+\phi_j^S)}+\frac{\lambda_1(1-\gamma-\phi_j^S)\sigma_j\rho_iH_j}{\gamma+\phi^S_j}
+\frac{(1-\gamma-\phi_i^S)^2\sigma^2_i\rho^2_i(H_j)^2}{2(1-\gamma)(\gamma+\phi^S_i)}
\right]v_j\\
&+\sum_{j=1}^{2}\left[\frac{(1-\gamma)\lambda_2^2}{2(\gamma+\phi_1^V)}+\frac{\lambda_2(1-\gamma-\phi_j^V)\sigma_j\sqrt{1-\rho^2_i}H_j}{\gamma+\phi^S_j}
+\frac{(1-\gamma-\phi_j^V)^2\sigma^2_j(1-\rho^2_j)(H_j)^2}{2(1-\gamma)(\gamma+\phi^V_j)}
\right]v_j=0.
\end{aligned}
\end{equation}
 Comparing the terms concerned with $v_j$, we get the following Riccati equations
\begin{equation}\label{e24}
\left\{
\begin{aligned}
&H_j'=a_jH_j+b_j(H_j)^2+c_j,\quad  H_j(0)=0,\ j=1, 2, \\
&h'=\kappa_1\theta_1H_1+\kappa_2\theta_2H_2+(1-\gamma)r,\quad h(0)=0.
\end{aligned}
\right.
\end{equation}
Let $d_j=\sqrt{a_j^2-4b_jc_j}$\, , where
\begin{eqnarray*}
a_j&=&-\kappa_j+\frac{\lambda_1(1-\gamma-\phi_j^S)\sigma_j\rho_j}{\gamma+\phi_j^S}+\frac{\lambda_2(1-\gamma-\phi_j^V)\sigma_j}{\gamma+\phi_j^V}\sqrt{1-\rho_j^2},\\
b_j&=&\frac{\sigma_j^2}{2}-\frac{\phi^S_j\rho_j^2\sigma_j^2}{2(1-\gamma)}-\frac{\phi^V_j(1-\rho_j^2)\sigma_j^2}{2(1-\gamma)}+\frac{(1-\gamma-\phi_j^S)\sigma^2_j\rho^2_j}{2(1-\gamma)(\gamma+\phi_j^S)}+\frac{(1-\gamma-\phi_j^V)^2\sigma^2_j(1-\rho_j^2)}{2(1-\gamma)(\gamma+\phi_j^V)},\\
c_j&=&\frac{(1-\gamma)\lambda_1^2}{2(\gamma+\phi^S_j)}+\frac{(1-\gamma)\lambda_2^2}{2(\gamma+\phi^V_j)}.
\end{eqnarray*}
Functions $H_j$ and $h$  can be obtained from \eqref{e24}, as shown in \eqref{Hh1}. 
\qed
\begin{section}{{Proof of} Proposition \ref{propa2}}\label{A2}
\begin{proof}
It is sufficient to show that the Novikov's condition is satisfied for the worst-case probability measure, that is,
$$\mathrm{E}^{\mathbb{P}}\left[\exp\left\{\frac{1}{2}\int_{0}^{T}((e^S_1(t))^*)^2+((e^S_2(t))^*)^2+((e^V_1(t))^*)^2+((e^V_2(t))^*)^2dt\right\}\right]<\infty.$$
Applying \eqref{opte} in Proposition \ref{prop1}, one obtains
$\mathrm{E}^{\mathbb{P}}\left[\exp\left\{\frac{1}{2}\int_{0}^{T}\left(K_1(T-t)V_1(t)+K_2(T-t)V_2(t)\right)dt\right\}\right]<\infty,$
where
$$K_j(T-t)=\left(\frac{\lambda_j}{\gamma+\phi^S_j}+\frac{\sigma_j\rho_j H_j(T-t)}{(1-\gamma)(\gamma+\phi^S_j)}\right)^2(\phi^S_j)^2
+\left(\frac{\mu_j}{\gamma+\phi^V_j}+\frac{\sigma_j\sqrt{1-\rho_j^2} H_j(T-t)}{(1-\gamma)(\gamma+\phi^V_j)}\right)^2(\phi^V_j)^2,\ j= 1, 2. $$
Define $k_1=\sup_{t\in[0,T]}K_1(T-t)$ and $k_2=\sup_{t\in[0,T]}K_2(T-t)$, then
\begin{equation}\label{ieq}
\begin{aligned}
\mathrm{E}^{\mathbb{P}}\left[\exp\left\{\frac{1}{2}\int_{0}^{T}\left(K_1(T-t)V_1(t)+K_2(T-t)V_2(t)\right)dt\right\}\right]
<
\sum\limits_{i=1}^2\mathrm{E}^{\mathbb{P}}\left[\exp\left\{\frac{k_i}{2}\int_{0}^{T}V_i(t)dt\right\}\right] 
<\infty.
\end{aligned}
\end{equation}
The first inequity in \eqref{ieq} holds because $V_1$ and $V_2$ are independent, while the second holds if $k_1\leq \frac{\kappa_1^2}{\sigma_1^2}$
and $k_2\leq \frac{\kappa_2^2}{\sigma_2^2}$ (see \cite{Kraft05}). Recall that $\gamma>1$, $\lambda_j > 0$, $\rho_j<0$ and $\mu_j<0$. Thus,  $K_j$ reaches its maximum value at $t=T$ because $H_j$ is maximum at $t= T$.
Therefore,
$$k_1=(\phi_1^S)^2\frac{\lambda_1^2}{(\gamma+\phi^S_1)^2}+(\phi_1^V)^2\frac{\mu^2_1}{(\gamma+\phi^V_1)^2}, \quad k_2=(\phi_2^S)^2\frac{\lambda^2_2}{(\gamma+\phi^S_2)^2}+(\phi_2^V)^2\frac{\mu^2_2}{(\gamma+\phi^V_2)^2},$$
which yield the desired results immediately.
\qed
\end{proof}
\section{{Proof }of Proposition \ref{prop2}}\label{A3}
\begin{proof}
In the incomplete market, solving the optimal problem \eqref{wealth2}-\eqref{HJB02}, we obtain the general optimal wealth invested in stock
\begin{equation}\label{pis2}
\pi^S=\frac{\sum_{j=1}^{2}\left[(1-\gamma)v_j(\lambda_j+\rho_j\sigma_j\bar{H}_j)-\phi_j^S\rho_j\sigma_jv_j\bar{H}_j\right]}{\sum_{j=1}^{2}(1-\gamma)v_j(\gamma+\phi_j^S)}.
\end{equation}
Substituting \eqref{pis2} into \eqref{HJB02},
we obtain the following general HJB equation
\begin{equation}\label{HJBA04}
\begin{aligned}
&\bar{h}'+\sum_{j=1}^{2}(\bar{H}_j)'v_j-(1-\gamma)\big(r+\sum_{j=1}^{2}\pi^S\lambda_jv_j
\big)
+\frac{\gamma(1-\gamma)}{2}\sum_{j=1}^{2}(\pi^S)^2v_j+\sum_{j=1}^{2}\big(\kappa_j(\theta_j-v_j))\bar{H}_j
\\
&-\frac{1}{2}\sum_{j=1}^{2}\sigma_j^2v_j(\bar{H}_j)^2-\sum_{j=1}^{2}\sigma_jv_jx\rho_j\bar{H}_j\pi^S+\sum_{j=1}^{2}\frac{1}{2}(1-\gamma)\phi_j^Sv_j\left[(\pi^S)^2+2\pi^S(\bar{H}_j)\rho_j\sigma_j\frac{\bar{H}_j}{1-\gamma}+\rho_j^2\sigma_j^2\frac{(\bar{H}_j)^2}{(1-\gamma)^2}\right]\\
&+\sum_{j=1}^{2}\frac{1}{2}(1-\gamma)\phi_j^Vv_j(1-\rho_j^2)\sigma_j^2\frac{(\bar{H}_j)^2}{(1-\gamma)^2}=0.
\end{aligned}
\end{equation}
If there only exists one risk factor, say,  $W_1$, from \eqref{pis2} we have
$$\pi^S=\frac{\lambda_1}{\gamma+\phi^S_1}+\frac{(1-\gamma-\phi^S_1)\sigma_1\rho_1}{(1-\gamma)(\gamma+\phi^S_1)}\bar{H}_1(\tau).$$
Alternatively,  if the only risk factor is $W_2$,
$\pi^S=\frac{\lambda_2}{\gamma+\phi^S_2}+\frac{(1-\gamma-\phi^S_2)\sigma_2\rho_2}{(1-\gamma)(\gamma+\phi^S_2)}\bar{H}_2(\tau).$

Further, if the risk factors are the same, $W_1=W_2$, then
$$\pi^S=\frac{\lambda_1}{\gamma+\phi^S_1}+\frac{(1-\gamma-\phi^S_1)\sigma_1\rho_1}{(1-\gamma)(\gamma+\phi^S_1)}\bar{H}_1(\tau)=\frac{\lambda_2}{\gamma+\phi^S_2}+\frac{(1-\gamma-\phi^S_2)\sigma_2\rho_2}{(1-\gamma)(\gamma+\phi^S_2)}\bar{H}_2(\tau).$$
Therefore, under the circumstances that there are only one single risk or two identical risk factors we can write,  without loss of generality,
\begin{equation*}
\begin{aligned}
\pi^S=\pi_j(\bar{H}_j) = \frac{\lambda_j}{\gamma+\phi^S_j}+\frac{(1-\gamma-\phi^S_j)\sigma_j\rho_j}{(1-\gamma)(\gamma+\phi^S_j)}\bar{H}_j(\tau),\ j = 1, 2.
\end{aligned}
\end{equation*}
Substituting $\pi^S=\pi_j(\bar{H}_j)$ into \eqref{HJB02}, we obtain the following HJB equation
\begin{equation}\label{HJBA4}
\begin{aligned}
0=&\bar{h}'+\sum_{j=1}^{2}(\bar{H}_j)'v_j-(1-\gamma)\big(r+\sum_{j=1}^{2}\pi_j(\bar{H}_j)\lambda_jv_j
\big)
+\frac{\gamma(1-\gamma)}{2}\sum_{j=1}^{2}\pi^2_j(\bar{H}_j)v_j+\sum_{j=1}^{2}\big(\kappa_j(\theta_j-v_j))\bar{H}_j\\
&
-\frac{1}{2}\sum_{j=1}^{2}\sigma_j^2v_j(\bar{H}_j)^2-\sum_{j=1}^{2}\sigma_jv_jx\rho_j\bar{H}_j\pi_j(\bar{H}_j)+\sum_{j=1}^{2}\frac{1}{2}(1-\gamma)\phi_j^Vv_j(1-\rho_j^2)\sigma_j^2\frac{(\bar{H}_j)^2}{(1-\gamma)^2}\\
&+\sum_{j=1}^{2}\frac{1}{2}(1-\gamma)\phi_j^Sv_j\left[(\pi_j(\bar{H}_j))^2+2\pi_j(\bar{H}_j)\rho_j\sigma_j\frac{\bar{H}_j}{1-\gamma}+\rho_j^2\sigma_j^2\frac{(\bar{H}_j)^2}{(1-\gamma)^2}\right].
\end{aligned}
\end{equation}
By matching coefficients, we obtain the ordinary differential equations for the solution of $\bar{H}_j$ and $\bar{h}$ as follows
\begin{equation}\label{HH2}
\left\{
\begin{aligned}
\bar{H}'_j&=(1-\gamma)\pi_j(\bar{H}_j)\lambda_j
-\frac{\gamma(1-\gamma)}{2}\pi^2_j(\bar{H}_j)-\kappa_j\bar{H}_j
+\frac{1}{2}\sigma_j^2(\bar{H}_j)^2+(1-\gamma)\sigma_j\rho_j\bar{H}_j\pi_j(\bar{H}_j)\\
&-\frac{1}{2}(1-\gamma)\phi_j^S\left[(\pi_j(\bar{H}_j))^2+2\pi_j(\bar{H}_j)\rho_j\sigma_j\frac{\bar{H}_j}{1-\gamma}+\rho_j^2\sigma_j^2\frac{(\bar{H}_j)^2}{(1-\gamma)^2}\right]\\
&-\frac{1}{2}(1-\gamma)\phi_j^V(1-\rho_j^2)\sigma_j^2\frac{(\bar{H}_j)^2}{(1-\gamma)^2},\quad && \bar{H}_j(0)=0,\\
\bar{h}'&=\kappa_1\theta_1\bar{H}_1+\kappa_2\theta_2\bar{H}_2+(1-\gamma)r,\quad && \bar{h}(0)=0.
\end{aligned}
\right.
\end{equation}
The worst-case measures can be computed directly by \eqref{e+} and thus the proof is complete.\qed
\end{proof}
\section{{Proof of} Proposition \ref{prop3}}\label{A4}
\begin{proof}
Without loss of generality, we assume the indirect utility function of all suboptimal strategy $\Pi$ as
\begin{equation}\label{e34}
J^{\Pi}=\frac{x^{1-\gamma}}{1-\gamma}\exp\left(H^{\Pi}_1(\tau)v_1+H^{\Pi}_2(\tau)v_2+h^{\Pi}(\tau)\right).
\end{equation}
For mathematical tractability, we choose
$$\displaystyle \Psi^S_j=\frac{\widetilde{\phi}_j^S}{(1-\gamma)J(t,x,v_1,v_2)} \quad \mbox{and}
\quad \Psi^V_j=\frac{\widetilde{\phi}_j^V}{(1-\gamma)J(t,x,v_1,v_2)},$$
where the ambiguity aversion parameters
$\widetilde{\phi}_j^S$, $\widetilde{\phi}_j^V>0$.

The functions $H^{\Pi}_1$, $H^{\Pi}_2$ and $h^{\Pi}$ in \eqref{e34} should satisfy the HJB PDE \textcolor{blue}{\eqref{HJB22}},  which leads to
\begin{equation}\label{HJBA2}
\begin{aligned}
0=&-(h^{\Pi})'-\sum_{j=1}^{2}(H^{\Pi}_j)'v_j+(1-\gamma)\big(r+\sum_{j=1}^{2}(\beta^{S}_j\lambda_jv_j
+\beta^{v}_j\mu_jv_j\big)\big)
-\frac{\gamma(1-\gamma)}{2}\sum_{j=1}^{2}[(\beta^{S}_j)^2+(\beta^{v}_j)^2]v_j\\
&+\sum_{j=1}^{2}\big(\kappa_j(\theta_j-v_j))H^{\Pi}_j
+\frac{1}{2}\sum_{j=1}^{2}\sigma_j^2v_j(H^{\Pi}_j)^2+\sum_{j=1}^{2}\sigma_jv_jx\big(\beta^S_j\rho_j+\beta^V_j\sqrt{1-\rho_j^2}\big)H^{\Pi}_j\\
&-\sum_{j=1}^{2}\frac{1}{2}(1-\gamma)\widetilde{\phi}_j^Sv_j\left[(\beta^S_j)^2+2\beta^S_j\rho_j\sigma_j\frac{H^{\Pi}_j}{1-\gamma}+\rho_j^2\sigma_j^2\frac{(H^{\Pi}_j)^2}{(1-\gamma)^2}\right]\\
&-\sum_{j=1}^{2}\frac{1}{2}(1-\gamma)\widetilde{\phi}_j^Vv_j\left[(\beta^V_j)^2+2\beta^V_j\sqrt{1-\rho_j^2}\sigma_j\frac{H^{\Pi}_j}{1-\gamma}+(1-\rho_j^2)\sigma_j^2\frac{(H^{\Pi}_j)^2}{(1-\gamma)^2}\right].
\end{aligned}
\end{equation}
Comparing the terms with and without multiplier $v_j$, one can certainly obtain the following  Riccati equations
\begin{equation}\label{e37}
\left\{
\begin{aligned}
&(H^{\Pi}_j)'=a_jH^{\Pi}_j+b_j(H^{\Pi}_j)^2+c_j,\quad && H^{\Pi}_j(0)=0,\ j=1, 2, \\
&(h^{\Pi})'=\kappa_1\theta_1H^{\Pi}_1+\kappa_2\theta_2H^{\Pi}_2+(1-\gamma)r,\quad && h^{\Pi}(0)=0,
\end{aligned}
\right.
\end{equation}
where 
\begin{eqnarray*}
\begin{aligned}
a_j=&-\kappa_j+\sigma_j(1-\gamma)(\beta^S_j\rho+\beta^V_j\sqrt{1-\rho^2})-\widetilde{\phi}^S_j\beta^S_j\rho\sigma_j-\widetilde{\phi}^V_j\beta^V_j\sqrt{1-\rho^2}\sigma_j,\\
b_j=&\frac{\sigma_j^2}{2}-\frac{\widetilde{\phi}^S_j\rho_j^2\sigma_j^2}{2(1-\gamma)}-\frac{\widetilde{\phi}^V_j(1-\rho_j^2)\sigma_j^2}{2(1-\gamma)},\\
c_j=&(1-\gamma)(\beta^S_j\lambda_1+\beta^V_j\lambda_2)-\frac{1}{2}\gamma(1-\gamma)((\beta^S_j)^2+(\beta^V_j)^2)-
\frac{(1-\gamma)(\beta^S_j)^2\widetilde{\phi}^S_j}{2}-\frac{(1-\gamma)(\beta^V_j)^2\widetilde{\phi}^V_j}{2}.
\end{aligned}
\end{eqnarray*}
\qed
\end{proof}
\section{Proof of Proposition \ref{prop4}}\label{A5}
\begin{proof}
Recall { that the utility loss}
$L^\Pi=1-\exp\left\{\frac{1}{1-\gamma}\left[(H_1^\Pi-H_1)v_1+(H_2^\Pi-H_2)v_2+(h^\Pi-h)\right]\right\}$.
If the investor chooses strategy $\Pi_1$, then $H_2^{\Pi_1}=H_2$ since the strategy does not affect on the uncertainty about the second component of the ambiguity. Assume that $\beta^S_1=p^S_1+p^S_2\widehat{H}_1$ and $\beta^V_1=p^V_1+p^V_2\widehat{H}_1$, where $p^S_j$ and $p^V_j$ $(j=1,2)$ are constants. Let $\widehat{H}_1$ be the function satisfying the equation $(\widehat{H}_1)'=\widehat{a}\widehat{H}_1+\widehat{b}\left(\widehat{H}_1\right)^2+\widehat{c}$, where $\widehat{a},\widehat{b}$ and $\widehat{c}$ are all constants,  we obtain the following system of equations in terms of $H^{\Pi}_1$ and $\widehat{H}_1$
\begin{equation}\label{e38}
\left\{
\begin{aligned}
&(H^{\Pi}_1)'=P_1+P_2\widehat{H}_1+P_3(H_1^{\Pi})^2+P_4H_1^{\Pi}+P_5\widehat{H}_1H_1^{\Pi}+P_6(H_1^{\Pi})^2, \quad &&H_1^{\Pi}(0)=0,\\
&(\widehat{H}_1)'=\widehat{a}\widehat{H}_1+\widehat{b}\widehat{H}_1+\widehat{c}, \quad &&\widehat{H}_1(0)=0,
\end{aligned}\right.
\end{equation}
where
\begin{equation}
\left\{
\begin{aligned}
P_1&=(1-\gamma)(\lambda_1p_1^S+\lambda_2p^V_2-\frac{1}{2}(\gamma+\widehat{\phi}_1^S))(p_1^S)^2-\frac{1}{2}(\gamma+\widehat{\phi}_1^V))(p_1^V)^2,\\
P_2&=(1-\gamma)(\lambda_1p_1^S+\lambda_2p^V_2-p_1^Sp_2^S(\gamma+\widehat{\phi}_1^S)-p_1^Vp_2^V(\gamma+\widehat{\phi}_1^V)),\\
P_3&=-\frac{1}{2}(1-\gamma)((\gamma+\widehat{\phi}_1^S)(p_2^S)^2+(\gamma+\widehat{\phi}_1^V)(p_2^V)^2),\\
P_4&=-\kappa_1+\sigma_1(\rho_1(1-\gamma-\widetilde{\phi}_1^S)p_1^S+\sqrt{1-\rho^2_1}(1-\gamma-\widetilde{\phi}_1^V)p_1^V),\\
P_5&=\sigma_1(\rho_1(1-\gamma-\widetilde{\phi}_1^S)p_2^S+\sqrt{1-\rho^2_1}(1-\gamma-\widetilde{\phi}_1^V)p_2^V),\\
P_6&=\frac{1}{2(1-\gamma)}\sigma_1^2(1-\gamma-\rho_1^2\widetilde{\phi}_1^S-(1-\rho_1^2)\widetilde{\phi}_1^V).
\end{aligned}
\right.
\end{equation}
It follows from \eqref{e24} and \eqref{e38} that
\begin{equation}\label{ss}
(H^{\Pi_1}_1-H_1)'+E(t)(H^{\Pi_1}_1-H_1)'=F(t),
\end{equation}
where
$E=-P_4-P_6(H^{\Pi_1}_1+H_1)-P_5\widehat{H}_1$
and
$F=P_1-c+(P_4-a)H_1+(K_6-b)H_1^2+P_2\widehat{H}_1+P_3\widehat{H}_1^2+P_5\widehat{H}_1H_1.$

{Solving} \eqref{ss}, we have
$$H^{\Pi_1}_1(t)-H_1(t)=e^{-\int_{t}^{T}E(s)ds}\int_{t}^{T}e^{\int_{s}^{T}E(\tau)d\tau}F(s)ds.$$
{ Under} the assumption that the suboptimal strategies are admissible,  the integral $\displaystyle e^{-\int_{t}^{T}E(s)ds}$ is bounded and positive. Furthermore,   the choice of strategy $\Pi_1$ means that  $\widetilde{\phi}_1^S = 0 $ and $\widetilde{\phi}_1^V = 0$. Thus,
\begin{equation*}
\begin{aligned}
F=&\frac{\gamma-1}{2\gamma^2(\phi^S_1+\gamma)}
\left(\lambda_1\phi^S_1+(\gamma+\phi^S_1)\rho_1\sigma_1(\widehat{H}_1-\frac{\gamma(\gamma+\phi_1^S-1)}{(\gamma-1)(\gamma+\phi_1^S)}H_1)\right)^2\\
&+\frac{\gamma-1}{2\gamma^2(\phi^V_1+\gamma)}
\left(\lambda_1\phi^V_1+(\gamma+\phi^V_1)\sqrt{1-\rho_1^2}\sigma_1(\widehat{H}_1-\frac{\gamma(\gamma+\phi_1^V-1)}{(\gamma-1)(\gamma+\phi_1^V)}H_1)\right)^2>0.
\end{aligned}
\end{equation*}
This leads to $H_1^\Pi-H_1>0$.
Following \eqref{e24}, we obtain $\displaystyle h^{\Pi_1}-h= \kappa_1\theta_1\int_0^t \left(H^{\Pi_1}_1(s)-H_1(s)\right)ds>0.$
Therefore, we have 
\begin{equation}
L^{\Pi_1}=1-\exp\left\{\frac{1}{1-\gamma}\left[(H_1^\Pi-H_1)v_1+(h^\Pi-h)\right]\right\}>0.
\end{equation}
Due to symmetry $L^{\Pi_2}>0$ can be proved in a similar way.

If $L^{\Pi_3}$ is chosen, from Propositions \ref{prop1} and \ref{prop2}, we { have}
$$H^{\Pi_3}_j(t)-H_j(t)=e^{-\int_{t}^{T}E_j(s)ds}\int_{t}^{T}e^{\int_{s}^{T}E_j(\tau)d\tau}F_j(s)ds, \ j=1, 2,$$
where
$F_j=\frac{\gamma-1}{2(\gamma+\phi^V_j)}\left(\lambda_j+\frac{\sigma_jH^{\Pi_3}_j\sqrt{1-\rho_j^2}(\gamma+\phi^V_j-1)}{\gamma-1}\right).$
It is straightforward to see that $F_j>0$, thus $L^{\Pi_3}>0$.
\qed
\end{proof}
\section{HJB in Section 4}\label{sec4}
In complete markets, the value function \eqref{max3} satisfies the robust HJB PDE
\begin{equation}\label{HJBC}
\begin{aligned}
&\sup_{\beta^S_i,\beta^V_i}\inf_{e^S_i,e^V_i}
\bigg\{
J_t+x\big(r+\sum_{j=1}^{2}(\beta^{S}_j\lambda_jv_j-\beta^{S}_ju_je^S_j
+\beta^{V}_j\mu_jv_j-\beta^{V}_ju_je^V_j)\big)J_x
+\sum_{j=1}^{2}\big(\kappa_j(\theta_j-v_j)-\rho_j\sigma_ju_je^S_j\\
&-\sqrt{1-\rho_j^2}\sigma_ju_je^V_j\big)J_{v_j}-\sum_{j=1}^{2}\left(\mu_j^U-\rho_j\psi^U_je^S_j-\sqrt{1-\rho_j^2}\psi^U_je^V_j\right)J_{u_j}
+\big(\mu^Y-\rho_2\psi_2^Ue^S_2u_1-\rho_1\psi_1^Ue^S_1u_2\\
&-\sqrt{1-\rho_2^2}\psi_2^Ue^V_2u_1-\sqrt{1-\rho_1^2}\psi_1^Ue^V_1u_2\big)J_y+\frac{1}{2}x^2\sum_{j=1}^{2}[(\beta^{S}_j)^2+(\beta^{V}_j)^2]v_jJ_{xx}
+\rho \rho_1\psi^U_1\beta^S_2u_2xJ_{xu_1}
\\
&+\sum_{j=1}^{2}\sigma_jv_jx\big(\beta^S_j\rho_j+\beta^V_j\sqrt{1-\rho_j^2}\big)J_{x v_j}+\rho \rho_2\psi^U_2\beta^S_1u_1xJ_{xu_2}+x\bigg(
\beta^S_1u_1(\rho\rho_2u_1\psi^U_2+\rho_1u_2\psi^U_1)\\
&+\beta^S_2u_2(\rho_2\psi_2^Uu_1+\rho\rho_1\psi^U_1u_2)+\sqrt{1-\rho_1^2}\beta^V_1\psi^U_1v_1+\sqrt{1-\rho_2^2}\beta^V_2\psi^U_2v_2
\bigg)J_{xy}
+\frac{1}{2}\sum_{j=1}^{2}\mathbf{\Sigma}_{jj}J_{v_jv_j}+ \mathbf{\Sigma}_{12}J_{v_1v_2}\\
&+\sum_{j=1}^{2}\mathbf{\Sigma}_{j,j+2}J_{v_ju_j}+\mathbf{\Sigma_{14}}J_{v_1u_2}+\mathbf{\Sigma}_{23}J_{v_2u_1}+\mathbf{\Sigma}_{15}J_{v_1y}
+\mathbf{\Sigma}_{25}J_{v_2y}+\frac{1}{2}\sum_{j=3}^{4}\mathbf{\Sigma}_{jj}J_{u_ju_j}+\mathbf{\Sigma}_{34}J_{u_1u_2}+\mathbf{\Sigma}_{35}J_{u_1y}\\
&+\mathbf{\Sigma}_{45}J_{u_2y}+\mathbf{\Sigma}_{55}J_{yy}+\sum_{j=1}^{2}\frac{(e^S_j)^2}{2\Psi^S_j}+\frac{(e^V_j)^2}{2\Psi^V_j}\bigg\}=0,
\end{aligned}
\end{equation}
where the symmetrical matrix $\mathbf{\Sigma}(t)$ is given as follows:
\begin{equation*}\label{matrix}
\begin{bmatrix}
\frac{1}{2}\sigma_1^2v_1 & \rho\rho_1\rho_2\sigma_1\sigma_2&
\sigma_1u_1\psi^U_1& \rho\rho_1\sqrt{1-\rho_2^2}\sigma_1\psi^U_2u_1& \sigma_1\rho_1u_1(\rho_1u_2\psi^U_1+\rho\rho_2u_1\psi^U_2)+(1-\rho_1^2)\sigma_1\psi_1^U\\
 & \frac{1}{2}\sigma_2^2v_2 & \rho\rho_1\rho_2\sigma_2\psi^U_1u_2 &
  \sigma_2u_2\psi_2^U &  \sigma_2\rho_2u_2(\rho_2u_1\psi^U_2+\rho\rho_1u_2\psi^U_1)+(1-\rho_2^2)\sigma_2\psi_2^U\\
 &  & \frac{1}{2}(\psi^U_1)^2 & \rho\rho_1\rho_2\psi^U_1\psi^U_2 & (\psi^U_1)^2u_2+\rho\rho_1\rho_2\psi_1^U\psi_2^Uu_1\\
 &  & &\frac{1}{2}(\psi^U_2)^2& (\psi^U_2)^2u_1+\rho\rho_1\rho_2\psi_1^U\psi_2^Uu_2\\
 & & & & \frac{1}{2}\left(u_1^2(\psi^U_2)^2+u_2^2(\psi^U_1)^2+\rho\rho_1\rho_2\psi^U_1\psi^U_2\right)
\end{bmatrix}.
\end{equation*}

\section{Detection-Error Probabilities}\label{A6}
Define the conditional characteristic functions
$$f_1(\omega,t,T)=\mathrm{E}^{\mathbb{P}}[\exp(i\omega \xi_1(T))|\mathcal{F}_t^{S,V_1,V_2}]=\mathrm{E}^{\mathbb{P}}[(\mathcal{Z}_1(T))^{i\omega}|\mathcal{F}_t^{S,V_1,V_2}]$$
and
$$f_2(\omega,t,T)=\mathrm{E}^{\mathbb{P}^e}[\exp(i\omega \xi_1(T))|\mathcal{F}_t^{S,V_1,V_2}]=\mathrm{E}^{\mathbb{P}^e}[(\mathcal{Z}_1(T))^{i\omega}|\mathcal{F}_t^{S,V_1,V_2}]
=\mathrm{E}^{\mathbb{P}}[(\mathcal{Z}_1(T))^{i\omega+1}|\mathcal{F}_t^{S,V_1,V_2}],
$$
where $i^2=-1$ and $\omega$ is the transform variable. Denote $\mathbf{e_t}=[e^S_1,e^S_2,e_1^V,e^V_2]$
and
$$\mathbf{\sigma}=\left[\sigma_1\sqrt{V_1}\rho_1,\sigma_2\sqrt{V_2}\rho_2,\sigma_1\sqrt{V_1}\sqrt{1-\rho_1^2},\sigma_2\sqrt{V_2}\sqrt{1-\rho_2^2}\right].$$
Since the conditional characteristic functions are martingales, the Feynman-Kac theorem implies that $f_1$ and $f_2$
satisfy the same PDE
\begin{equation}\label{PDEF}
\frac{\partial f}{\partial t}+\sum_{j=1}^{2}\kappa_j(\theta_j-V_j)\frac{\partial f}{\partial V_j}+\frac{1}{2}\mathcal{Z}_1(t)^2||\mathbf{e_t}||^2
\frac{\partial^2 f}{\partial \mathcal{Z}_1^2}+\frac{1}{2}\sum_{j=1}^{2}\sigma^2_jV_j \frac{\partial^2 f}{\partial V_j^2}
-\sum_{j=1}^{2}\mathcal{Z}_1(t)\mathbf{\sigma}{\mathbf{e_t}}{\hskip -3pt}^{T}\frac{\partial^2 f}{\partial {\mathcal{Z}_1}\partial V_j}=0
\end{equation}
with different terminal conditions $f_1(\omega,T,T)=\mathcal{Z}_1^{i\omega}(T)$ and $f_2(\omega,T,T)=\mathcal{Z}_1^{i\omega+1}(T)$, respectively. Conjecturing the solution in the form $f=\mathcal{Z}_1^{i\omega}(t)\exp(C_1(t)V_1+C_2(t)V_2+D(t))$ and substituting it into \eqref{PDEF}, we obtain
$$
C^{'}_{j}(t)-\kappa_jC_j(t)+\frac{1}{2}i\omega(i\omega-1)(q^S_j+q^V_j)^2+\frac{1}{2}\sigma_j^2C_j^2(t)-i\omega C_j(t)\sigma_i(q^S_j\rho_j+q^V_j\sqrt{1-\rho_j^2})=0,\  C_j(T)=0,\  j=1, 2,
$$
$$
D^{'}(t)+\kappa_1\theta_1C_1(t)+\kappa_2\theta_2C_2(t)=0,\ D(T)=0.
$$
Similarly, we can obtain the solution for $f_2$ by replicating $i\omega$ with $i\omega+1$ in the above equations. Thus, we have the detection-error probability
$$\varepsilon_T(\phi^S_1,\phi^S_2,\phi^V_1,\phi^V_2)=\frac{1}{2}-\frac{1}{2\pi}\int_{0}^{\infty}\left(\mathrm{Re}\left[\frac{f_1(\omega,0,T)}{i\omega}\right]-\mathrm{Re}\left[\frac{f_2(\omega,0,T)}{i\omega}\right]\right)d\omega.$$

\end{section}
\end{appendices}
}

{\normalsize
\doublespacing
\normalsize

}

\end{document}